\newif\ifblind
\newif\ifdraft
\numberwithin{equation}{section}
\declaretheoremstyle[bodyfont=\it,qed=\qedsymbol]{noproofstyle}
\declaretheorem[name=Observation,numbered=no]{observation*}
\declaretheorem[numberlike=equation]{fact}
\declaretheorem[numberlike=equation]{theorem}
\declaretheorem[name=Theorem,numbered=no]{theorem*}
\declaretheorem[numberlike=equation]{lemma}
\declaretheorem[name=Lemma,numbered=no]{lemma*}
\declaretheorem[numberlike=equation,style=noproofstyle,name=Lemma]{lemmawp}
\declaretheorem[numberlike=equation]{corollary}
\declaretheorem[name=Corollary,numbered=no]{corollary*}
\declaretheorem[numberlike=equation,style=noproofstyle,name=Corollary]{corollarywp}
\declaretheorem[name=Proposition,numbered=no]{proposition*}
\declaretheorem[numberlike=equation]{claim}
\declaretheorem[name=Claim,numbered=no]{claim*}
\declaretheorem[name=Conjecture,numbered=no]{conjecture*}
\declaretheorem[name=Question,numbered=no]{question*}
\declaretheoremstyle[bodyfont=\it,qed=$\lozenge$]{defstyle} 
\declaretheorem[numberlike=equation,style=defstyle]{definition}
\declaretheorem[unnumbered,name=Definition,style=defstyle]{definition*}
\declaretheorem[unnumbered,name=Example,style=defstyle]{example*}
\declaretheorem[unnumbered,name=Notation=defstyle]{notation*}
\declaretheorem[unnumbered,name=Construction,style=defstyle]{construction*}
\declaretheorem[unnumbered,name=Remark,style=defstyle]{remark*}
\renewcommand{\phi}{\varphi}
\renewcommand{\epsilon}{\varepsilon}
\newcommand{\size}{\operatorname{size}}
\newcommand{\depth}{\operatorname{depth}}
\newcommand{\SP}{\Sigma\Pi}
\newcommand{\SPsize}[1]{(\SP)^k\operatorname{-size}}
\newcommand{\hasse}[2]{\operatorname{D}^{(#1)}_{#2}}
\newcommand{\shortECCC}[2]{\texttt{\href{http://eccc.hpi-web.de/report/\ifnumcomp{#1}{>}{93}{19}{20}#1/#2/}{eccc:TR#1-#2}}}
\newcommand{\parseECCC}[1]{
\StrSubstitute{#1}{TR}{}[\tmpstring]%
\IfSubStr{\tmpstring}{/}{ 
\StrBefore{\tmpstring}{/}[\ecccyear]%
\StrBehind{\tmpstring}{/}[\ecccreport]%
}{
\StrBefore{\tmpstring}{-}[\ecccyear]%
\StrBehind{\tmpstring}{-}[\ecccreport]%
}%
\shortECCC{\ecccyear}{\ecccreport}}
\newcommand{\homog}{\operatorname{Hom}}
\newcommand*\samethanks[1][\value{footnote}]{\footnotemark[#1]}
\newcounter{todo}
\newcommand{\RPnote}[1]{\refstepcounter{todo}\textcolor{WildStrawberry}{\guillemotleft RP: #1 \guillemotright\addcontentsline{tod}{subsection}{[RP]~#1}}}
\newcommand{\MKnote}[1]{\refstepcounter{todo}\textcolor{BlueGreen}{\guillemotleft Mrinal: #1 \guillemotright\addcontentsline{tod}{subsection}{[MK]~#1}}}
\newcommand{\VRnote}[1]{\refstepcounter{todo}\textcolor{Blue}{\guillemotleft VR: #1 \guillemotright\addcontentsline{tod}{subsection}{[VR]~#1}}}
\newcommand{\SBnote}[1]{\refstepcounter{todo}\textcolor{OliveGreen}{\guillemotleft SB: #1 \guillemotright\addcontentsline{tod}{subsection}{[SB]~#1}}}
\newcommand{\SSnote}[1]{\refstepcounter{todo}\textcolor{BrickRed}{\guillemotleft SS: #1 \guillemotright\addcontentsline{tod}{subsection}{[SS]~#1}}}
\newcommand{\gitinfonotecolour}{Gray}
\newcommand{\easteregg}{}
\newcommand{\RPnote}[1]{}
\newcommand{\MKnote}[1]{}
\newcommand{\VRnote}[1]{}
\newcommand{\SBnote}[1]{}
\newcommand{\SSnote}[1]{}
\newcommand{\gitinfonotecolour}{white}
\newcommand{\easteregg}{`Yes, but what about characteristic $p$?'}
\newcommand{\ignore}[1]{}
\newcommand{\gitinfonote}{git info:~\gitAbbrevHash\;,\;(\gitAuthorIsoDate)\; \;\gitVtag}
\newcommand{\Res}[3]{\ensuremath{\operatorname{Res}_{#1}(#2,#3)}} 
\newcommand{\K}{\ensuremath{\mathbb{K}}}
\newcommand{\coeff}[2]{[#1]\inbrace{#2}}
\newcommand{\diag}{\mathcal{D}}
\newcommand{\esym}{\mathrm{Esym}}
\newcommand{\Filter}{\operatorname{Filter}}
\newcommand\listtodoname{List of todos}
\newcommand\listoftodos{%
  \section*{\listtodoname}\@starttoc{tod}}
\title{Constant-depth circuits for polynomial GCD\\over any characteristic} 
\author{
     {Somnath Bhattacharjee \thanks{University of Toronto, Canada. Email: \texttt{somnath.bhattacharjee@mail.utoronto.ca}. Research partially supported by an NSERC Discovery Grant.}}
     \and
     {Mrinal Kumar \thanks{Tata Institute of Fundamental Research, Mumbai, India. Email: \texttt{\{mrinal, shanthanu.rai, varun.ramanathan, ramprasad\}@tifr.res.in}.  Research supported by the Department of Atomic Energy, Government of India, under project number RTI400112, and in part by Google and SERB Research Grants. }}
     \and
     {Shanthanu S. Rai{\samethanks[2]}}
     \and
     {Varun Ramanathan{\samethanks[2]}}
     \and
     {Ramprasad Saptharishi{\samethanks[2]}}
     \and
     {Shubhangi Saraf \thanks{University of Toronto, Canada. Email: \texttt{shubhangi.saraf@utoronto.ca}. Research partially supported by the McLean Award and an NSERC Discovery Grant.}}
}
\date{}
\begin{document}

\maketitle

\begin{abstract}
We show that the GCD of two univariate polynomials can be computed by (piece-wise) algebraic circuits of constant depth and polynomial size over any sufficiently large field, regardless of the characteristic. This extends a recent result of Andrews \& Wigderson  who showed such an upper bound over fields of zero or large characteristic. 

Our proofs are based on a recent work of Bhattacharjee, Kumar, Rai, Ramanathan, Saptharishi \& Saraf that shows  closure of constant depth algebraic circuits under factorization. On our way to the proof, we show that any $n$-variate symmetric polynomial $P$ that has a small constant depth algebraic circuit can be written as the composition of a small constant depth algebraic circuit with elementary symmetric polynomials. This statement is a constant depth version of a result of Bl\"{a}ser \& Jindal, who showed this for algebraic circuits of unbounded depth. As an application of our techniques, we also strengthen the closure results for factors of constant-depth circuits in the work of Bhattacharjee et al. over fields for small characteristic. 
\end{abstract}


\newpage 

\tableofcontents

\section{Introduction}
A popular theme in the research on algebraic computation is that of parallelization. A typical result of this flavor shows that a seemingly sequential computational task can also be performed by algebraic circuits of surprisingly small depth. One of the earliest results of this nature is a result of Csanky \cite{C76} which shows that the $n\times n$ symbolic determinant polynomial can be efficiently computed by algebraic circuits of depth $O(\log^2 n)$. Subsequently, a significant generalization of this result by Valiant, Skyum, Berkowitz and Rackoff \cite{VSBR83} showed that any algebraic circuit of size and degree $\poly(n)$ can be converted to a circuit of depth $O(\log^2 n)$ with a $\poly(n)$ blow up in size. The decades of 1980s saw many such surprising parallelizability results for algebraic problems, and we refer an interested reader to the detailed introduction in the work of Andrews \& Wigderson \cite{AW24} for further pointers to this literature. 

Our focus in this work is to study the parallelizability of GCD computation of two univariate polynomials. More formally, we are interested in the following problem -- can the GCD of two polynomials be computed by a polynomial size constant depth algebraic circuit that takes the coefficients of the polynomials as input? A recent work of Andrews \& Wigderson \cite{AW24} showed that over fields of characteristic zero  or sufficiently large characteristic, this is indeed the case, i.e. the GCD of two univariate polynomials can be computed (piece-wise) by constant depth algebraic circuits of polynomial size. Here, the ``piece-wise'' part is to account for the fact that the GCD of two polynomials is not a rational function of their coefficients. We define this model of piece-wise algebraic circuits formally in \cref{defn:piece-wise-rational-fn}, but for this introduction, we encourage the reader to think of this as a standard algebraic circuit. In addition to the GCD, Andrews \& Wigderson showed a similar constant depth upper bound for many closely related algebraic problems like the LCM, the resultant, the discriminant, the inverse of the Sylvester matrix and computing square-free decomposition of a polynomial. 

The key technical fact underlying the results in \cite{AW24} was that many of these algebraic problems have a relatively natural constant depth algebraic circuit if the input polynomials were given via their roots, and not their coefficients. Moreover, these circuits really compute functions that are symmetric in the roots of the inputs polynomials. Thus, if one could somehow compute these symmetric functions of roots via a constant depth algebraic circuit that takes as input the coefficients of the polynomials (and not their roots), we would be done. Over fields of zero or sufficiently large characteristic, Andrews \& Wigderson showed that this is indeed the case. Their primary technical ingredient for this is the fact that over such fields, there are constant depth algebraic circuits of polynomial size that can transform the first $n$ power-sum symmetric polynomials to the $n$ elementary symmetric polynomials and vice-versa. Unfortunately, this last fact is not true when the underlying field has small (non-zero) characteristic. In fact, over such fields, power-sum symmetric polynomials of degree $1, 2, \ldots, n$ are not algebraically independent, while the elementary symmetric polynomials are, and hence we cannot expect to compute the $n$ elementary symmetric polynomials from the first $n$ power-sum symmetric polynomials. While this specific issue is fixable to an extent by considering power-sum symmetric polynomials of carefully chosen degrees (e.g. over fields of characteristic $2$, we take the first $n$ power-sums of odd degree), it is unclear if the transformation from this set of power-sums to elementary symmetric polynomials can be done efficiently in constant depth. This technical issue obstructs the natural attempt at extending the results in \cite{AW24} to finite fields of small characteristic and necessitates the need for some new technical ideas.

In this work, we show that the GCD of two univariates can indeed be computed by polynomial size constant depth algebraic circuits (again, piece-wise) over all fields of polynomially large size, but arbitrary characteristic. We now discuss our results in more detail, before moving on to an overview of the proofs. 

\subsection{Our results}

\begin{theorem}\label{thm:intro-gcd}
There is an absolute constant $c \in \N$ such that the following is true. 

For all $n \in \N$ and any field $\F$ with at least $n^c$ elements, the $\gcd$ of two univariates over $\F$ of degree at most $n$ can be computed piece-wise as a ratio of algebraic circuits of size $\poly(n)$ and depth $O(1)$. 
\end{theorem}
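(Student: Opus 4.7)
\emph{Proposal.} The plan is to follow the pipeline of Andrews and Wigderson, but to replace their characteristic-sensitive translation step by the new constant-depth refinement of Bl\"aser--Jindal that the authors announce in the abstract. That refinement says: any symmetric polynomial $P(\vec x)$ computed by a polynomial-size constant-depth circuit admits an expression $P(\vec x) = Q(\esym_1(\vec x), \ldots, \esym_n(\vec x))$, where $Q$ is itself a polynomial-size constant-depth circuit. Its significance for the GCD problem is that the elementary symmetric polynomials in the roots of a monic polynomial are, up to signs, exactly its coefficients, so $Q$ becomes a circuit in the input coefficients.

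First, I would recover, over a splitting field, a root-variable formula for the GCD. Introduce formal variables $\vec\alpha = (\alpha_1, \ldots, \alpha_n)$ and $\vec\beta = (\beta_1, \ldots, \beta_n)$ for the roots of $f$ and $g$, padded as needed, with the piece-wise model branching on $\deg f$, $\deg g$, and $k := \deg\gcd(f,g)$. Following Andrews and Wigderson, for each fixed $k$, each coefficient of $\gcd(f,g)$ can be written as a rational expression in $\vec\alpha, \vec\beta$ that is symmetric in $\vec\alpha$ and (separately) symmetric in $\vec\beta$, and that admits a small constant-depth circuit in these variables. The numerator--denominator structure accounts for the ``ratio of algebraic circuits'' aspect of \cref{thm:intro-gcd}. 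This step uses only products, sums indexed by subsets, resultant-style formulas, and interpolation, none of which interact with the characteristic.

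Next comes the main new step. For each such symmetric expression $S(\vec\alpha, \vec\beta)$, I apply the constant-depth Bl\"aser--Jindal theorem twice: first in the $\vec\alpha$ block, treating the $\vec\beta$ as scalars, to rewrite $S$ as a constant-depth circuit in $\esym_j(\vec\alpha)$ and $\vec\beta$; then in the $\vec\beta$ block, to further rewrite it as a constant-depth circuit in $\esym_j(\vec\alpha)$ and $\esym_j(\vec\beta)$. Since $\esym_j(\vec\alpha)$ and $\esym_j(\vec\beta)$ are (up to sign) the coefficients of $f$ and $g$, this yields a polynomial-size constant-depth circuit in the input coefficients. Assembling the $O(n^2)$ pieces indexed by the possible values of $(\deg f, \deg g, k)$ via the piece-wise formalism gives \cref{thm:intro-gcd}.

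The main obstacle, and the only step that genuinely requires new ideas beyond \cite{AW24}, is the constant-depth Bl\"aser--Jindal theorem itself. The unbounded-depth version relies on Newton's identities, and hence on invertibility of the map from the first $n$ power-sum symmetric polynomials to the elementary ones, which fails precisely in the small-characteristic regime highlighted in the introduction. My expectation is that the new proof must exploit the constant-depth closure of algebraic circuits under factorization due to Bhattacharjee et al.\ cited in the abstract: any symmetric $P(\vec x)$ arises as a coefficient of the universal monic $\prod_i (T - x_i)$ against some companion symmetric construction, and division by known factors can be carried out in constant depth once factor-closure is available, all while avoiding the power-sum basis. Everything downstream of this theorem --- the Andrews--Wigderson root-variable identities and the two-sided substitution of elementary symmetric variables --- is characteristic-free bookkeeping.
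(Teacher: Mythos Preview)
Your high-level plan --- bi-symmetric root-variable formulas for the GCD, then the constant-depth Bl\"aser--Jindal theorem to pass to coefficient circuits --- matches the paper's, and you correctly identify the constant-depth refinement of Bl\"aser--Jindal (\cref{thm:intro-complexity-of-symmetric-polys}, which the paper extends to a multi-symmetric version rather than iterating the single-block version as you suggest) as the main new tool. But there is a second obstacle you are missing, hidden in the step you dismiss as ``characteristic-free bookkeeping.'' Andrews and Wigderson handle the general (non-square-free) case by first computing the \emph{square-free decomposition} $f = \prod_i f_i^i$, $g = \prod_j g_j^j$ and only then applying the filter operation to the square-free pieces. Over fields of small positive characteristic, square-free decomposition is at least as hard as taking $p$-th roots of polynomials --- an operation for which no efficient constant-depth circuit is known. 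The paper flags this explicitly, alongside the power-sum issue, as one of the two obstructions in characteristic $p$; so the \cite{AW24} pipeline does \emph{not} reduce to a bi-symmetric root-variable identity plus Bl\"aser--Jindal, and your claim that ``this step uses only products, sums indexed by subsets, resultant-style formulas, and interpolation'' is not justified for non-square-free inputs.

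The paper's fix is a genuinely new (if simple) idea: form the auxiliary polynomial $F(y,z) = f(y) + z\cdot g(y)$ and observe that every root of $\gcd(f,g)$ divides $F$ with exactly the correct multiplicity, while no other $\alpha \in \overline{\F}$ satisfies $F(\alpha,z) = 0$. Hence $\gcd(f,g) = \Filter(F(y,z) \mid g = 0) = F(y,z)/\Filter(F(y,z) \mid g \neq 0)$, with no square-free preprocessing needed. The filter is then expressed, as in \cite{AW24}, via $\esym_r(\{g(\sigma_i)\}_i)$ where the $\sigma_i$ are the $y$-roots of $F$, and the multi-symmetric constant-depth Bl\"aser--Jindal theorem converts this to a circuit over the coefficients of $f$ and $g$.
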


A more detailed version of the above theorem statement is provided in \cref{thm:gcd-computation}. The above (piece-wise) circuit yields an efficient parallel algorithm for computing the GCD of two polynomials. More precisely, there is an algebraic PRAM algorithm with polynomially many processors and constant number of rounds that takes the coefficient vectors of two univariate polynomials as input and computes their GCD. Each processor is allowed to perform basic field operations and branching instructions (based on if a variable is 0 or not).

Our proof techniques also extend to some of the other results  in \cite{AW24} and show that these continue to be true over all large enough fields. Perhaps the most interesting among these is that there is a constant depth algebraic circuit of $\poly(n)$ size that takes as input the coefficients of two univariates of degree $n$ and computes their resultant. From \autoref{thm:intro-gcd}, we also get a piece-wise small constant-depth circuit for the LCM of two polynomials using the fact that $\mathsf{LCM}(f,g) = \frac{f\cdot g}{\gcd(f,g)}$. However, our techniques do not extend to many of the other results in \cite{AW24}, most notably that of computing square-free decomposition and results on arbitrary manipulation of the multiplicities of roots (Section 9 in the full version of \cite{AW24}). 

On our way to the proof of \autoref{thm:intro-gcd}, we prove the following theorem that might be independently interesting. 
\begin{theorem}[Complexity of symmetric polynomials]\label{thm:intro-complexity-of-symmetric-polys}
    Let $n, d, s \in \N$ be parameters, and let $\F$ be a sufficiently large field, i.e. $|\F| \geq (nds)^c$ for an absolute constant $c > 0$. Let $P(\vecx) \in \F[\vecx]$ be a symmetric polynomial on $n$ variables of degree $d$ and let $Q(\vecz) \in \F[\vecz]$ be the unique degree $d$ polynomial such that $P(\vecx) = Q(\esym_1(\vecx), \dots, \esym_n(\vecx))$. Then, the following are true: 
    \begin{itemize}\itemsep 0pt
        \item If $P$ is computable by a circuit size $s$ and depth $\Delta$, then $Q$ can be computed by a circuit of size $\poly(s,d,n)$ and depth $\Delta + O(1)$. 

        \item If $P$ is computable by an algebraic formula of size $s$, then $Q$ can be computed by a formula of size $\poly(s,d,n)$. 
    \end{itemize}
\end{theorem}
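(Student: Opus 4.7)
My plan is to produce a polynomial-size, constant-depth circuit for $Q$ from the given circuit for $P$, relying on two classical ingredients and (potentially) the constant-depth factorization closure cited in the introduction. The starting point is the identity
\[
P(\alpha_1, \dots, \alpha_n) \;=\; Q(\esym_1(\vec\alpha), \dots, \esym_n(\vec\alpha)) \qquad \text{for every } \vec\alpha \in \F^n,
\]
combined with Ben-Or's observation that each $\esym_i(\vec\alpha)$ is computable in depth $O(1)$ and size $\poly(n)$ (by evaluating $\prod_j(T-\alpha_j)$ at $n+1$ distinct field points and reading off coefficients via univariate interpolation). Thus, one invocation of the circuit for $P$ on a tuple $\vec\alpha$ directly yields a value of $Q$ at the prescribed point $(\esym_1(\vec\alpha),\dots,\esym_n(\vec\alpha))$, and the pre-processing that produces $\vec\alpha$ from auxiliary parameters can be made constant-depth.

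The plan for the main construction is to pick a structured family of tuples $\{\vec\alpha^{(\vec t)}\}$ parametrized by a small vector of auxiliary parameters $\vec t$ such that (a) as $\vec t$ varies, the induced points $\vec\beta(\vec t) = (\esym_1(\vec\alpha^{(\vec t)}), \dots, \esym_n(\vec\alpha^{(\vec t)}))$ cover a sufficiently generic subset of $\F^n$ and (b) the reconstruction of $Q$ from $\{P(\vec\alpha^{(\vec t)})\}_{\vec t}$ can be implemented as a constant-depth combination. A natural choice is a Vandermonde-type family $\vec\alpha^{(\vec t)} = \vec\alpha_0 + \sum_i t_i \vec\beta_i$, for which the $\esym_i(\vec\alpha^{(\vec t)})$ are explicit polynomials in $\vec t$; this reduces the recovery of $Q$ to an inversion of a system whose Jacobian is essentially Vandermonde and hence invertible when the $\alpha$-entries are distinct. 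After this inversion, one gets $Q(\vec y)$ as an explicit polynomial combination of $\poly(n,d)$ values $P(\vec\alpha^{(\vec t)})$, with the combining coefficients given by Lagrange-type rational factors that depend on $\vec y$ via $\poly(n,d)$-size, $O(1)$-depth circuits.

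The main obstacle, and where I expect most of the technical work, is keeping the reconstruction step at $\poly(s,n,d)$ size rather than the $\binom{n+d}{d}$ one would incur from naive multivariate interpolation. Two routes seem promising: either exploit the weighted-degree structure of $Q$ (where $y_i$ has weight $i$) to collapse the interpolation into a composition of constantly-many univariate Lagrange interpolations, each of depth $O(1)$ and size $\poly(n,d)$; or construct a single auxiliary polynomial $F(\vec y, T)$ of size $\poly(s,n,d)$ and depth $\Delta + O(1)$ such that $T - Q(\vec y)$ is a factor of $F$ (for instance, via a resultant that eliminates $\vec x$ from the system $\{T - P(\vec x),\ \esym_i(\vec x) - y_i\}$), and then apply the constant-depth closure under factorization from the BKRRSS paper to extract $Q$ directly. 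The formula version of the theorem should follow from the same outline, since Ben-Or's trick, univariate Lagrange interpolation, and the referenced factorization closure all admit formula analogues with only polynomial blow-up in size.
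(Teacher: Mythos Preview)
Your proposal identifies the right starting observation---that $P$ evaluated on a tuple $\vec\alpha$ gives $Q$ at the point $(\esym_1(\vec\alpha),\dots,\esym_n(\vec\alpha))$---but both of your concrete routes for recovering $Q$ have real gaps, and you are missing the key idea the paper uses.

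Route (a), interpolation from values of $Q$ at structured points, runs directly into the obstacle you yourself flag: a degree-$d$ polynomial in $n$ variables generically has $\binom{n+d}{d}$ monomials, and neither the weighted-degree structure of $Q$ nor a Vandermonde-type parametrization of the inputs gives a way to collapse this to $\poly(n,d)$ many evaluations. The claim that one can ``collapse the interpolation into constantly-many univariate Lagrange interpolations'' is not substantiated and I do not see how to make it work. Route (b), constructing $F(\vec y,T)$ via a resultant eliminating the $n$ variables $\vecx$ from the system $\{T-P(\vecx),\; \esym_i(\vecx)-y_i\}$, is not a constant-depth operation: iterated resultants over $n$ variables blow up the degree (and hence size) multiplicatively at each step, and there is no known way to carry out such an $n$-variable elimination in depth $O(1)$ and polynomial size.

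The paper's approach avoids interpolation and elimination entirely. Rather than sampling $Q$ at many points, it \emph{symbolically inverts} the map $\vec\alpha \mapsto (\esym_1(\vec\alpha),\dots,\esym_n(\vec\alpha))$ once: it writes down the ``generic'' polynomial
\[
F(\vecz,y) \;=\; y^n - (z_1+\beta_1)y^{n-1} + \cdots + (-1)^n(z_n+\beta_n)
\]
(with $\beta_i = \esym_i(\alpha_1,\dots,\alpha_n)$ for fixed distinct $\alpha_j$), and observes that its $n$ power-series $y$-roots $A_1(\vecz),\dots,A_n(\vecz)$ satisfy $\esym_k(A_1,\dots,A_n) = z_k + \beta_k$ exactly. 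Hence $P(A_1,\dots,A_n) = Q(z_1+\beta_1,\dots,z_n+\beta_n)$ as a formal identity---no interpolation required. The only remaining issue is to compute the truncations of the $A_i$ to degree $d$ in constant depth, and this is precisely what the BKRRSS power-series-root result (via Furstenberg's diagonal formula) provides; since $F(\veczero,y)$ is square-free this works regardless of characteristic. You brushed up against this when you mentioned ``inversion of a system whose Jacobian is essentially Vandermonde,'' but you treated that inversion as a step toward interpolation coefficients rather than as the computation of the $A_i$'s themselves---and that is the idea that makes the argument go through.
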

In \cite{BJ19}, Bl\"{a}ser \& Jindal showed that if an $n$-variate symmetric polynomial $P$ can be computed by an algebraic circuit of size $s$ and degree $d$, then it can be expressed as the composition of a circuit $Q$ of size and degree $\poly(s,d, n)$ with elementary symmetric polynomials. However, if $P$ is assumed to be computable by a constant-depth circuit or a formula, it is unclear from the proof in \cite{BJ19} whether $Q$ also has a small constant-depth circuit or a formula respectively. Thus, \autoref{thm:intro-complexity-of-symmetric-polys} is a version of the result of \cite{BJ19} for constant depth algebraic circuits and formulas. Showing such a statement for constant-depth circuits was mentioned as an open problem in \cite{AW24} and \autoref{thm:intro-complexity-of-symmetric-polys} answers this question affirmatively. 

Our proofs are based on techniques that were recently used to show the closure of constant-depth circuits and formulas under factorization in \cite{BKRRSS25} over fields of zero or large characteristic. These closure results, in turn, are a consequence of a classical theorem of Furstenberg \cite{Furstenberg67} that gives a computationally simple and explicit expression for the power series roots of a bivariate polynomial (see \autoref{thm:furstenberg} and \autoref{thm:furstenberg-small-characteristic}). 

Our proof techniques have another consequence of closure under factorization over fields of small characteristic. In \cite{BKRRSS25}, while the complete closure result was over fields of zero or large characteristic, a weaker statement was shown to be over fields of small characteristic. More specifically, it was shown (Theorem 1.3 in \cite{BKRRSS25}) that if an $n$ variate degree $d$ polynomial $f$ has a constant-depth circuit of size $s$ over the field $\F_{p^k}$ (for a prime $p$) and $g$ is a factor of $f$, then for some $i \in \N$, $g^{p^i}$ has a constant-depth circuit of size $\poly(s, n, d)$ over the algebraic closure (or a very high degree extension) of $\F_{p^k}$. Using the techniques here, we show that the circuit for $g^{p^i}$ is in fact over the field $\F_{p^k}$ itself, provided that this field is polynomially large. We refer to \autoref{thm:closure-factors-small-char} for a formal statement of this result.

\subsection{Proof overview}
\paragraph*{Proof of \autoref{thm:intro-complexity-of-symmetric-polys}:}

We start with a discussion of the proof of \autoref{thm:intro-complexity-of-symmetric-polys}. To this end, we begin with an outline of the proof in \cite{BJ19}, who originally showed the version of \autoref{thm:intro-complexity-of-symmetric-polys} for general algebraic circuits.  Here the authors consider the polynomial \[
F(\vecz, y) = y^n - z_1 y^{n-1} + \dots + (-1)^nz_n  - 1  ,
\]
and think of the polynomial $F$ as a univariate in $y$ with coefficients from the ring $\F[\vecz]$. Since $F(\mathbf{0}, y) = y^n -1 $ has $n$ distinct roots, namely the $n^\text{th}$ roots of unity, we get by an application of Newton Iteration that each of these roots of $y^n - 1$ can be \emph{lifted} to a unique power series (in $\vecz$) root of $F(\vecz, y)$. Let these power series roots be $A_0(\vecz), A_1(\vecz), \ldots, A_{n-1}(\vecz)$. If $P(\vecx)$ is a symmetric polynomial, then there is a unique $n$ variate polynomial $Q$ of degree at most $\deg(P)$ such that $P(\vecx) = Q(\esym_1(\vecx), \ldots, \esym_n(\vecx))$, where $\esym_i(\vecx)$ is the elementary symmetric polynomial of degree $i$ in $\vecx$. Let us consider the substitution where the  variable $x_i$ is replaced by the power series $A_i(\vecz)$. Note that $\esym_j(A_0, \ldots, A_{n-1})$ must equal $z_j$ for every $j \in \{1, 2, \ldots, n-1\}$ and equals $z_n - 1$ for $j = n$.  
So, we get 
\[
P(A_0(\vecz), \dots, A_{n-1}(\vecz)) = Q(z_1, \ldots, z_n - 1) \, .
\]
We now note that since $Q$ is a polynomial, it suffices to compute the power series $A_i(\vecz)$ modulo $\langle \vecz \rangle^{\deg(Q) + 1}$. Using the fact that if $F$ has a small algebraic circuit, all its truncated power series roots have small circuits, and with a few small modifications, we get a small unbounded depth circuit for $Q$. 

Given this outline, to prove \autoref{thm:intro-complexity-of-symmetric-polys}, it would suffice to show that the truncated power series roots of $F(\vecz, y)$ have small constant-depth circuits, and we do exactly this! Note that $F$ itself has a small constant-depth circuit.  We now invoke the closure results for factors of constant-depth circuits shown in \cite{BKRRSS25} (Theorem 4.1) to show that the truncations of the power series $A_i$ can be computed by small constant-depth circuits. This fact together with the above outline gives us \autoref{thm:intro-complexity-of-symmetric-polys}. 

We note that even though the main closure results in \cite{BKRRSS25} are for fields of zero or large characteristic, the instance we have at hand here is very special --- we are trying to compute the power series roots of a polynomial that has no repeated roots. For such instances, essentially the arguments from fields of characteristic zero carry over as is to fields of small positive characteristic. This completes an outline of the proof of \autoref{thm:intro-complexity-of-symmetric-polys}. 

\paragraph*{Computing the resultant in constant depth:}
To prove \autoref{thm:intro-gcd}, we first observe that \autoref{thm:intro-complexity-of-symmetric-polys} can be generalized for polynomials that are bi-symmetric. More precisely, we observe that if  a $2n$ variate polynomial $P(\vecx, \vecy)$  is symmetric in $\vecx$ and in $\vecy$ (although it need not be symmetric in $\vecx \union \vecy$ together), then $P$ equals $Q(\esym_1(\vecx), \dots, \esym_n(\vecx), \esym_1(\vecy), \ldots, \esym_n(\vecy))$ for a unique polynomial $Q$. We then show that if $P$ has a small constant-depth circuit, then so does $Q$. 

An almost immediate (and independently) interesting consequence of the above discussion is that the resultant of two monic univariates can be computed by a constant-depth circuit that takes the coefficients of the polynomials as inputs. If $f(x) = f_0 + f_1x + \dots + x^n$ and $g(x) = g_0 + g_1x + \dots +  x^n$ are two univariates with roots denoted by $\{\alpha_1, \ldots, \alpha_n\}$ and $\{\beta_1, \ldots, \beta_n \}$ respectively, then their resultant $R$ equals $\prod_{i, j \in [n]} (\alpha_i - \beta_j)$. We note that $R$ is a polynomial that is symmetric in the roots of $f$ and the roots of $g$, and has a small circuit of constant depth that takes these roots as inputs. Thus, from the generalization of \autoref{thm:intro-complexity-of-symmetric-polys} to bisymmetric polynomials, we get that there is a polynomial $Q$ that can be computed by a small constant-depth circuit, such that $Q$ composed with the elementary symmetric polynomials of $\{\alpha_1, \ldots, \alpha_n\}$ -- the coefficients of $f$ -- and the elementary symmetric polynomials of $\{\beta_1, \ldots, \beta_n\}$ -- the coefficients of $g$ -- equals $R$. Thus, we have a small constant-depth circuit for the resultant of two univariates that takes the coefficients as input. 

\paragraph{Computing the GCD:} It would be helpful to revisit the proof strategy of Andrews and Wigderson~\cite{AW24}. If the polynomials $f$ and $g$ are square-free, then the $\gcd(f,g)$ can be obtained by `filtering' from $f$ only those roots of $f$ that are also roots of $g$. One of the key aspects of \cite{AW24} was to execute this `filtering' operation using the efficient transformation (via constant-depth circuits) between elementary and power-sum symmetric polynomials of the roots of $f$ and $g$. In the more general setting when $f$ and $g$ are not square-free, using some additional ideas, \cite{AW24} reduce to the square-free setting by first computing the \emph{square-free decomposition}\footnote{i.e., computing $f_1,\ldots, f_d$ that are square-free and co-prime such that $f = \prod f_i^i$} and working with each part in the decomposition. 

In the setting of characteristic $p > 0$, we have to address two issues --- (a) we do not have efficient transformations between power-sum and elementary symmetric polynomials, and (b) computing the square-free part of $f$ is a more general operation than computing the $p$-th root of polynomials (something that we still do not know how to do efficiently and this would be very interesting). We give an alternate implementation of the filtering operation using \cref{thm:intro-complexity-of-symmetric-polys} and also have a simple observation that allows us to completely bypass the calculation of the square-free part. We briefly elaborate on the key steps below. 

\begin{itemize}
    \item \textbf{An auxiliary polynomial containing $\gcd(f,g)$:} Our first observation is that, if we consider a \emph{generic} linear combination of $f$ and $g$ given by $F(y,z) := f(y)+z \cdot g(y)$, then each root of $\gcd(f,g)$ (from $\overline{\F}$) divides $F$ with the \emph{right} multiplicity. Furthermore, there is no other $\alpha \in \overline{\F}$ satisfying $F(\alpha,z) = 0$. Thus, if we could somehow \emph{filter} the $y$-roots of $F(y,z)$ with their multiplicity, we will obtain $\gcd(f,g)$. This thereby avoids computing square-free components of $f$ and $g$ entirely. 
    
    \item \textbf{Filtering roots of one polynomial using another, à la \cite{AW24}:} Suppose $\set{\sigma_1,\ldots, \sigma_m}$ is the multi-set of roots of $f$ over $\overline{\F}$ and $g(y) \in \F[y]$, \cite{AW24} define a `filter' operation to compute 
    \[
    \Filter(f \mid g \neq 0) = \prod_{i\in [m] \;:\; g(\sigma_i) \neq 0} (y - \sigma_i).
    \]
    \cite{AW24} show that the above can essentially be computed piece-wise via elementary symmetric polynomials of $\set{g(\sigma_1), \ldots, g(\sigma_m)}$ with the advice parameter being the degree of the filter. Rather than proceeding via power-sum symmetric polynomials (which requires large characteristic fields), we once again use \cref{thm:intro-complexity-of-symmetric-polys} to express this as a constant-depth circuit over the coefficients of $f$ and $g$. \cref{thm:intro-gcd} now follows by noticing that 
    \[
    \gcd(f(y),g(y)) = \Filter(F(y,z) \mid g = 0) = \frac{F(y,z)}{\Filter(F(y,z) \mid g \neq 0)}.
    \]
\end{itemize}

\subsection*{Organization of the paper}
The rest of the paper is organized as follows. We start with some general notations and preliminaries in \autoref{sec:prelims}, followed by the proof of \autoref{thm:intro-complexity-of-symmetric-polys} in \autoref{sec:symmetric}. In \autoref{sec:GCD}, we show that the resultant and GCD can be computed by constant-depth circuits  and in \autoref{sec:closure-small-char}, we use these techniques to strengthen the closure results for constant-depth circuits under factorization. 

\section{Preliminaries}
\label{sec:prelims}

\paragraph{Notation}
\begin{itemize}\itemsep 0pt
    \item We use $\F, \K$ etc. to refer to fields and $\overline{\F}$ to refer to the algebraic closure of $\F$. Also, $\F[x]$ refers to the polynomial ring, $\F\indsquare{x}$ to the ring of formal power series, and $\F\indparen{x}$ refer to ring of Laurent series with respect to the variable $x$ with coefficients from the field $\F$. 
    \item We use boldface letters such as $\vecx$ to refer to an ordered tuple of variables
    \item Given a univariate polynomial $f(x) \in \F[x]$ (or more generally in the ring of formal Laurent series $\F\indparen{x}$), we denote by $\coeff{x^n}{f}$ the coefficient of the monomial $x^n$ in $f$. In the case of a multivariate polynomial, such as $f(x, y)$, we interpret $f$ as an element of $\F[y][x]$—that is, as a univariate polynomial in $x$ with coefficients in $\F[y]$—and $\coeff{x^n}{f}$ then refers to the $x^n$ coefficient viewed as a polynomial in $y$.
    \item The notation $\homog_{d}(F)$ refers to the degree $d$ homogeneous part of $F$, and $\homog_{\leq d}(F)$ refers to the sum of all homogeneous parts of $F$ up to degree $d$. For any subset $\vecx_S \subseteq \vecx$ and any $i \in [d]$, the degree $i$ homogeneous part of $F$ with respect to $\vecx_S$, denoted by $\homog_{\vecx_S, i}(F)$. 
    \item The $i^{th}$ elementary symmetric polynomial is sum of all multilinear monomials in $\vecx$ of degree $i$. Formally, the $i^{th}$ elementary polynomial is $\esym_i(\vecx) \coloneq \sum_{S \subseteq [n] :\abs{S}=i} \prod_{i \in S} x_i$.
\end{itemize}

\subsection{Interpolation}

The following uses of polynomial interpolation in the context of algebraic circuits are due to Michael Ben-Or.

\begin{lemmawp}[Interpolation]
    \label{lem:interpolation}
    Let $\F$ be a subfield of a commutative ring $R$, and suppose that $\F$ contains at least $d+1$ distinct elements. Choose distinct points $\alpha_0, \ldots, \alpha_d \in \F$. Then, for any polynomial $f(t) = f_0 + f_1 t + \cdots + f_d t^d \in R[t]$ of degree at most $d$, and for each index $i \in \{0, \ldots, d\}$, there exist elements $\beta_{i0}, \ldots, \beta_{id} \in \F$ (depending only on $i$ and the $\alpha_j$'s) such that the coefficient $f_i$ can be recovered from the values of $f$ at the interpolation points:
    \[
        f_i = \beta_{i0} f(\alpha_0) + \beta_{i1} f(\alpha_1) + \cdots + \beta_{id} f(\alpha_d).\qedhere
    \]
\end{lemmawp}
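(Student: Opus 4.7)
The plan is to reduce the claim to the invertibility of the Vandermonde matrix over $\F$. Expanding $f(\alpha_j) = \sum_{k=0}^{d} f_k \alpha_j^k$ gives the linear system
\[
\begin{pmatrix} f(\alpha_0) \\ \vdots \\ f(\alpha_d) \end{pmatrix} \;=\; V \begin{pmatrix} f_0 \\ \vdots \\ f_d \end{pmatrix}, \qquad V_{jk} := \alpha_j^k,
\]
in which every entry of $V$ lies in $\F$, even though the $f_k$'s and the $f(\alpha_j)$'s live in the larger ring $R$.

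Next, since the $\alpha_j$'s are distinct elements of the field $\F$, the Vandermonde determinant $\det V = \prod_{j < k}(\alpha_k - \alpha_j)$ is a nonzero element of $\F$. Hence $V$ is invertible over $\F$, and $V^{-1}$ has entries in $\F$. Setting $\beta_{ij} := (V^{-1})_{ij} \in \F$ --- which depend only on $i$ and on the $\alpha_j$'s --- and left-multiplying the displayed system by $V^{-1}$ yields
\[
f_i \;=\; \sum_{j=0}^{d} \beta_{ij}\, f(\alpha_j),
\]
which is exactly the claimed identity.

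The only potential subtlety, and the single step that merits care, is that $R$ is merely a commutative ring rather than a field, so in general one cannot do arbitrary Gaussian elimination on an $R$-valued linear system. This is not an obstacle here, however: the inversion of $V$ takes place entirely inside $\F$ and never requires us to invert any element of $R$, and the product of the $\F$-matrix $V^{-1}$ with the $R$-valued column vector $(f(\alpha_0), \ldots, f(\alpha_d))^\top$ is well defined because $\F \subseteq R$ acts on $R$ by multiplication. Beyond this bookkeeping remark, I do not expect any further technical difficulty.
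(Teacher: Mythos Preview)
Your argument is correct: the Vandermonde matrix $V$ with entries $V_{jk}=\alpha_j^k$ lies in $\F^{(d+1)\times(d+1)}$, has nonzero determinant $\prod_{j<k}(\alpha_k-\alpha_j)$ since the $\alpha_j$ are distinct in the field $\F$, and hence is invertible over $\F$; taking $\beta_{ij}=(V^{-1})_{ij}$ gives the claim, with your remark about $R$ being only a ring handled exactly as you say.

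As for comparison with the paper: there is nothing to compare, because the paper states this lemma without proof (it is declared in a ``no-proof'' theorem style and simply attributed to Ben-Or's use of interpolation in algebraic circuits). Your Vandermonde-inversion argument is the standard proof one would supply.
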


\begin{corollarywp}[Applications of interpolation]
    \label{cor:interpolation-consequences}
    Let $\alpha_0, \ldots, \alpha_d$ be distinct elements of the field $\F$. The following statements are immediate consequences of interpolation:

    \begin{enumerate}\itemsep0pt
        \item \textbf{[Extracting homogeneous parts]}  
        Let $C(\vecx)$ be a polynomial of total degree $d$, and let $\vecx_S \subseteq \vecx$ be a subset of the variables. For any $i \in \{0, \ldots, d\}$, the degree $i$ homogeneous component of $C$ with respect to the variables in $\vecx_S$, denoted $\homog_{\vecx_S, i}(C)$, can be written as
        \[
            \homog_{\vecx_S, i}(C) = \sum_{j=0}^d \beta_{i,j} \cdot C(\alpha_j \cdot \vecx_S, \vecx_{\overline{S}})
        \]
        for some scalars $\beta_{i,j} \in \F$ that are independent of the polynomial $C$.

        \item \textbf{[Derivatives via evaluation]}  
        Suppose $C(\vecx, y)$ is a polynomial with degree $d$ in the variable $y$. Then the $i^{th}$ partial derivative of $C$ with respect to $y$ can be expressed as an $\F[y]$-linear combination of the evaluations $C(\vecx, \alpha_j)$ for $j = 0, \ldots, d$. Specifically, there exist polynomials $\mu_0(y), \ldots, \mu_d(y) \in \F[y]$, each of degree at most $d$ and independent of $C$, such that
        \[
            \partial_{y^i} C(\vecx, y) = \sum_{j=0}^d \mu_j(y) \cdot C(\vecx, \alpha_j).
        \]
    \end{enumerate}

    Moreover, if $C$ is computed by a circuit of size $s$ and depth $\Delta$, then both operations—extracting homogeneous components and computing partial derivatives—can be implemented by circuits of size $\poly(s, d)$ and depth at most $\Delta + O(1)$.
\end{corollarywp}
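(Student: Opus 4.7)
The plan is to derive both parts as direct applications of the preceding interpolation lemma to carefully chosen univariates over $R = \F[\vecx]$, and then read off the circuit consequences essentially for free.

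For part (1), I would consider
$$f(t) := C(t \cdot \vecx_S, \vecx_{\overline{S}}) \;\in\; R[t].$$
Grouping monomials of $C$ by their $\vecx_S$-degree yields $f(t) = \sum_{i=0}^{d} \homog_{\vecx_S, i}(C) \cdot t^i$, so $\homog_{\vecx_S, i}(C)$ is exactly the coefficient of $t^i$ in $f$. Applying \cref{lem:interpolation} to $f$ with the chosen points $\alpha_0, \ldots, \alpha_d$ produces scalars $\beta_{i,j} \in \F$, depending only on $i$ and the $\alpha_j$'s and not on $C$, such that this coefficient equals $\sum_{j=0}^d \beta_{i,j}\, f(\alpha_j) = \sum_j \beta_{i,j} \cdot C(\alpha_j \cdot \vecx_S, \vecx_{\overline{S}})$, as desired.

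For part (2), I would take $f(t) := C(\vecx, t) \in R[t]$, which has degree at most $d$. Writing $f(t) = \sum_{k=0}^d c_k(\vecx)\, t^k$, the interpolation lemma recovers each coefficient polynomial $c_k(\vecx)$ as an explicit $\F$-linear combination of the evaluations $C(\vecx, \alpha_j)$. Since $\partial_{y^i} C(\vecx, y) = \sum_{k \ge i} \binom{k}{i} i!\, c_k(\vecx)\, y^{k-i}$, substituting the interpolation formulas for each $c_k$ and regrouping in $y$ yields polynomials $\mu_0(y), \ldots, \mu_d(y) \in \F[y]$ of degree at most $d$, depending only on $i, d$, and the $\alpha_j$'s, such that $\partial_{y^i} C(\vecx, y) = \sum_{j=0}^d \mu_j(y) \cdot C(\vecx, \alpha_j)$.

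For the circuit complexity claim, the key observation is that both identities express the output as a short linear combination of $d+1$ modified evaluations of $C$, where each modification is purely at the input layer: in part (1), we scale each $x \in \vecx_S$ by a constant $\alpha_j$, and in part (2), we replace the variable $y$ by the constant $\alpha_j$. Neither edit changes the size or depth of the circuit for $C$. In part (1) we then add one $\Sigma$-gate on top with $\F$-scalars; in part (2) we multiply each evaluation by the fixed univariate $\mu_j(y)$ of degree $\leq d$ (computable by a constant-depth circuit of size $\poly(d)$) and sum. Each step adds only $O(1)$ depth, yielding a final circuit of size $\poly(s,d)$ and depth $\Delta + O(1)$. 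I do not foresee any real obstacle here: once the right univariates are chosen, the rest is mechanical bookkeeping.
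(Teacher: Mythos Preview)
Your proposal is correct and is exactly the standard argument one would expect; the paper itself states this corollary without proof (it is declared in a ``no-proof'' style with a terminal $\qedsymbol$), calling both parts ``immediate consequences of interpolation,'' so your write-up fills in precisely the intended details.
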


\subsection{Resultants}

\begin{definition}[Sylvester Matrix and Resultant] \label{def:Sylvester-Resultant}
    Let $\F$ be a field. Let $P(z)$ and $Q(z)$ be polynomials of degree $a\geq 1$ and $b \geq 1$ in $\F[z]$. Define a linear map $\Gamma_{P,Q}:\F^a \times \F^b \to \F^{a+b}$ that takes polynomials $A(z)$ and $B(z)$ in $\F[z]$ of degree $a-1$ and $b-1$ respectively, and maps them to $AP + BQ$, a polynomial of degree $a+b-1$.

    The \emph{Sylvester matrix} of $P$ and $Q$, denoted by $\operatorname{Syl}_z(P,Q)$, is defined to be the $(a+b)\times(a+b)$ matrix for the linear map $\Gamma_{P,Q}$. 

    The \emph{Resultant} of $P$ and $Q$, denoted by $\Res{z}{P}{Q}$, is the determinant of $\operatorname{Syl}_z(P,Q)$.
\end{definition}

\begin{fact}[see, e.g., Chapter 3 of \cite{CLO05}]
    \label{fact:resultant-product-of-root-differences}
    Suppose $f(z), g(z) \in \F[z]$ are monic polynomials, and the multiset of their roots over $\overline{\F}$ are $\alpha_1,\ldots, \alpha_a$ and $\beta_1,\ldots, \beta_b$. Then,
    \[
    \Res{z}{f}{g} = \prod_{i=1}^a \prod_{j=1}^b (\alpha_i - \beta_j).
    \]
\end{fact}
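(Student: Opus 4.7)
The plan is a standard polynomial divisibility argument: view both sides as elements of the universal polynomial ring in the indeterminates $\alpha_1, \ldots, \alpha_a, \beta_1, \ldots, \beta_b$, show that every linear form $(\alpha_i - \beta_j)$ divides the Sylvester determinant, and then compare degrees and normalize with a single specialization.

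First I would establish the zero-set characterization that $\Res{z}{f}{g} = 0$ if and only if $f$ and $g$ share a common root in $\overline{\F}$. Unraveling \cref{def:Sylvester-Resultant}, the Sylvester matrix is singular precisely when the map $\Gamma_{f,g}$ has a nontrivial kernel, i.e.\ when there exist $(A, B) \ne (0, 0)$ with the prescribed degree bounds satisfying $A f + B g = 0$. If $h := \gcd(f, g)$ has positive degree, then $(A, B) = (g/h, -f/h)$ is such a kernel element. Conversely, an identity $A f = -B g$ with $\deg A < \deg f$ and $\deg B < \deg g$ forces $f$ and $g$ to share an irreducible factor by unique factorization in $\F[z]$, and hence a common root over $\overline{\F}$.

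Next I would move to the universal setting. Let $R := \F[\alpha_1, \ldots, \alpha_a, \beta_1, \ldots, \beta_b]$, and regard $f = \prod_i(z - \alpha_i)$ and $g = \prod_j(z - \beta_j)$ as elements of $R[z]$, so that $\Res{z}{f}{g} \in R$. Specializing $\alpha_i \mapsto \beta_j$ for any chosen pair $(i,j)$ makes $f$ and $g$ share the root $\beta_j$, so by the previous step the resultant vanishes under this specialization. Hence $(\alpha_i - \beta_j)$ divides $\Res{z}{f}{g}$ in $R$, and since the $ab$ linear forms $\{\alpha_i - \beta_j\}_{i,j}$ are pairwise coprime in the unique factorization domain $R$, their product $\prod_{i,j}(\alpha_i - \beta_j)$ divides $\Res{z}{f}{g}$.

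Finally, I would pin down the quotient by a degree count and a specialization. Each coefficient of $f$ is an elementary symmetric polynomial in the $\alpha$'s and therefore has degree at most $1$ in any single $\alpha_i$; the Sylvester matrix has exactly $b$ rows involving the coefficients of $f$, and the remaining $a$ rows are independent of the $\alpha$'s. Expanding the determinant row-by-row shows that $\Res{z}{f}{g}$ has degree at most $b$ in each $\alpha_i$, exactly matching the divisor $\prod_j(\alpha_i - \beta_j)$. Thus $c := \Res{z}{f}{g} / \prod_{i,j}(\alpha_i - \beta_j)$ has degree $0$ in each $\alpha_i$ and, by the symmetric argument, in each $\beta_j$, so $c \in \F$ is a constant. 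To evaluate $c$, I would specialize all $\alpha_i = 0$, so that $f = z^a$: the right-hand side becomes $\prod_{i,j}(-\beta_j) = g(0)^a$ (after using $g(0) = \prod_j(-\beta_j)$ to track signs), while the Sylvester matrix at $f = z^a$ becomes block-triangular with a $b \times b$ identity block and an $a \times a$ lower-triangular block with $g(0)$ on the diagonal, whose determinant is $g(0)^a$, forcing $c = 1$. The only real obstacle in this plan is the degree bookkeeping and the specialization computation, both of which reduce to elementary checks on the Sylvester matrix.
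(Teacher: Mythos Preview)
The paper does not prove this statement: it is recorded as a \emph{Fact} with a reference to Chapter~3 of \cite{CLO05}, so there is no in-paper argument to compare against. Your proof is the standard one and is correct; the divisibility-plus-degree-count-plus-normalization strategy is exactly what one finds in textbook treatments such as the cited reference.
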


\subsection{Factorization of monic polynomials into power series}

\begin{lemma}[Factorization into power series]
    \label{lem:factorisation-into-power-series}
    Let $f(t, y) \in \K[t,y]$ be a polynomial that is monic in $y$ such that $f(0, y)$ is square-free. For each $\alpha \in \overline{\K}$ (the algebraic closure) such that $f(0, \alpha) = 0$, there is a unique power series $\phi_\alpha(t) \in \overline{\K}\indsquare{t}$ satisfying $\phi_\alpha(0) = \alpha$ such that $f(t, \phi_\alpha(t)) = 0$. 

    \noindent
    In fact, the polynomial $f(t, y)$ factorizes in $\overline{\K}\indsquare{t}[y]$ as 
    \[
    f(t, y) = \prod_{\alpha \in Z}(y - \phi_\alpha(t))
    \]
    where $Z$ is the set of roots of $f(0, y)$ in $\overline{\K}$. 
\end{lemma}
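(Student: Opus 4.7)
The plan is to construct each power series root $\phi_\alpha$ via a Newton/Hensel-type iteration and then deduce the factorization for degree reasons.

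For the first step, I would write $\phi_\alpha(t) = c_0 + c_1 t + c_2 t^2 + \cdots$ with $c_0 = \alpha$, and determine the coefficients $c_k$ inductively. Expanding $f(t, \phi_\alpha(t))$ as a power series in $t$, I would show that the coefficient of $t^k$ takes the form
\[
[t^k]\,f(t, \phi_\alpha(t)) = \partial_y f(0, \alpha) \cdot c_k + R_k(c_0, c_1, \ldots, c_{k-1}),
\]
where $R_k$ depends only on $f$ and the earlier coefficients $c_0, \ldots, c_{k-1}$. The crucial point is that since $f(0,y)$ is square-free and $\alpha$ is one of its roots, $\partial_y f(0, \alpha)$ is a nonzero element of $\overline{\K}$, so the equation above uniquely determines $c_k$. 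This simultaneously proves existence (the recursion defines a valid power series) and uniqueness (any power series solution with constant term $\alpha$ must satisfy the same recursion term by term).

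For the factorization step, let $n = \deg_y f$ and let $Z = \{\alpha_1, \ldots, \alpha_n\}$; square-freeness of $f(0,y)$ together with monicity ensures $|Z| = n$. The power series $\phi_{\alpha_1}, \ldots, \phi_{\alpha_n}$ are pairwise distinct, and in fact each pairwise difference $\phi_{\alpha_i} - \phi_{\alpha_j}$ (for $i \neq j$) has nonzero constant term $\alpha_i - \alpha_j$, hence is a unit in $\overline{\K}\indsquare{t}$. I would then proceed by induction: since $(y - \phi_{\alpha_1})$ is monic in $y$, polynomial long division in $\overline{\K}\indsquare{t}[y]$ yields $f = (y - \phi_{\alpha_1}) f_1$ for some monic $f_1$ of $y$-degree $n-1$. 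Substituting $y = \phi_{\alpha_2}(t)$ gives $0 = (\phi_{\alpha_2} - \phi_{\alpha_1}) \cdot f_1(t, \phi_{\alpha_2}(t))$; since the first factor is a unit, we conclude $f_1(t, \phi_{\alpha_2}) = 0$ and can peel off $(y - \phi_{\alpha_2})$ from $f_1$. Iterating $n$ times gives the desired factorization.

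The main obstacle is really just the bookkeeping in the Newton iteration step --- verifying that expanding $f(t, \phi_\alpha(t))$ as a power series in $t$ indeed isolates $c_k$ linearly with coefficient $\partial_y f(0, \alpha)$ and a remainder depending only on earlier coefficients. Once this is set up cleanly (e.g.\ by expanding $f(t, y + \Delta)$ in powers of $\Delta$ via the Taylor formula around $y = \phi_\alpha^{(<k)}(t)$, the truncation of $\phi_\alpha$ to degree $<k$), the rest of the argument is formal.
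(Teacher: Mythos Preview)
Your approach is correct and is the standard Hensel-lifting argument for this folklore lemma. The paper does not actually prove this statement; it simply declares it ``essentially folklore'' and points to \cite[Section 3]{DSS22-closure} for a formal proof, so there is no paper-internal argument to compare against. Your Newton iteration for existence and uniqueness of each $\phi_\alpha$, followed by successive division by monic linear factors in $\overline{\K}\indsquare{t}[y]$, is exactly how the cited proof (and most textbook treatments) proceed.
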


\noindent
The above lemma is essentially folklore and \cite[Section 3]{DSS22-closure} gives a formal proof of the above. We also note that the lemma can be applied in the multivariate setting by taking a multivariate polynomial in variables $x_1, x_2, \ldots, x_n, y$, replacing each $x_i$ by $x_i \cdot t$ (for a new indeterminate $t$) and viewing the resulting polynomial as a bivariate in $t$ and $y$ with coefficients in the field $\F(x_1, x_2, \ldots, x_n)$.

\subsection{Furstenberg's theorem for roots of multiplicity one}

We define the following \emph{diagonal} operator for bivariate power series $F(x,y) = \sum_{i,j} F_{i,j} x^i y^j\in \F\indsquare{x,y}$ as follows:
\[
\diag(F)(t) = \sum_{i\geq 0} F_{i,i} \cdot t^i.
\]

\begin{theorem}[{\cite[Proposition 2]{Furstenberg67}}] \label{thm:furstenberg}
    Let $P(t,y) \in \F\indsquare{t,y}$ be a power series and $\varphi(t){\in \F\indsquare{t}}$ be a power series satisfying $P(t, y) = (y - \phi(t)) \cdot Q(t,y)$. If $\phi(0) = 0$ and $Q(0,0) \neq 0$, then
    \begin{equation}
        \varphi = \diag\inparen{\frac{y^2 \cdot \partial_yP(ty,y)}{P(ty,y)}}
    \end{equation}
\end{theorem}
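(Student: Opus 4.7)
The plan is to apply the logarithmic derivative identity to the given factorization and then analyze the two resulting summands separately under the $\diag$ operator. From $P(t,y) = (y - \phi(t)) \cdot Q(t,y)$ together with $\partial_y(y - \phi(t)) = 1$, one obtains
\[
\frac{\partial_y P(t,y)}{P(t,y)} \;=\; \frac{1}{y - \phi(t)} \;+\; \frac{\partial_y Q(t,y)}{Q(t,y)},
\]
where the second quotient lies in $\F\indsquare{t,y}$ since $Q(0,0) \neq 0$ makes $1/Q$ a genuine power series. Substituting $t \mapsto ty$ and multiplying through by $y^2$ would then decompose
\[
\frac{y^2 \cdot \partial_y P(ty,y)}{P(ty,y)} \;=\; \frac{y^2}{y - \phi(ty)} \;+\; y^2 \cdot \frac{\partial_y Q(ty,y)}{Q(ty,y)},
\]
so the task reduces to showing that the diagonal of the first summand equals $\phi(t)$ and the diagonal of the second equals $0$.

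For the second summand, let $R(t,y) := \partial_y Q(t,y)/Q(t,y) \in \F\indsquare{t,y}$. Writing $R(t,y) = \sum_{i,j \geq 0} R_{i,j}\, t^i y^j$, we get $y^2 \cdot R(ty,y) = \sum_{i,j \geq 0} R_{i,j}\, t^i\, y^{\,i+j+2}$, so the coefficient of $t^n y^n$ would require $i = n$ together with $i + j + 2 = n$, forcing $j = -2$ — which is impossible. Each diagonal coefficient therefore vanishes. Conceptually, this ``shift of $y$-degree by $2$'' is exactly what the $y^2$ factor in the formula is there to achieve: it kills every contribution coming from factors of $P$ other than the designated root $\phi$.

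For the first summand, the hypothesis $\phi(0) = 0$ lets us factor a $y$ out of the denominator. Writing $\phi(ty)/y = \sum_{i \geq 1} \phi_i\, t^i y^{\,i-1} \in \F\indsquare{t,y}$, we expand
\[
\frac{y^2}{y - \phi(ty)} \;=\; y \cdot \sum_{k \geq 0} \left( \sum_{i \geq 1} \phi_i\, t^i\, y^{\,i-1} \right)^{\!k}
\]
as an honest power series in $(t,y)$. A $k$-fold product $\phi_{i_1}\cdots\phi_{i_k}$ contributes a monomial of the form $t^{i_1 + \cdots + i_k} y^{(i_1 + \cdots + i_k) - k + 1}$, which lands on $t^n y^n$ exactly when $i_1 + \cdots + i_k = n$ and $k = 1$; the only surviving term is then $\phi_n$. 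Summing over $n$ yields $\sum_{n \geq 0} \phi_n\, t^n = \phi(t)$, completing the argument.

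The main thing to be careful about is the legitimacy of the formal manipulations: one must check that $\phi(ty)/y$ and $1/Q(ty,y)$ indeed belong to $\F\indsquare{t,y}$, which is where the two hypotheses $\phi(0) = 0$ and $Q(0,0) \neq 0$ get used. I do not expect any serious obstacle beyond this bookkeeping — the entire identity falls out of the logarithmic derivative decomposition once the diagonal of a ``$y^2$-shifted'' power series is observed to vanish and the geometric series expansion of the first summand is cleanly extracted.
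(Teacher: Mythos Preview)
Your proof is correct and is essentially the standard argument for Furstenberg's identity. Note, however, that the paper does not actually prove this statement: it is quoted as \cite[Proposition 2]{Furstenberg67} and used as a black box, so there is no ``paper's own proof'' to compare against. Your logarithmic-derivative decomposition, followed by the observation that the $y^2$-shift kills the diagonal of the $Q$-part and the geometric-series extraction of $\phi_n$ from $y/(1-\phi(ty)/y)$, is exactly the classical route; the bookkeeping you flag (that $\phi(ty)/y$ and $1/Q(ty,y)$ are honest elements of $\F\indsquare{t,y}$, so that all three terms in the decomposition live in the power series ring and the $\diag$ operator applies) is indeed the only subtlety, and you have identified it correctly.
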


The diagonal expression above can be simplified to a slightly more convenient expression for implicitly defined power series roots. 

\begin{corollary}
    \label{cor:flajolet-soria-formula-for-roots}
    Let $P(t,y) \in \F\indsquare{t,y}$ and $\phi(t) \in \F\indsquare{t}$ such that $\phi(0) = 0$, $P(t, \phi(t)) = 0$ and $\partial_y P(0,0) = \alpha \neq 0$. Then, 
    \[
    \phi(t) = \sum_{m \geq 1} \frac{1}{\alpha^{m+1}} \cdot \coeff{y^{m-1}}{\partial_y P(t,y) \cdot (\alpha y - P(t,y))^m}.
    \]
\end{corollary}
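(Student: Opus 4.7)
The plan is to derive this identity directly from Furstenberg's theorem by expanding the denominator as a formal geometric series. First I would verify the Furstenberg hypotheses: the condition $\partial_y P(0,0) = \alpha \neq 0$ together with Weierstrass preparation (or Hensel's lemma) yields a factorization $P(t,y) = (y - \phi(t)) \cdot Q(t,y)$ in $\F\indsquare{t, y}$ with $Q(0,0) = \alpha \neq 0$, so \cref{thm:furstenberg} applies and gives
\[
\phi(t) = \diag\!\left(\frac{y^2 \cdot \partial_y P(ty, y)}{P(ty, y)}\right).
\]

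Next, I observe that $\alpha y - P(ty, y)$ is divisible by $y$ in $\F\indsquare{t, y}$: writing $P(t,y) = \sum c_{ij} t^i y^j$ with $c_{00} = 0$ and $c_{01} = \alpha$, every nonzero term of $\alpha y - P(ty, y) = -\sum_{(i,j) \neq (0,0),(0,1)} c_{ij} t^i y^{i+j}$ has $y$-degree $i+j \geq 1$. Writing $\alpha y - P(ty, y) = y \cdot T(t, y)$, one further checks that $T(0,0) = 0$, so $1 - T/\alpha$ is a unit in $\F\indsquare{t,y}$. Cancelling one factor of $y$ and expanding the geometric series yields
\[
\frac{y^2 \cdot \partial_y P(ty, y)}{P(ty, y)} \;=\; \frac{y \cdot (\partial_y P)(ty, y)}{\alpha (1 - T/\alpha)} \;=\; \sum_{m \geq 0} \frac{(\partial_y P)(ty, y) \cdot (\alpha y - P(ty, y))^m}{\alpha^{m+1} \cdot y^{m-1}},
\]
using $y \cdot T^m = (\alpha y - P(ty, y))^m / y^{m-1}$, and each summand is a bona fide power series because $y^m \mid (\alpha y - P(ty, y))^m$.

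Applying $\diag$ termwise reduces to an index-tracking calculation. Setting $H_m(t,y) := \partial_y P(t,y) \cdot (\alpha y - P(t,y))^m$ and expanding $H_m(t,y) = \sum_{i,j} h^{(m)}_{ij} t^i y^j$, the substitution $t \mapsto ty$ produces $H_m(ty,y) = \sum h^{(m)}_{ij} t^i y^{i+j}$, so the $t^i y^i$ coefficient of $y^{1-m} \cdot H_m(ty, y)$ is exactly $h^{(m)}_{i,m-1}$. This gives
\[
\diag\!\left(\frac{(\partial_y P)(ty, y) \cdot (\alpha y - P(ty, y))^m}{y^{m-1}}\right)(t) \;=\; \coeff{y^{m-1}}{\partial_y P(t,y) \cdot (\alpha y - P(t,y))^m},
\]
and summing over $m$ yields the claimed formula once one notes that the $m = 0$ term vanishes, since $\partial_y P(t,y)$ has no negative $y$-powers. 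The main technical care is in reconciling the factor $y^{1-m}$---which has negative $y$-powers when $m \geq 2$---with the requirement that each summand in the expansion be a power series; this is precisely what the divisibility $y^m \mid (\alpha y - P(ty, y))^m$ ensures, and the rest is routine index bookkeeping.
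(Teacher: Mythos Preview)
Your proposal is correct and is precisely the derivation the paper has in mind: the paper states this corollary without proof, introducing it only as ``the diagonal expression above can be simplified to a slightly more convenient expression,'' and your geometric-series expansion of $1/P(ty,y)$ followed by the index bookkeeping for $\diag$ is exactly that simplification. One very minor point you might make explicit is why termwise application of $\diag$ to the infinite sum is legitimate: since $T(0,0)=0$, the summand indexed by $m$ contributes only monomials of total degree at least $m$, so each coefficient of $t^i$ in the diagonal receives contributions from only finitely many $m$.
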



Suppose $P(\vecx, y)$ is computable by a small constant-depth circuit. Using \autoref{cor:flajolet-soria-formula-for-roots}, \cite{BKRRSS25} showed that truncations of power series roots (of multiplicity 1) of $P(\vecx, y)$ also have small constant-depth circuits. 

\begin{theorem}[Power series roots without multiplicity \cite{BKRRSS25}]
    \label{thm:closure-powerseries-roots}
    Let $P(\vecx,y) \in \F[\vecx,y]$ be a polynomial computed by a circuit $C$, and let $\varphi(\vecx) {\in \F\indsquare{\vecx}}$ be a power series satisfying $\varphi(\veczero) = 0$, $P(\vecx,\varphi(\vecx)) = 0$ and $\partial_y P(\veczero,0) \neq 0$. Then, for any $d \in \N$, there is a circuit $C'$ computing $\homog_{\leq d}\insquare{\varphi}$ such that
    \[\size(C') \leq \poly(d,\size(C))\]
    \[\depth(C') \leq \depth(C) + O(1)\]
\end{theorem}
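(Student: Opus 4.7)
The plan is to apply the Flajolet--Soria identity from \cref{cor:flajolet-soria-formula-for-roots}, whose proof extends verbatim to the multivariate setting, and carefully truncate. It gives
\[
\varphi(\vecx) \;=\; \sum_{m \geq 1} \frac{1}{\alpha^{m+1}} \coeff{y^{m-1}}{\partial_y P(\vecx, y) \cdot A(\vecx, y)^m},
\]
where $\alpha := \partial_y P(\veczero, 0) \neq 0$ and $A := \alpha y - P$. The goal is to show that (i) only the first $O(d)$ terms contribute to $\homog_{\leq d}[\varphi]$, and (ii) each contributing term is realized by a circuit of size $\poly(d, \size(C))$ and depth $\depth(C) + O(1)$.

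For (i), write $A(\vecx,y) = \sum_{k \geq 0} A_k(\vecx) y^k$. The hypotheses $\varphi(\veczero) = 0$, $P(\vecx, \varphi(\vecx)) = 0$, and $\partial_y P(\veczero, 0) = \alpha$ together force $A_0(\veczero) = A_1(\veczero) = 0$, so $A_0, A_1 \in (\vecx)$. Expanding $A^m$ multinomially, every monomial of $\coeff{y^{m-1}}{A^m}$ has the shape $c \cdot A_0^{j_0} A_1^{j_1} \cdots A_D^{j_D}$ with $\sum j_k = m$ and $\sum k j_k = m-1$. Setting $S := \sum_{k \geq 2} j_k$, the second constraint forces $m-1 \geq j_1 + 2S$, whence $S \leq (m-1)/2$ and so $j_0 + j_1 = m - S \geq (m+1)/2$. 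As each factor of $A_0$ or $A_1$ contributes at least $1$ to the $\vecx$-order, the monomial has $\vecx$-order at least $(m+1)/2$, and the same conclusion survives the extra factor $\partial_y P$ on the left. Hence the $m$-th summand vanishes under $\homog_{\leq d}[\cdot]$ for every $m > 2d-1$, reducing the sum to at most $2d - 1$ terms.

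For (ii), fix any $m \leq 2d - 1$ and assemble the circuit above $C$ using unbounded fan-in $+$ and $\times$ gates. The circuit for $P$ is $C$ itself; \cref{cor:interpolation-consequences}(b) then yields a circuit of size $\poly(s, d)$ and depth $\depth(C) + O(1)$ for $\partial_y P$. The polynomial $A = \alpha y - P$ is built with one $+$ gate, and $A^m$ by a single $\times$ gate of fan-in $m$ wired to the shared $A$-subcircuit. Multiplying by $\partial_y P$ is another $\times$ gate; extracting $\coeff{y^{m-1}}$ from the result is an $\F$-linear combination of $O(d)$ evaluations at distinct scalar $y$-values by \cref{cor:interpolation-consequences}(b); and extracting $\homog_{\vecx, \leq d}$ is an $\F$-linear combination of $O(d)$ scalings $\vecx \mapsto \beta_j \vecx$ by \cref{cor:interpolation-consequences}(a). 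Both interpolations cost only constant additional depth and polynomial size.

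Summing the $O(d)$ resulting terms with one final $+$ gate produces a circuit of size $\poly(d, s)$ and depth $\depth(C) + O(1)$, as required. The main technical content is the combinatorial $\vecx$-order bound in step (i), which turns the a priori infinite Flajolet--Soria series into a polynomial-sized sum; without a uniform linear-in-$m$ lower bound on the $\vecx$-order, no finite construction could follow. The remaining work is essentially mechanical, relying on the unbounded fan-in convention (so that the naively iterated product $A \cdot A \cdots A$ collapses into a single gate) together with \cref{cor:interpolation-consequences} to realize $y$-coefficient extraction and $\vecx$-degree truncation inside constant depth.
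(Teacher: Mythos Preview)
Your proof is correct and follows exactly the approach the paper intends: invoke the Flajolet--Soria expansion of \cref{cor:flajolet-soria-formula-for-roots}, show only $O(d)$ terms survive after $\homog_{\leq d}$, and realize each survivor via \cref{cor:interpolation-consequences}. The paper itself cites this theorem from \cite{BKRRSS25} without proof, but its in-text proof of the small-characteristic analogue (\cref{thm:closure-powerseries-roots-small-char}) is precisely this template, with your explicit $\vecx$-order count on $A_0,A_1$ being what the ``moreover'' clause of \cref{cor:charp-flajolet-soria-formula-for-roots} packages.

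One small slip: extracting $\coeff{y^{m-1}}$ from $\partial_y P \cdot A^m$ requires about $(m+1)\deg_y P$ interpolation points, not $O(d)$, since that product has $y$-degree $(m+1)\deg_y P - 1$. This does not affect the argument in any application in the paper (where $\deg_y P$ is polynomial in the ambient parameters), and the paper's own one-line proof of \cref{thm:closure-powerseries-roots-small-char} is equally terse here, but your ``$O(d)$ evaluations'' is not literally correct without an implicit bound on $\deg_y P$.
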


\section{Complexity of Symmetric Polynomials}\label{sec:symmetric}

A polynomial $P(x_1, \dots, x_n)$ is symmetric if for every permutation $\sigma$ on $n$ elements, $P(x_1, \dots, x_n) = P(x_{\sigma(1)}, \dots, x_{\sigma(n)})$. The elementary symmetric polynomials $\{\esym_i(\vecx)\}_{i\in [n]}$ are ubiquitous examples of symmetric polynomials. Surprisingly, the \emph{fundamental theorem of symmetric polynomials} states that every symmetric polynomial $P$ can be uniquely expressed as polynomial expressions in $\{\esym_i(\vecx)\}_{i\in [n]}$. 

\begin{theorem}[Fundamental theorem of symmetric polynomials] \label{thm:fundamental-theorem-of-symm-poly}
    Let $P(\vecx)$ be a symmetric polynomial on $n$ variables of degree $d$. Then there is a unique polynomial $Q(y_1, \dots, y_n)$ such that $P(\vecx) = Q(\esym_1(\vecx), \esym_2(\vecx), \dots, \esym_n(\vecx))$. Moreover, the degree of $Q$ is at most $d$.
\end{theorem}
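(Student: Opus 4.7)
The plan is to follow the classical proof, which proceeds by induction on monomials in lexicographic order. First, I would fix the lex ordering where $x_1 \succ x_2 \succ \cdots \succ x_n$, so that monomials $\vecx^{\veca}$ and $\vecx^{\vecb}$ are compared by the first index where $a_i \neq b_i$. The key observation is that if $P$ is symmetric and its leading monomial is $c \cdot x_1^{a_1} x_2^{a_2} \cdots x_n^{a_n}$, then $a_1 \ge a_2 \ge \cdots \ge a_n$; otherwise some permutation of this monomial would appear in $P$ and would be lex-larger. Next, I would observe that the product $\esym_1^{a_1 - a_2} \esym_2^{a_2 - a_3} \cdots \esym_{n-1}^{a_{n-1}-a_n} \esym_n^{a_n}$ is a symmetric polynomial whose unique leading monomial (again by lex) is exactly $x_1^{a_1} \cdots x_n^{a_n}$. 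So subtracting $c$ times this product from $P$ yields a new symmetric polynomial $P'$ whose leading monomial is strictly smaller in lex and whose total degree is at most $d$ (since the subtracted product has total degree $\sum_i i \cdot (a_i - a_{i+1}) + n \cdot a_n = \sum_i a_i \le d$). Iterating, since there are only finitely many monomials of degree at most $d$ in $n$ variables and lex is a well-ordering on this set, the process terminates with $0$, and the cumulative record of ``which product of $\esym_i$'s was subtracted with what coefficient'' is precisely a polynomial $Q(y_1,\ldots,y_n)$ witnessing $P = Q(\esym_1(\vecx),\ldots,\esym_n(\vecx))$.

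\textbf{Degree bound.} For the total degree bound on $Q$, I would introduce the weighted degree on $\F[y_1,\ldots,y_n]$ where $y_i$ has weight $i$. Every monomial produced during the subtraction procedure above has weighted degree $\sum_i i(a_i - a_{i+1}) + n a_n = \sum_i a_i \le d$, so the weighted degree of $Q$ is at most $d$. Since the (unweighted) total degree of any monomial $y_1^{b_1} \cdots y_n^{b_n}$ is $\sum b_i \le \sum i b_i$, the total degree of $Q$ is at most its weighted degree, which is at most $d$.

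\textbf{Uniqueness.} For uniqueness, suppose $Q_1$ and $Q_2$ are two polynomials with $Q_1(\esym_1,\ldots,\esym_n) = Q_2(\esym_1,\ldots,\esym_n) = P$, and set $R = Q_1 - Q_2$. If $R \neq 0$, let $y_1^{b_1} \cdots y_n^{b_n}$ be the lex-largest monomial of $R$ under the ordering on $(b_1,\ldots,b_n)$ given by $(b_1+b_2+\cdots+b_n,\; b_2+b_3+\cdots+b_n,\; \ldots,\; b_n)$ compared lexicographically. Then the substitution sends this monomial to a symmetric polynomial whose lex-leading monomial in $\vecx$ is $x_1^{b_1+b_2+\cdots+b_n} x_2^{b_2+\cdots+b_n} \cdots x_n^{b_n}$, and the choice of ordering ensures that no other monomial of $R$ can contribute to or cancel this term. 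This forces $R = 0$, giving uniqueness.

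\textbf{Main obstacle.} The routine bookkeeping of existence and the weighted-degree calculation are straightforward. The place that needs the most care is the uniqueness argument, essentially the algebraic independence of $\esym_1,\ldots,\esym_n$: one has to pick a monomial order on the $y_i$'s that maps compatibly to lex order on the $x_i$'s after substitution, and verify that distinct $y$-monomials produce distinct leading $x$-monomials. The ordering chosen above (lex on partial sums $b_j + b_{j+1} + \cdots + b_n$) is the natural dual to the lex argument used for existence, and the verification is a short combinatorial check.
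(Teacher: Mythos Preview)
The paper does not actually prove this theorem; it is stated as a classical fact (the text introduces it with ``the \emph{fundamental theorem of symmetric polynomials} states that\ldots'') and is immediately used as input to the subsequent complexity discussion. So there is no paper proof to compare against.

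That said, your proposal is correct and is precisely the standard textbook argument (essentially Gauss's proof): lex-leading-term subtraction for existence, the weighted-degree bookkeeping for the degree bound, and the observation that distinct $y$-monomials have distinct lex-leading $x$-monomials after substitution for uniqueness/algebraic independence. All three parts are sound as written. One small stylistic remark on the uniqueness paragraph: the custom order you introduce on $y$-monomials is exactly the one induced by pulling back lex on $\vecx$-monomials through the map $(b_1,\ldots,b_n)\mapsto (b_1+\cdots+b_n,\, b_2+\cdots+b_n,\,\ldots,\,b_n)$, so you could simply say ``choose the $y$-monomial whose image has the largest lex-leading $\vecx$-monomial'' and avoid defining the auxiliary order explicitly; but what you wrote is fine.
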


Given this fact, it is natural to ask if complexity of the polynomial $Q$ is related to the complexity of $P$? Bl\"{a}ser and Jindal~\cite{BJ19} showed that if $P$ is computable by a small algebraic circuit, then $Q$ is also computable by a small algebraic circuit.

\begin{theorem}[Theorem 4 in \cite{BJ19}] \label{thm:general-ckt-complexity-of-symm-poly}
    Let $P(\vecx) \in \Complex[\vecx]$ be a symmetric polynomial on $n$ variables of degree $d$ computed by a circuit $C$ of size $s$. Let $Q(\vecx) \in \Complex[\vecx]$ be the unique polynomial such that ${P(\vecx) = Q(\esym_1(\vecx), \dots, \esym_n(\vecx))}$.
    Then, $Q(\vecx)$ is computable by a circuit of size $\poly(s,d,n)$ over $\Complex$.
\end{theorem}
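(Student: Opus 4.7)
The plan is to follow the strategy sketched in the proof-overview section: construct an auxiliary polynomial whose power-series roots expose $Q$ under a clean substitution, and then convert a circuit for $P$ into one for $Q$ via closure of circuits under factorisation.

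First I would consider
\[
F(\vecz, y) \;=\; y^n - z_1 y^{n-1} + z_2 y^{n-2} - \cdots + (-1)^n z_n - 1 \;\in\; \Complex[\vecz, y],
\]
which is monic in $y$ and satisfies $F(\veczero, y) = y^n - 1$. Over $\Complex$ the latter splits into the $n$ distinct linear factors $\prod_{i=0}^{n-1}(y - \omega^i)$, where $\omega$ is a primitive $n^{\mathrm{th}}$ root of unity, so $F(\veczero, y)$ is square-free. \autoref{lem:factorisation-into-power-series} therefore lifts each root to a unique power series $A_i(\vecz) \in \Complex\indsquare{\vecz}$ satisfying $A_i(\veczero) = \omega^i$ and $F(\vecz, A_i(\vecz)) = 0$, and yields the factorisation $F(\vecz, y) = \prod_{i=0}^{n-1} (y - A_i(\vecz))$ in $\Complex\indsquare{\vecz}[y]$.

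Reading Vieta's formulas off this factorisation and comparing with the coefficients of $F$ gives $\esym_j(A_0, \ldots, A_{n-1}) = z_j$ for $1 \leq j \leq n-1$ and $\esym_n(A_0, \ldots, A_{n-1}) = z_n - (-1)^n$. Substituting $x_{i+1} \mapsto A_i(\vecz)$ into the defining identity $P(\vecx) = Q(\esym_1(\vecx), \ldots, \esym_n(\vecx))$ then produces the key identity
\[
P(A_0(\vecz), \ldots, A_{n-1}(\vecz)) \;=\; Q\bigl(z_1, \ldots, z_{n-1}, z_n - (-1)^n\bigr)
\]
in $\Complex\indsquare{\vecz}$. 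Since \autoref{thm:fundamental-theorem-of-symm-poly} guarantees $\deg Q \leq d$, the right-hand side is in fact a polynomial of degree at most $d$ in $\vecz$, so it equals its own $\homog_{\leq d}$-truncation. Consequently the entire computation may be carried out modulo $\langle \vecz \rangle^{d+1}$: it suffices to compute $\homog_{\leq d}(A_i(\vecz))$ for each $i$, feed these into a circuit for $P$, extract the degree-$\leq d$ part via \autoref{cor:interpolation-consequences}, and finally undo the affine shift in $z_n$ to recover $Q(\vecz)$.

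The remaining task is to produce a circuit of size $\poly(s,d,n)$ for each truncated power-series root. Because the $n^{\mathrm{th}}$ roots of unity are pairwise distinct, every $A_i$ is a simple (multiplicity-one) power-series root of $F$: after the translation $y \mapsto y + \omega^i$ the hypotheses $\phi(\veczero) = 0$ and $\partial_y F(\veczero, 0) \neq 0$ required by the multiplicity-one closure theorem hold at each root. Kaltofen's classical factorisation closure then produces each $\homog_{\leq d}(A_i)$ by a circuit of size $\poly(d,n)$ from the $O(n)$-size circuit for $F$; composing with $C$ and extracting the homogeneous part of interest yields the desired circuit for $Q$ of size $\poly(s,d,n)$. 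The main technical hurdle is the bookkeeping when invoking the closure theorem at each of the $n$ simple roots and in the subsequent composition with $P$, so that the final size remains $\poly(s,d,n)$; the reason this argument will carry over to \autoref{thm:intro-complexity-of-symmetric-polys} is precisely that one can replace Kaltofen's unbounded-depth construction by \autoref{thm:closure-powerseries-roots}, paying only an additive $O(1)$ in depth per application.
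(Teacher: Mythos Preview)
Your proposal is correct and follows essentially the same approach as the paper: the auxiliary polynomial $F(\vecz,y)$ with $F(\veczero,y)=y^n-1$, the lifting of its distinct roots to power-series roots $A_i$ via \autoref{lem:factorisation-into-power-series}, the identity $P(A_0,\ldots,A_{n-1})=Q(\vecz+\text{shift})$ from Vieta, and the recovery of $Q$ by truncating and invoking closure of circuits under factorisation are exactly the steps in the proof overview and in the paper's proof of the constant-depth analogue \autoref{thm:complexity-of-symmetric-polys}. The only cosmetic difference is that the paper's proof of \autoref{thm:complexity-of-symmetric-polys} replaces the $n^{\text{th}}$ roots of unity by arbitrary distinct field elements $\alpha_1,\ldots,\alpha_n$ (so that the argument works over any sufficiently large field, not just $\Complex$), which shifts all $z_i$ rather than only $z_n$; over $\Complex$ your choice is equivalent.
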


One of the main steps in their proof is to compute the truncations of power series roots using Newton Iteration. Based on the ideas from \cite{BKRRSS25}, we would expect that replacing Newton Iteration with Furstenberg's theorem (\cref{thm:furstenberg}) should give us an analogous result over constant-depth circuits. Indeed, this turns out to be the case. We now prove the constant-depth version of their result.

\begin{theorem}[Complexity of symmetric polynomials]
    \label{thm:complexity-of-symmetric-polys}
    Let $n, d, s \in \N$ be parameters, and let $\F$ be a polynomially large field, i.e. $|\F| \geq (nds)^c$ for an absolute constant $c > 0$. Let $P(\vecx) \in \F[\vecx]$ be a symmetric polynomial on $n$ variables of degree $d$ computed by a circuit $C$ of size $s$ and depth $\Delta$. 
    If $Q(\vecz) \in \F[\vecz]$ is the unique degree $d$ polynomial such that $P(\vecx) = Q(\esym_1(\vecx), \dots, \esym_n(\vecx))$, then $Q(\vecz)$ is computable by a circuit of size $\poly(s,d,n)$ and depth $\Delta + O(1)$ over $\F$.
\end{theorem}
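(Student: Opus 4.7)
The plan is to follow the Bl\"{a}ser--Jindal strategy \cite{BJ19} but replace the use of Newton iteration (which produces unbounded-depth circuits) by the constant-depth closure statement for multiplicity-$1$ power series roots, namely \cref{thm:closure-powerseries-roots}. To avoid the characteristic issues with $y^n-1$ (whose roots coincide when $\operatorname{char}(\F) \mid n$), I would set up the auxiliary polynomial using $n$ distinct scalars instead of the $n$-th roots of unity. Concretely, fix distinct $\alpha_0, \ldots, \alpha_{n-1} \in \F$ (available since $|\F| \geq (nds)^c$), set $c_j := \esym_j(\alpha_0, \ldots, \alpha_{n-1})$ for $j \in [n]$, and define
\[
F(\vecz, y) := y^n + \sum_{j=1}^{n} (-1)^j (z_j + c_j)\, y^{n-j} \in \F[\vecz, y].
\]
Then $F(\vec{0}, y) = \prod_{i=0}^{n-1}(y - \alpha_i)$ is square-free, and \cref{lem:factorisation-into-power-series} gives a factorization $F(\vecz, y) = \prod_{i=0}^{n-1}(y - A_i(\vecz))$ in $\F\indsquare{\vecz}[y]$, with each $A_i$ the unique power series satisfying $A_i(\vec{0}) = \alpha_i$. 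Comparing coefficients of $y^{n-j}$ on both sides yields $\esym_j(A_0, \ldots, A_{n-1}) = z_j + c_j$ for every $j$. Since $P$ is symmetric, substituting $x_i \mapsto A_i(\vecz)$ gives
\[
P(A_0(\vecz), \ldots, A_{n-1}(\vecz)) = Q(z_1 + c_1, \ldots, z_n + c_n),
\]
which is a polynomial of total degree at most $d$ in $\vecz$.

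The second step is to obtain small constant-depth circuits for the truncated roots $\tilde{A}_i(\vecz) := \homog_{\leq d}(A_i(\vecz))$, which suffice for the substitution above because the right-hand side has $\vecz$-degree at most $d$. Write $A_i(\vecz) = \alpha_i + B_i(\vecz)$ with $B_i(\vec{0}) = 0$, so $B_i$ is the unique power-series root at the origin of the shifted polynomial $\tilde{F}_i(\vecz, y) := F(\vecz, y + \alpha_i)$. Square-freeness of $F(\vec{0}, y)$ gives $\partial_y \tilde{F}_i(\vec{0}, 0) = \prod_{j \neq i}(\alpha_i - \alpha_j) \neq 0$, so \cref{thm:closure-powerseries-roots}, applied to the trivial $\poly(n)$-size, $O(1)$-depth circuit computing $\tilde{F}_i$, produces a circuit of size $\poly(n, d)$ and depth $O(1)$ for $\homog_{\leq d}(B_i)$, and hence for $\tilde{A}_i$.

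Finally, I would feed these into the hypothesized circuit for $P$: composing the given depth-$\Delta$ size-$s$ circuit $C$ for $P$ with the $n$ depth-$O(1)$ circuits for $\tilde{A}_0, \ldots, \tilde{A}_{n-1}$ produces a depth $\Delta + O(1)$, size $\poly(s, d, n)$ circuit whose output equals $P(\tilde{A}_0, \ldots, \tilde{A}_{n-1}) \equiv Q(z_1 + c_1, \ldots, z_n + c_n) \pmod{\langle \vecz \rangle^{d+1}}$. Extracting the degree $\leq d$ homogeneous part via \cref{cor:interpolation-consequences} removes the higher-degree junk at the cost of another $O(1)$ in depth and $\poly(d)$ in size, yielding an honest circuit for $Q(z_1 + c_1, \ldots, z_n + c_n)$. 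The linear shift $z_j \mapsto z_j - c_j$ is applied as a gate-level substitution and changes neither size nor depth, producing the desired circuit for $Q$. The real work has been outsourced to \cref{thm:closure-powerseries-roots}; the main points to get right are arranging the auxiliary polynomial so that all roots of $F(\vec{0}, y)$ have multiplicity one (so the closure theorem applies cleanly in any characteristic, bypassing the difficulties that \cite{AW24} encountered with power-sum transforms), and verifying that both the composition and the interpolation-based truncation preserve constant depth with only polynomial size overhead.
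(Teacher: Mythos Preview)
Your proposal is correct and follows essentially the same approach as the paper: both set up the auxiliary polynomial $F(\vecz,y)$ so that $F(\vec{0},y)=\prod_i(y-\alpha_i)$ is square-free over $\F$, invoke \cref{lem:factorisation-into-power-series} and \cref{thm:closure-powerseries-roots} to obtain constant-depth circuits for the truncated power series roots $\tilde{A}_i$, compose with the given circuit for $P$, truncate via \cref{cor:interpolation-consequences}, and finally un-shift by the $\esym_j(\alpha_i)$'s. Your explicit handling of the translation $y\mapsto y+\alpha_i$ (to meet the $\varphi(\vec{0})=0$ hypothesis of \cref{thm:closure-powerseries-roots}) and your observation that the $\tilde{A}_i$ circuits actually have size $\poly(n,d)$ and depth $O(1)$ (independent of $s,\Delta$) are both accurate refinements the paper leaves implicit.
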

\begin{proof}
    Let $\alpha_1, \dots, \alpha_n$ be $n$ distinct elements from $\F$ and $\beta_i$ denote $\esym_i(\alpha_1, \dots, \alpha_n)$ for each $i \in [n]$. Define $F(\vecz,y)$ as: 
    \[
        F(\vecz,y) := y^n - (z_1 + \beta_1) y^{n-1} + \cdots + (-1)^{n-1} (z_{n-1} + \beta_{n-1}) y + (-1)^n (z_n + \beta_n).
    \]
    Note that $F(\veczero,y) = y^n + \sum_{i=1}^n {(-1)^iy^{n-i}\beta_i} = \prod_{i\in[n]}(y-\alpha_i)$. Since $\alpha_1,\ldots, \alpha_n$ are distinct, we have that $F(\veczero, y)$ is square-free and hence (by \cref{lem:factorisation-into-power-series}) 
    $F(\vecz, y)$ factorizes as
    \[
    F(\vecz, y) = \prod_{i=1}^n (y - A_i(\vecz))
    \]
    where $A_i(\vecz) \in \F\indsquare{\vecz}$ with $A_i(\veczero) = \alpha_i$ for each $i \in [n]$. Furthermore, $\esym_k(A_1(\vecz), \ldots, A_n(\vecz)) = z_k + \beta_k$. 

    We now apply \cref{thm:closure-powerseries-roots} on $F(\vecz,y)$ to get size $\poly(s,d,n)$ and depth $\Delta+O(1)$ circuits $\tilde{A}_1, \dots, \tilde{A}_n$ such that $\tilde{A}_i(\vecz) = A_i(\vecz) \bmod \inangle{\vecz}^{d+1}$ for each $i\in [n]$.  This implies that for each $i \in [n]$, 
    \begin{align*}
    \esym_i(\tilde{A}_1, \dots, \tilde{A}_n) & = z_i + \beta_i \bmod \inangle{\vecz}^{d+1}.\\
    \therefore \quad P(\tilde{A}_1, \dots, \tilde{A}_n) & = P(A_1, \dots, A_n) \bmod{\inangle{\vecz}^{d+1}}
    \end{align*}
    From \cref{thm:fundamental-theorem-of-symm-poly}, degree of $Q$ is at most $d$. Therefore,
    \begin{align*}
    P(\tilde{A}_1, \dots, \tilde{A}_n) & = P(A_1, \dots, A_n) \bmod{\inangle{\vecz}^{d+1}} \\
        & = Q(\esym_1(A_1,\ldots, A_n), \ldots, \esym_n(A_1,\ldots, A_n))\\
        & = Q(z_1 + \beta_1, \ldots, z_n + \beta_n)\\
    \implies Q(z_1 + \beta_1, \ldots, z_n+\beta_n) & = \homog_{\leq d} P(\tilde{A}_1, \dots, \tilde{A}_n)
    \end{align*}
    Since $P$ and each $\tilde{A}_i$ is computable by a size $\poly(s,d,n)$ and depth $\Delta + O(1)$ circuit, we obtain a $\poly(s,d,n)$ and depth $\Delta + O(1)$ circuit for $Q(\vecz + \vecbeta)$, and thus for $Q(\vecz)$ as well (by shifting $\vecz \leftarrow \vecz - \vecbeta$). 
\end{proof}

\subsection{Multi-symmetric polynomials}

\begin{definition}[Multi-symmetric polynomials] Let $\vecx = \vecx_1 \sqcup \cdots \sqcup \vecx_k$ be a partition of the set of variables $\vecx$. A polynomial $f(\vecx) = f(\vecx_1,\ldots, \vecx_k)$ is said to be \emph{multi-symmetric} with respect to the above partition if for all $i \in [k]$ we have that $f(\vecx)$ is symmetric with respect to the part $\vecx_i$. In other words, for every choice of permutations $\pi_1, \dots, \pi_k$ on $\vecx_1,\dots, \vecx_k$, we have $f(\pi_1(\vecx_1), \dots, \pi_k(\vecx_k)) = f(\vecx_1,\ldots, \vecx_k)$. 
\end{definition}

The following is a natural extension of \cref{thm:fundamental-theorem-of-symm-poly}. For purely the ease of exposition, we assume the size of each part in the partition is the same; the proof would readily extend to the general case with mere notational changes. 

\begin{theorem}[Fundamental theorem for multi-symmetric polynomials] \label{thm:fundamental-theorem-of-multi-symm-poly}
    Let $\vecx = \vecx_1 \sqcup \cdots \sqcup \vecx_k$ be a partition of the variables with each $\abs{\vecx_i} = n$, and let $P(\vecx)$ be a multi-symmetric polynomial with respect to the above partition of degree $d$. Then there is a unique polynomial $Q(\vecy_1,\dots, \vecy_k)$ (where $\vecy_i = (y_{i1}, \ldots, y_{in})$) such that $P(\vecx) = Q(\overline{\esym}(\vecx_1), \dots, \overline{\esym}(\vecx_k))$ where 
    \[
    \overline{\esym}(\vecx_i) \coloneq \inparen{\esym_1(\vecx_i), \ldots, \esym_n(\vecx_i)}. 
    \]
    Moreover, the degree of $Q$ is at most $d$.
\end{theorem}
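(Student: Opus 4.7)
The plan is to reduce \cref{thm:fundamental-theorem-of-multi-symm-poly} to the single-part case (\cref{thm:fundamental-theorem-of-symm-poly}) by induction on $k$. The base case $k=1$ is exactly \cref{thm:fundamental-theorem-of-symm-poly}. For the inductive step, I would view $P(\vecx_1,\ldots, \vecx_k)$ as a polynomial in $\vecx_1$ with coefficients in the ring $R = \F[\vecx_2,\ldots,\vecx_k]$. Since $P$ is symmetric in $\vecx_1$, and since the classical fundamental theorem of symmetric polynomials actually holds with coefficients in an arbitrary commutative ring (the standard lex-order elimination argument uses only ring operations), there is a unique $Q_1(\vecy_1) \in R[\vecy_1]$ of degree at most $\deg_{\vecx_1}(P) \le d$ with $P = Q_1\inparen{\overline{\esym}(\vecx_1)}$.

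Next I would show that each coefficient $c_\alpha(\vecx_2,\ldots,\vecx_k) \in R$ of $Q_1$ (when expanded in $\vecy_1$) is multi-symmetric with respect to the remaining partition $\vecx_2 \sqcup \cdots \sqcup \vecx_k$. The argument is just a uniqueness trick: for any permutation $\pi$ acting on (say) $\vecx_2$, applying $\pi$ to the identity $P = Q_1(\overline{\esym}(\vecx_1))$ leaves the left side unchanged (multi-symmetry of $P$) and fixes the arguments $\overline{\esym}(\vecx_1)$, hence produces another valid expansion $P = (\pi Q_1)(\overline{\esym}(\vecx_1))$. Uniqueness of $Q_1$ forces $\pi Q_1 = Q_1$, and extracting $\vecy_1$-coefficients gives $\pi(c_\alpha) = c_\alpha$. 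Applying the inductive hypothesis to each $c_\alpha$ on $k-1$ parts produces unique polynomials $\tilde c_\alpha(\vecy_2,\ldots,\vecy_k)$ such that $c_\alpha = \tilde c_\alpha(\overline{\esym}(\vecx_2),\ldots,\overline{\esym}(\vecx_k))$. Setting $Q(\vecy_1,\ldots,\vecy_k) := \sum_\alpha \tilde c_\alpha(\vecy_2,\ldots,\vecy_k)\, \vecy_1^{\alpha}$ yields the desired expression.

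For uniqueness of the final $Q$, I would appeal to the algebraic independence of the collection $\set{\esym_i(\vecx_j) : i \in [n],\, j \in [k]}$: the polynomials $\esym_1(\vecx_j),\ldots, \esym_n(\vecx_j)$ are algebraically independent for each fixed $j$, and different $j$'s involve disjoint variable sets, so the union is algebraically independent. Hence the substitution $\vecy_j \mapsto \overline{\esym}(\vecx_j)$ is an injection on polynomials, so $Q$ is unique. For the degree bound, assign the weight $\operatorname{wt}(y_{ji}) = i$ to each variable, matching $\deg(\esym_i(\vecx_j)) = i$. The substitution preserves weighted degree, so the weighted degree of $Q$ equals $\deg(P) = d$; since every weight is at least $1$, the ordinary total degree of $Q$ is bounded above by its weighted degree, giving $\deg Q \le d$.

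The only non-routine point is invoking the classical fundamental theorem of symmetric polynomials over the coefficient ring $R = \F[\vecx_2,\ldots,\vecx_k]$ rather than over $\F$, but this is standard — the proof of \cref{thm:fundamental-theorem-of-symm-poly} goes through verbatim over any commutative ring, and in particular uniqueness still holds because $\esym_1(\vecx_1),\ldots,\esym_n(\vecx_1)$ remain algebraically independent over $R$ (their only relations would be polynomial identities in $\vecx_1$ alone). Everything else is bookkeeping, so no substantial obstacle should arise.
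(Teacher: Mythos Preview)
Your proposal is correct and follows essentially the same approach as the paper: both argue by induction on $k$, peel off one block of variables using the classical fundamental theorem of symmetric polynomials, observe that the resulting coefficients are multi-symmetric in the remaining $k-1$ blocks, and then apply the inductive hypothesis. Your version is slightly more streamlined (you invoke the classical theorem directly over the coefficient ring $R$, whereas the paper first expands $P$ in monomials of $\vecx_k$ and groups them into orbit-sums before applying the classical theorem), and you are more explicit than the paper about uniqueness (via algebraic independence) and the degree bound (via the weighted-degree argument), but the skeleton of the argument is the same.
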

\begin{proof}
The proof is just a simple induction on $k$. For $k = 1$, the claim follows from \cref{thm:fundamental-theorem-of-symm-poly}. 

Assuming that the theorem is true for $k-1$, and write $P(\vecx)$ as 
\[
P(\vecx) = \sum_{\vece \geq 0} \coeff{\vecx_k^{\vece}}{P} \cdot \vecx_k^\vece.
\]
We will say $\vece_i \sim \vece_j$ if there is some permutation $\pi$ on $\abs{\vecx_k}$ elements such that $\pi(\vece_i) = \vece_j$; let $\mathcal{E}$ be a distinct set of representatives of the equivalence classes defined by $\sim$. Note that if $\vece_i \sim \vece_j$, then $\coeff{\vecx_k^{\vece_i}}{P} = \coeff{\vecx_k^{\vece_j}}{P}$. Therefore, if we define $H_{\vece^*}(\vecx_k)$ for an $\vece^* \in \mathcal{E}$ as
\[
H_{\vece^*}(\vecx_k) \coloneq \sum_{\vece'\;:\; \vece' \sim \vece^*} \vecx_k^{\vece'}
\]
then, 
\begin{align*}
P(\vecx) &= \sum_{\vece^* \in \mathcal{E}} \coeff{\vecx_k^{\vece^*}}{P} \cdot H_{\vece^*}(\vecx_k) \\
         &= \sum_{\vece^* \in \mathcal{E}} \coeff{\vecx_k^{\vece^*}}{P} \cdot \tilde{Q}_{\vece^*}(\overline{\esym}(\vecx_k)) & \text{(by \cref{thm:fundamental-theorem-of-symm-poly})}
\end{align*}
for some $\tilde{Q}_{\vece^*}$ of degree at most $\abs{\vece^*}$. 
Since $P$ is multi-symmetric, we have that $\coeff{\vecx_k^{\vece}}{P}$ is multi-symmetric with respect to $\vecx_1 \sqcup \cdots \sqcup \vecx_{k-1}$.  By induction, there exists polynomials $Q_{\vece}$ of degree at most $d - \abs{\vece}$ such that 
\begin{align*}
\coeff{\vecx_k^{\vece}}{P} & = Q_\vece(\overline{\esym}(\vecx_1), \dots, \overline{\esym}(\vecx_{k-1})). \\
\implies P(\vecx) & = \sum_{\vece^* \in \mathcal{E}} Q_{\vece^*}(\overline{\esym}(\vecx_1), \dots, \overline{\esym}(\vecx_{k-1})) \cdot \tilde{Q}_{\vece^*}(\overline{\esym}(\vecx_k))\\
 & =: Q(\overline{\esym}(\vecx_1), \dots, \overline{\esym}(\vecx_{k}))
\end{align*}
for some $Q(y_{11}, \ldots, y_{kn})$ of degree at most $d$. 
\end{proof}

\begin{theorem}[Complexity of multi-symmetric polynomials]
    \label{thm:complexity-of-multi-symmetric-polys}
    Let $\F$ be a polynomially large field with respect to parameters $n, d, s \in \N$, i.e. $|\F| \geq (nds)^c$ for an absolute constant $c > 0$. Let $\vecx = \vecx_1 \sqcup \dots \sqcup \vecx_k$ be a partition of the variables, with each $\abs{\vecx_i} = n$, and let $P(\vecx) \in \F[\vecx]$ be a multi-symmetric polynomial with respect to the partition of degree $d$ computed by a circuit $C$ of size $s$ and depth $\Delta$. 
    If $Q(z_{11}, \ldots, z_{kn}) \in \F[\vecz]$ is the unique degree $d$ polynomial such that 
    \[
        P(\vecx) = Q(\overline{\esym}(\vecx_1), \dots, \overline{\esym}(\vecx_k))
    \]
    where $\overline{\esym}(\vecx_i) \coloneq \inparen{\esym_1(\vecx_i), \ldots, \esym_n(\vecx_i)}$, then $Q(\vecz)$ is computable by a circuit of size $\poly(s,d,n)$ and depth $\Delta + O(1)$ over $\F$.
\end{theorem}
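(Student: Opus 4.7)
The plan is to mimic the proof of \cref{thm:complexity-of-symmetric-polys} block-by-block, running an independent Furstenberg-style construction for each part $\vecx_j$ of the partition. Concretely, for each $j \in [k]$, I would pick distinct scalars $\alpha_{j,1}, \ldots, \alpha_{j,n} \in \F$ and set $\beta_{j,i} = \esym_i(\alpha_{j,1}, \ldots, \alpha_{j,n})$. Introduce a fresh tuple of formal variables $\vecz_j = (z_{j,1}, \ldots, z_{j,n})$ and the polynomial
\[
F_j(\vecz_j, y) = y^n - (z_{j,1} + \beta_{j,1}) y^{n-1} + \cdots + (-1)^n (z_{j,n} + \beta_{j,n}).
\]
Since $F_j(\veczero, y) = \prod_i (y - \alpha_{j,i})$ is square-free, \cref{lem:factorisation-into-power-series} factorises $F_j(\vecz_j, y) = \prod_i (y - A_{j,i}(\vecz_j))$ with $A_{j,i}(\veczero) = \alpha_{j,i}$, and in particular $\esym_i(A_{j,1}, \ldots, A_{j,n}) = z_{j,i} + \beta_{j,i}$ as a power series identity for every $i$.

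Applying \cref{thm:closure-powerseries-roots} to each $F_j$ (which itself is computed by a trivial circuit of size $O(n)$ and depth $O(1)$), I would obtain circuits $\tilde{A}_{j,i}$ of size $\poly(d, n)$ and depth $O(1)$ satisfying $\tilde{A}_{j,i} \equiv A_{j,i} \bmod \inangle{\vecz_j}^{d+1}$. Since each $A_{j,i}$ depends only on the block $\vecz_j$ and the blocks $\vecz_1, \ldots, \vecz_k$ are disjoint, the error $A_{j,i} - \tilde{A}_{j,i}$ has total degree at least $d+1$ in the full variable set $\vecz = \vecz_1 \sqcup \cdots \sqcup \vecz_k$. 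Substituting $\vecx_j \leftarrow (\tilde{A}_{j,1}, \ldots, \tilde{A}_{j,n})$ into the identity $P = Q(\overline{\esym}(\vecx_1), \ldots, \overline{\esym}(\vecx_k))$, and using that $Q$ has total degree at most $d$, one obtains
\[
P(\tilde{A}_{1,1}, \ldots, \tilde{A}_{k,n}) \;\equiv\; Q(\vecz_1 + \vecbeta_1, \ldots, \vecz_k + \vecbeta_k) \bmod \inangle{\vecz}^{d+1}.
\]
Extracting the sum of homogeneous parts of total degree at most $d$ via \cref{cor:interpolation-consequences} and then shifting $\vecz_j \leftarrow \vecz_j - \vecbeta_j$ for each $j$ recovers a circuit for $Q(\vecz)$ of size $\poly(s, d, n)$ and depth $\Delta + O(1)$.

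I do not expect a substantial obstacle: the argument is essentially the ``product'' of $k$ independent copies of the proof of \cref{thm:complexity-of-symmetric-polys}, glued together by the structural fact (\cref{thm:fundamental-theorem-of-multi-symm-poly}) that a multi-symmetric polynomial admits a single multi-elementary-symmetric presentation. The main bookkeeping point is the cross-block degree analysis --- namely, that truncating each $A_{j,i}$ in $\inangle{\vecz_j}^{d+1}$ also yields the correct truncation in $\inangle{\vecz}^{d+1}$ after being fed into $P$ --- but this is immediate from the disjointness of the blocks together with the total-degree bound on $Q$. As a minor quantitative remark, $k$ does not appear explicitly in the final size bound because if $P$ depends on some variable in $\vecx_j$ then by symmetry it depends on all of $\vecx_j$, so at most $\lfloor s/n \rfloor$ blocks contribute nontrivially and the remaining blocks can be ignored without loss of generality.
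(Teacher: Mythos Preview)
Your proposal is correct and follows essentially the same route as the paper's proof: run the Furstenberg-based construction of \cref{thm:complexity-of-symmetric-polys} independently on each block, substitute the truncated power-series roots into $P$, and recover $Q$ via $\homog_{\leq d}$ and an un-shift. The only cosmetic difference is that the paper reuses the same scalars $\alpha_1,\ldots,\alpha_n$ across all blocks rather than choosing fresh $\alpha_{j,i}$ per block, and your explicit remark bounding the number of relevant blocks by $\lfloor s/n\rfloor$ is a detail the paper leaves implicit.
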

\begin{proof}
The proof is almost a direct extension of \cref{thm:complexity-of-symmetric-polys}. As in the proof of \cref{thm:complexity-of-symmetric-polys}, for each $i \in [k]$ define the polynomials for $\vecz_i = (z_{i,1}, \ldots, z_{i,n})$:
\[
    F_i(\vecz_i,y) := y^n - (z_{i,1} + \beta_1) y^{n-1} + \cdots + (-1)^{n-1} (z_{i,n-1} + \beta_{n-1}) y + (-1)^n (z_{i,n} + \beta_n)
\]
where $\beta_i = \esym_i(\alpha_1,\ldots, \alpha_n)$ for some distinct choice of $\alpha_1,\ldots, \alpha_n \in \F$. By \cref{lem:factorisation-into-power-series}, since $F_i(\veczero, y) = \prod_{i=1}^n (y - \alpha_i)$ is square-free, we have that $F_i(\vecz_i,y)$ factors 
\[
    F_i(\vecz_i,y) = \prod_{j=1}^n (y - A_{ij}(\vecz_i))
\]
where $A_{ij}(\vecz_i) \in \F\indsquare{\vecz_i}$ and $A_{ij}(\veczero) = \alpha_j$. Note that $\esym_r(A_{i1},\ldots, A_{in}) = z_{i,r} + \beta_r$. By \cref{thm:closure-powerseries-roots}, we have circuits $\tilde{A}_{ij}$ of size $\poly(s,n,d)$ and depth $\Delta+ O(1)$ such that 
\[
    \tilde{A}_{ij}(\vecz_i) = A_{ij}(\vecz_i) \bmod{\inangle{\vecz_i}^{d+1}}.
\]
From \cref{thm:complexity-of-multi-symmetric-polys}, degree of $Q$ is at most $d$. Thus,
\begin{align*}
P(\tilde{A}_{11}(\vecz_1), \ldots, \tilde{A}_{kn}(\vecz_k)) & =P(A_{11}(\vecz_1), \ldots, A_{kn}(\vecz_k)) \bmod{\inangle{\vecz}^{d+1}}\\
& = Q(\overline{\esym}(A_{11}, \ldots, A_{1n}), \dots, \overline{\esym}(A_{k1}, \ldots, A_{kn}))\\
& = Q(\vecz_1 + \vecbeta, \ldots, \vecz_k + \vecbeta)\\
\implies Q(\vecz_1 + \vecbeta, \ldots, \vecz_k + \vecbeta) & = \homog_{\leq d}P(\tilde{A}_{11}(\vecz_1), \ldots, \tilde{A}_{kn}(\vecz_k))
\end{align*}
By \cref{cor:interpolation-consequences}, the RHS is computable by a $\poly(n,d,s)$-size depth $\Delta + O(1)$ circuit, and by un-translating by $\vecbeta$ we also obtain a such a circuit for $Q(\vecz_1, \ldots, \vecz_k)$. 
\end{proof}

\section{Computing resultants and GCD}\label{sec:GCD}

\subsection{A constant-depth circuit for the resultant over all fields}

An almost immediate corollary of \cref{thm:complexity-of-multi-symmetric-polys} is a polynomial size constant-depth circuit for the resultant of two polynomials over any large enough field. 

\begin{corollary}[Computing resultants in constant depth]
    Let $\F$ be any large enough field (i.e., $|\F| \geq d^c$ for some absolute constant $c > 0$). 
    Given two degree $d$ univariate polynomials $f(y)$ and $g(y)$ that are monic in $y$ and provided as a list of coefficients, we can compute the resultant of the two polynomials, $\Res{y}{f}{g}$, via a $\poly(d)$-size depth $O(1)$ circuit over the coefficients of $f$ and $g$ over the field $\F$.  
\end{corollary}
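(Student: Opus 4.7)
The plan is to instantiate \autoref{thm:complexity-of-multi-symmetric-polys} on the natural ``root-side'' expression for the resultant provided by \autoref{fact:resultant-product-of-root-differences}. Concretely, introduce formal indeterminates $\vec{\alpha} = (\alpha_1, \ldots, \alpha_d)$ and $\vec{\beta} = (\beta_1, \ldots, \beta_d)$, thought of as the roots of two arbitrary monic degree-$d$ univariates, and consider
\[
R(\vec{\alpha}, \vec{\beta}) \;\coloneq\; \prod_{i=1}^{d} \prod_{j=1}^{d} (\alpha_i - \beta_j).
\]
As a product of $d^2$ affine forms, $R$ admits a $\SP$ circuit of depth $2$ and size $O(d^2)$. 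It is also multi-symmetric with respect to the partition $\vec{\alpha} \sqcup \vec{\beta}$, since any permutation of the $\alpha_i$'s (or of the $\beta_j$'s) merely permutes the factors of the product. Its total degree is $2d$.

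Next I would apply \autoref{thm:complexity-of-multi-symmetric-polys} with $k = 2$ parts of size $n = d$, degree parameter $2d$, and circuit size $O(d^2)$. This yields a unique polynomial $Q(\vec{u}, \vec{v})$ of degree at most $2d$ satisfying
\[
R(\vec{\alpha}, \vec{\beta}) \;=\; Q\bigl(\esym_1(\vec{\alpha}), \ldots, \esym_d(\vec{\alpha}),\; \esym_1(\vec{\beta}), \ldots, \esym_d(\vec{\beta})\bigr),
\]
together with a $\poly(d)$-size, depth-$O(1)$ circuit for $Q$ over $\F$. The field-size hypothesis $|\F| \geq d^c$ in the corollary is precisely what the multi-symmetric complexity theorem demands (all of $n$, $d$ and $s$ there are $\poly(d)$).

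To translate from ``roots as input'' to ``coefficients as input'', I would invoke Vi\`ete's formulas: if $f(y) = y^d + f_{d-1} y^{d-1} + \cdots + f_0$ is monic with roots $\alpha_1, \ldots, \alpha_d$, then $f_{d-i} = (-1)^i \esym_i(\vec{\alpha})$, and analogously for $g$. Hence, by first forming the scalars $(-1)^i f_{d-i}$ and $(-1)^i g_{d-i}$ (a depth-$0$ transformation of the input coefficient vectors) and feeding them into the circuit for $Q$, one obtains a $\poly(d)$-size, depth-$O(1)$ circuit whose output is $R(\vec{\alpha}, \vec{\beta}) = \Res{y}{f}{g}$ by \autoref{fact:resultant-product-of-root-differences}. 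I do not anticipate any real obstacle here --- the corollary is essentially the cleanest possible unwinding of the multi-symmetric complexity theorem applied to the simplest root-side expression for the resultant.
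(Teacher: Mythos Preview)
Your proposal is correct and follows essentially the same approach as the paper's proof: express the resultant as the bi-symmetric product $\prod_{i,j}(\alpha_i - \beta_j)$, apply \autoref{thm:complexity-of-multi-symmetric-polys}, and then identify the elementary symmetric polynomials of the roots with the (signed) coefficients of $f$ and $g$. One small slip worth correcting: the total degree of $R(\vec\alpha,\vec\beta)$ is $d^2$, not $2d$ (there are $d^2$ linear factors), but since \autoref{thm:complexity-of-multi-symmetric-polys} only requires the degree to be $\poly(d)$, this does not affect the argument.
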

\begin{proof}
    Let $f(y) = y^{d_1} - f_1 y^{d_1-1} + \cdots + (-1)^{d_1} f_0$ and $g(y) = y^{d_2} - g_1 y^{d_2-1} + \cdots + (-1)^{d_2} g_0$ and let $d = \max(d_1,d_2)$. If the multiset of roots (in $\overline{\F})$ of $f$ and $g$ are given by $\set{\sigma_1,\ldots, \sigma_{d_1}}$ and $\set{\tau_1,\ldots, \tau_{d_2}}$, then by \cref{fact:resultant-product-of-root-differences} we have
    \[
    \Res{y}{f}{g} = \prod_{i = 1}^{d_1} \prod_{j=1}^{d_2} (\sigma_i - \tau_j). 
    \]
    The RHS is clearly a bi-symmetric polynomial with respect to $\set{\sigma_1,\ldots, \sigma_{d_1}} \sqcup \set{\tau_1,\ldots, \tau_{d_2}}$ and is a $\poly(d)$ sized depth $2$ circuit over the $\sigma$'s and $\tau$'s. By \cref{thm:fundamental-theorem-of-multi-symm-poly}, there exists a unique $Q$ of degree at most $d_1 d_2$ such that 
    \[
        \Res{y}{f}{g} = Q(\overline{\esym}(\sigma_1,\ldots, \sigma_{d_1}),\overline{\esym}(\tau_1, \ldots, \tau_{d_2})). 
    \]
    Note that $\esym_r(\sigma_1,\ldots, \sigma_{d_1})$ is just $f_r$ and $\esym_r(\tau_1,\ldots, \tau_{d_2})$ is just $g_r$. Thus, by \cref{thm:complexity-of-multi-symmetric-polys}, $Q$ is also computable by a circuit of size $\poly(d)$ and depth $O(1)$ over the coefficients $f_1,\ldots, f_{d_1}, g_1,\ldots, g_{d_2}$ over the field $\F$. 
\end{proof}

If $f(\vecx, y), g(\vecx,y)$ are multivariate polynomials that are monic in $y$ and computable by size $s$ depth $\Delta$ circuits, then the coefficients of monomials in $y$ is also computable by size $\poly(s,n,d)$, depth $\Delta + O(1)$ circuits and thus $\Res{y}{f}{g}$ is also computable by a size $\poly(s,n,d)$ depth $\Delta+O(1)$ circuit. 

The resultant is the case of computing the product of $g$'s evaluations on the roots of $f$. A more general statement is the following lemma. 
    
\begin{lemma}
    \label{lem:esym-of-g-on-roots-of-f}
    Let $\F$ be any large enough field (i.e., $|\F| \geq (sd_1d_2n)^c$ for some absolute constant $c > 0$). 
    Suppose $f(\vecx,y), g(\vecx,y) \in \F[\vecx,y]$ are two n-variate polynomial that are monic in $y$, of degree $d_1$ and $d_2$ respectively (in $y$), that are given by size $s$, depth $\Delta$ circuits. Let $\set{\sigma_1,\ldots, \sigma_{d_1}}$ be the multi-set of $y$-roots of $f$ over $\overline{\F(\vecx)}$. Then, for any $1 \leq r \leq d_1$, we can obtain a circuit of size $\poly(s,n,d_1,d_2)$ and depth $\Delta + O(1)$ for 
    \[
    \esym_r(\set{g(\vecx, \sigma_1), \ldots, g(\vecx, \sigma_{d_1})}) \in \F[\vecx].
    \]
\end{lemma}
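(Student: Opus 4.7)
The strategy is to recognize that the desired quantity is a symmetric polynomial in the roots $\sigma_1, \ldots, \sigma_{d_1}$ and then invoke \cref{thm:complexity-of-symmetric-polys} to express it as a small constant-depth circuit in their elementary symmetric polynomials, which by Vieta's formulas are just the (signed) coefficients of $f(\vecx,y)$ viewed as a univariate in $y$.

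Concretely, introduce fresh indeterminates $\vecy = (y_1, \ldots, y_{d_1})$ and define
\[
h(\vecx, \vecy) := \esym_r(g(\vecx, y_1), \ldots, g(\vecx, y_{d_1})).
\]
The first step is to produce a circuit for $h$ of size $\poly(s, d_1, d_2)$ and depth $\Delta + O(1)$. This follows by writing
\[
\prod_{i=1}^{d_1}\inparen{1 + t \cdot g(\vecx, y_i)} = \sum_{j=0}^{d_1} \esym_j(g(\vecx,y_1),\ldots,g(\vecx,y_{d_1})) \cdot t^j
\]
and extracting the coefficient of $t^r$ via Ben-Or interpolation (\cref{cor:interpolation-consequences}), at a cost of $\poly(d_1)$ evaluations of a $\poly(s, d_1)$-size, depth $\Delta + O(1)$ circuit.

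Second, observe that $h$ is symmetric in $\vecy$ and has $\deg_{\vecy}(h) \leq r d_2$. By the fundamental theorem of symmetric polynomials applied over the coefficient ring $\F[\vecx]$, there is a unique polynomial $H(\vecx, \vecz)$ with $\vecz = (z_1, \ldots, z_{d_1})$ and $\deg_{\vecz}(H) \leq r d_2$ such that
\[
h(\vecx, \vecy) = H(\vecx,\, \esym_1(\vecy), \ldots, \esym_{d_1}(\vecy)).
\]
I would then apply \cref{thm:complexity-of-symmetric-polys}, with the mild extension that the input polynomial may depend on auxiliary variables $\vecx$ treated as parameters. This extension is essentially free: the proof of \cref{thm:complexity-of-symmetric-polys} substitutes approximate power-series roots (computed via \cref{thm:closure-powerseries-roots}, which already handles arbitrary multivariate $P$) only into the symmetric slots, leaving the $\vecx$-inputs untouched. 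This yields a circuit for $H(\vecx, \vecz)$ of size $\poly(s, n, d_1, d_2)$ and depth $\Delta + O(1)$.

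Finally, since $f$ is monic in $y$ with $\{\sigma_1, \ldots, \sigma_{d_1}\}$ as its multi-set of $y$-roots over $\overline{\F(\vecx)}$, Vieta's formulas give
\[
\esym_j(\sigma_1, \ldots, \sigma_{d_1}) = (-1)^j \cdot \coeff{y^{d_1 - j}}{f(\vecx, y)} \;\in\; \F[\vecx],
\]
and each of these coefficients admits a circuit of size $\poly(s, d_1)$ and depth $\Delta + O(1)$ by another application of \cref{cor:interpolation-consequences}. Substituting these $\F[\vecx]$-circuits into the $\vecz$-slots of $H$ gives the claimed circuit for $\esym_r(g(\vecx,\sigma_1), \ldots, g(\vecx, \sigma_{d_1}))$. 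The only real technical point to be careful about is the parameterized extension of \cref{thm:complexity-of-symmetric-polys}, but given the form of its proof this should be immediate and is the natural place where I would concentrate the write-up.
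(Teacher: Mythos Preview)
Your proposal is correct, but the paper takes a slightly different route. Rather than carrying the $\vecx$-variables along as parameters and invoking only the single-block version \cref{thm:complexity-of-symmetric-polys}, the paper also factors $g$ over $\overline{\F(\vecx)}$ as $g(\vecx,y)=\prod_j (y-\tau_j)$, rewrites
\[
\esym_r\bigl(g(\vecx,\sigma_1),\ldots,g(\vecx,\sigma_{d_1})\bigr)=\esym_r\Bigl(\Bigl\{\textstyle\prod_{j=1}^{d_2}(\sigma_i-\tau_j):i\in[d_1]\Bigr\}\Bigr),
\]
observes that this is \emph{bi-symmetric} in $\{\sigma_i\}\sqcup\{\tau_j\}$, and applies the multi-symmetric \cref{thm:complexity-of-multi-symmetric-polys} to obtain a single polynomial $Q$ in the coefficients of both $f$ and $g$. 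Your approach avoids introducing the $\tau_j$ altogether and keeps $g$ ``hard-wired'' through $\vecx$, at the cost of the parameterized extension of \cref{thm:complexity-of-symmetric-polys} (which, as you note, is immediate from its proof: substitute the truncated power-series roots only into the $\vecy$-slots and take $\homog_{\vecz,\leq rd_2}$). The paper's route yields a universal circuit in the coefficient vectors of $f$ and $g$ and reuses the already-stated multi-symmetric theorem verbatim; yours is arguably more direct and needs only the single-block statement. Both give the same size and depth bounds.
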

\begin{proof}
    Let $d = \max(d_1,d_2)$. Note that $\esym_r(y_1,\ldots, y_m) = \coeff{t^r}{(1 + ty_1)\cdots (1 + ty_m)}$. By \cref{cor:interpolation-consequences}, it is computable by a size $\poly(r,m)$ depth $3$ circuit over $\F$. 

    If $\set{\tau_1,\ldots, \tau_{d_2}}$ is the multi-set of $y$-roots of $g$, then 
    \[
    \esym_r(\set{g(\vecx, \sigma_1), \ldots, g(\vecx, \sigma_{d_1})}) = \esym_r\inparen{\setdef{\prod_{j=1}^{d_2}(\sigma_i - \tau_j)}{i\in [d_1]}}
    \]
    is clearly a bi-symmetric polynomial with respect to $\set{\sigma_1,\ldots, \sigma_{d_1}} \sqcup \set{\tau_1,\ldots, \tau_{d_2}}$. Thus, if $f(\vecx,y) = y^{d_1} + f_1 y^{d_1 - 1} + \cdots + f_{d_1}$ and $g(\vecx, y) = y^{d_2} + g_1 y^{d_2 - 1} + \cdots + g_0$, then each $f_i, g_i$ are the elementary symmetric polynomials of $\set{\sigma_1,\ldots, \sigma_{d_1}}$ and $\set{\tau_1,\ldots, \tau_{d_2}}$, and \cref{cor:interpolation-consequences} shows that we can obtain circuits of size $\poly(s,d)$ and depth $\Delta + O(1)$ for them. By \cref{thm:complexity-of-multi-symmetric-polys}, we can express the above as $Q(f_1, \ldots, f_{d_1}, g_1, \ldots, g_{d_2})$ where $f(\vecx,y) = y^{d_1} + f_1 y^{d_1 - 1} + \cdots + f_{d_1}$ and $g(\vecx, y) = y^{d_2} + g_1 y^{d_2 - 1} + \cdots + g_0$ with $Q$ computable by a size $\poly(s,n, d)$, depth $\Delta + O(1)$ circuit. 
\end{proof}

The above lemma can also be easily generalized to computing elementary symmetric polynomials of a \emph{rational function} $g(x)/h(x)$ on the roots of $f(x)$ in \cref{lem:esym-of-g/h-on-roots-of-f}. 

\subsection{Computing the GCD of two polynomials}

Unlike operations such as computing the resultant of two polynomials, the GCD of two polynomials $f(y), g(y)$ is not a continuous function of the coefficients and therefore cannot be expressed as a polynomial (or even a rational) function of the coefficients of $f$ and $g$. Andrews and Wigderson~\cite{AW24} show that it can be expressed as a \emph{piece-wise rational function} of the coefficients of $f$, $g$ where we have a sequence of rational functions computed by small algebraic circuits over the coefficients of $f$ and $g$ such that one of them computes $\gcd(f,g)$, and we know which of them via appropriate \emph{zero-tests}. 

\begin{definition}[Piece-wise rational circuit families]
    \label{defn:piece-wise-rational-fn} A \emph{piece-wise rational circuit} for a function $P: \F^n \rightarrow \F^m$ is a family of `computation circuits' $\set{A_1, \ldots, A_s: \F^n \rightarrow \F^m}$, $\set{B_1,\ldots, B_s:\F^n \rightarrow \F^m}$ and `test circuits' $\set{T_1,\ldots, T_s:\F^n \rightarrow \F}$ such that for every $\veca \in \F^n$ we have
    \[
    P(\veca) = \frac{A_r(\veca)}{B_r(\veca)}
    \]
    where the \emph{advice parameter} $r = \max \setdef{k \in \N}{T_k(\veca) \neq 0}$. The size of the family is said to be the sum of the sizes of $A_i$'s, $B_i$'s and $T_i$'s, and the depth of the family is maximum depth of any $A_i, B_i, T_i$ that constitute the family. 
\end{definition}

Situations where $P(\veca) \in \F[y]$ (such as computing the GCD of two polynomials given by a list of coefficients) can be equivalently viewed as a multi-output function that outputs the list of coefficients of the polynomial $P(\veca)$. For a more detailed discussion on this perspective and the definition of \emph{piece-wise rational $\mathrm{AC}^0$ circuits}, we refer the reader to \cite[Section 2 in the full version]{AW24}. 

\medskip

The following is the analogue of the main result of \cite{AW24} for computing the GCD of two polynomials over all sufficiently large fields expressed in terms of piece-wise rational circuit families.  

\begin{theorem}[Computing GCD of in terms of coefficients]
    \label{thm:gcd-computation}
    Let $\K$ be any polynomially large field (i.e., $|\K| \geq (d_1d_2)^c$ for some absolute constant $c > 0$). 
    There are explicit families of algebraic circuits $\set{P_{d_1,d_2,k}}$, $\set{Q_{d_1,d_2,k}}$, $\set{T_{d_1,d_2,k}}$ of size $\poly(d_1,d_2)$ and depth $O(1)$ such that, for every $f(y), g(y) \in \K[y]$ that are monic in $y$ with $\deg(f) = d_1, \deg(g) = d_2$, we have 
    \[
    \gcd(f(y),g(y)) = \frac{P_{d_1, d_2, r}(\overline{\mathrm{coeff}}(f), \overline{\mathrm{coeff}}(g))}{Q_{d_1, d_2, r}(\overline{\mathrm{coeff}}(f), \overline{\mathrm{coeff}}(g))}
    \]
    where $r = \max \setdef{k \leq \poly(d_1,d_2)}{T_{d_1,d_2,k}(\overline{\mathrm{coeff}}(f), \overline{\mathrm{coeff}}(g)) \neq 0}$. (Here will use $\overline{\mathrm{coeff}}(f)$ to refer to the vector of coefficients of $f$.)
\end{theorem}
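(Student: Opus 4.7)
The plan is to realize the strategy outlined in the proof overview concretely. First, set $F(y,z) := f(y) + z\,g(y) \in \K[y,z]$, and assume without loss of generality that $d_1 \geq d_2$: if $d_1 > d_2$ then $F$ is already monic in $y$ of degree $d_1$, while if $d_1 = d_2$ I would instead work with $\widetilde F(y,z) := (1-z)f(y) + z\,g(y)$, which is monic of degree $d_1$ in $y$ and has the same root-separation property below. Writing $F = \prod_{i=1}^{d_1}(y - \sigma_i(z))$ over $\overline{\K(z)}[y]$, the key observation is that $\sigma_i(z)$ lies in $\overline{\K}$ (i.e.\ is constant in $z$) iff $g(\sigma_i(z)) \equiv 0$ iff $\sigma_i$ is a common root of $f$ and $g$; a direct multiplicity calculation shows these multiplicities coincide with those in $\gcd(f,g)$. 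Hence
\[F(y,z) \;=\; \gcd(f,g)(y)\cdot \Filter(y,z),\quad \text{where}\quad \Filter(y,z) \;:=\; \prod_{i\,:\,g(\sigma_i(z))\not\equiv 0}(y-\sigma_i(z)).\]

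To realize $\Filter$ via a constant-depth circuit, I will express it as a ratio of elementary symmetric functions and invoke \cref{lem:esym-of-g-on-roots-of-f}. Let $k$ be the number of indices $i$ with $g(\sigma_i(z))\not\equiv 0$, so $k = \deg_y \Filter = d_1 - \deg\gcd(f,g)$. Using the elementary fact that $\esym_k$ of a list of $d_1$ entries with exactly $d_1 - k$ zeros collapses to the product of the $k$ surviving entries, one obtains
\[\esym_k\bigl((\sigma_1-y)g(\sigma_1),\ldots,(\sigma_{d_1}-y)g(\sigma_{d_1})\bigr) \;=\; (-1)^k\cdot\Filter(y,z)\cdot D(z),\]
where $D(z) := \esym_k\bigl(g(\sigma_1(z)),\ldots,g(\sigma_{d_1}(z))\bigr)$. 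Applying \cref{lem:esym-of-g-on-roots-of-f} once to the pair $(F(w,z),\,g(w))$ and once to $(F(w,z),\,(w-y)g(w))$ (both monic in the main variable $w$) yields polynomial-size, constant-depth circuits for $D(z)$ and for $N(y,z) := (-1)^k\esym_k\bigl((\sigma_i-y)g(\sigma_i)\bigr)$, with inputs being the coefficients of $f$ and $g$ (and the parameter $z$). The correct advice value, namely $r = k$, is pinned down by the test circuits $T_{d_1,d_2,k} := D(z)\big|_{z = z_0}$ evaluated at a carefully chosen $z_0 \in \K$.

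To finish, I clear denominators in $\gcd\cdot\Filter = F$ to obtain the polynomial identity $\gcd(f,g)(y)\cdot N(y,z) = F(y,z)\cdot D(z)$, and then specialize to $z = z_0 \in \K$ with $D(z_0) \neq 0$; since $\deg_z D(z) \leq \poly(d_1,d_2)$ and $\K$ is sufficiently large, such a $z_0$ exists among any fixed set of $\poly(d_1,d_2)$ field elements, and its index can be bundled into a single-integer advice in the standard way. Setting
\[P_{d_1,d_2,r}\bigl(\overline{\mathrm{coeff}}(f),\overline{\mathrm{coeff}}(g)\bigr) := F(y,z_0)\cdot D(z_0),\quad Q_{d_1,d_2,r}\bigl(\overline{\mathrm{coeff}}(f),\overline{\mathrm{coeff}}(g)\bigr) := N(y,z_0),\]
each represented as a vector of $y$-coefficients computed by a poly-size constant-depth circuit, and interpreting the ratio as polynomial division in $y$ (which is exact by the identity above), yields the required piece-wise rational circuit family. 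I expect the main subtleties to be (i) keeping $F$ (or its variant) monic in $y$ in the corner case $d_1 = d_2$, and (ii) the multiplicity accounting when $f$ and $g$ share repeated common roots, to ensure that each common root appears with exactly its $\gcd$-multiplicity in the $y$-factorisation of $F$ over $\overline{\K(z)}$.
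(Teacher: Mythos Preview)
Your proposal is correct and follows essentially the same route as the paper: form $F(y,z)=f(y)+z\,g(y)$, observe that $\gcd(f,g)=\Filter(F\mid g=0)$ with the correct multiplicities, and realize $\Filter(F\mid g\neq 0)$ as a ratio of $\esym_k$'s of $g(\sigma_i)$ and of $(y-\sigma_i)g(\sigma_i)$ via \cref{lem:esym-of-g-on-roots-of-f}, exactly as in \cref{lem:computing-filter}. The only notable implementation differences are (i) the paper handles $d_1=d_2$ by replacing $g$ with $g-f$ (which is equivalent to your $\widetilde F=f+z(g-f)$), and (ii) the paper eliminates $z$ by extracting the coefficient of the least power of $z$ at which the denominator is nonzero, folding this into the test circuits, whereas you substitute a good $z_0\in\K$ and carry its index as extra advice; both are valid, with the paper's choice avoiding the second advice parameter.
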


In fact, the `advice' $r$ above is essentially just $\deg(\gcd(f,g))$. The key idea is to consider the polynomial $F(y,z) = f(y) + z \cdot g(y) = \gcd(f,g) \cdot F'(y,z)$ and somehow `filter out' the $F'(y,z)$. We now describe how to accomplish this. 

\medskip

Let $f(y) \in \K[y]$ be a monic polynomial that factorizes over $\overline{\K}$ as $(y - \sigma_1)^{e_1} \cdots (y - \sigma_m)^{e_m}$, with $e_i \geq 1$. For any polynomial $g(y) \in \K[y]$, we define $\Filter(f \mid g=0)$ and $\Filter(f \mid g \neq 0)$ as
\begin{align*}
    \Filter(f \mid g \neq 0) &= \prod_{\substack{i\in [m]\\g(\sigma_i) \neq 0}} (y - \sigma_i)^{e_i},\\
    \Filter(f \mid g=0) &= \prod_{\substack{i\in [m]\\g(\sigma_i) = 0}} (y - \sigma_i)^{e_i} =  \frac{f}{\Filter(f \mid g \neq 0)}.
\end{align*}

\begin{lemma}[Computing $\Filter$ from coefficients]
    \label{lem:computing-filter}
    Let $\K$ be any large enough field (i.e., $|\K| \geq (d_1d_2)^c$ for some absolute constant $c > 0$). 
    There are explicit families of circuits $\set{A_{d_1,d_2,k}}$, $\set{B_{d_1,d_2,k}}$, $\set{T_{d_1,d_2,k}}$ of size $\poly(d_1,d_2)$ and depth $O(1)$ such that, for every pair of monic polynomial $f,g \in \K[y]$ with $\deg f = d_1$ and $\deg g = d_2$, we have 
    \[
        \Filter(f \mid g \neq 0) = \frac{A_{d_1, d_2, r}(\overline{\mathrm{coeff}}(f), \overline{\mathrm{coeff}}(g))}{B_{d_1, d_2, r}(\overline{\mathrm{coeff}}(f), \overline{\mathrm{coeff}}(g))}
    \]
    where the advice parameter $r = \max \setdef{k \leq \poly(d_1,d_2)}{T_{d_1,d_2,k}(\overline{\mathrm{coeff}}(f), \overline{\mathrm{coeff}}(g))\neq 0}$. 
\end{lemma}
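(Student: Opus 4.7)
The plan is to express $\Filter(f \mid g \neq 0)$ as a ratio $E_r/W_r$ of two symmetric functions of the roots of $f$, with the advice parameter $r$ chosen to equal the degree of the filter, and to then compute both numerator and denominator from the coefficients of $f$ and $g$ via \cref{lem:esym-of-g-on-roots-of-f}.

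Let $\tau_1, \ldots, \tau_{d_1}$ be the multi-set of roots of $f$ over $\overline{\K}$, set $w_j := g(\tau_j)$, $S := \{j \in [d_1] : w_j \neq 0\}$, and $r := |S|$, so that $\Filter(f \mid g \neq 0)(y) = \prod_{j \in S}(y - \tau_j)$. Define
\[
W_k := \esym_k(w_1, \ldots, w_{d_1}) \quad \text{and} \quad E_k(y) := \esym_k\bigl((y-\tau_1)w_1, \ldots, (y-\tau_{d_1})w_{d_1}\bigr).
\]
The key combinatorial observation is that any subset $T \subseteq [d_1]$ containing some $j$ with $w_j = 0$ kills the corresponding product, so in both sums only subsets $T \subseteq S$ contribute. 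Consequently $W_k = 0$ and $E_k = 0$ whenever $k > r$, while for $k = r$ only $T = S$ survives, giving
\[
W_r = \prod_{j \in S} w_j \neq 0 \quad \text{and} \quad E_r(y) = W_r \cdot \prod_{j \in S}(y - \tau_j) = W_r \cdot \Filter(f \mid g \neq 0)(y).
\]
Hence $\Filter(f \mid g \neq 0) = E_r/W_r$, and $r$ is intrinsically characterized as $\max\{k : W_k \neq 0\}$.

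To compute $W_k$ and $E_k$ from the coefficients of $f$ and $g$, I would invoke \cref{lem:esym-of-g-on-roots-of-f}. For $W_k = \esym_k(g(\tau_j) : j)$, this is a direct application with the original univariate $g$. For $E_k(y)$, I would view $y$ as the external variable $\vec x := (y)$ of the lemma and apply it to the bivariate polynomial $\tilde g(y, t) := (t - y) g(t)$, which is monic in $t$ of degree $d_2 + 1$; the lemma yields $\esym_k(\tilde g(y, \tau_j) : j) = (-1)^k E_k(y)$, so multiplying by $(-1)^k$ recovers $E_k(y)$. Both computations produce circuits of size $\poly(d_1, d_2)$ and depth $O(1)$ over $\K$.

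Finally I would assemble the piece-wise rational family by setting, for each $k \in \{0, 1, \ldots, d_1\}$, the test circuit $T_{d_1,d_2,k} := W_k$, the numerator $A_{d_1,d_2,k} := E_k$, and the denominator $B_{d_1,d_2,k} := W_k$. Then the advice parameter $r = \max\{k : T_{d_1,d_2,k} \neq 0\}$ equals $|S|$, and $\Filter(f \mid g \neq 0) = A_r/B_r$ as required. I expect the only non-routine step to be spotting the identity $E_r = W_r \cdot \Filter$, which collapses what is \emph{a priori} a diagonal symmetry in the pairs $(\tau_j, w_j)$ to ordinary elementary symmetric polynomials on either the $w_j$'s or the one-parameter family $(y-\tau_j)w_j$, both of which are handled directly by \cref{lem:esym-of-g-on-roots-of-f}; multiplicities among the roots of $f$ cause no additional difficulty because the identity holds term-by-term over the multi-set of roots.
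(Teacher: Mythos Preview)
Your proposal is correct and follows essentially the same approach as the paper: both define the test circuits as $\esym_k$ of $\{g(\tau_j)\}_j$, the numerator as $\esym_r$ of $\{(y-\tau_j)g(\tau_j)\}_j$, and the denominator as $\esym_r$ of $\{g(\tau_j)\}_j$, then invoke \cref{lem:esym-of-g-on-roots-of-f} for each. The paper phrases the auxiliary polynomial as $h(x,y) = (y-x)g(x)$ while you use $\tilde g(y,t) = (t-y)g(t)$, which differs only by the sign you already accounted for (and in fact your choice is genuinely monic in the root variable, matching the hypothesis of \cref{lem:esym-of-g-on-roots-of-f} more cleanly).
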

\begin{proof}
    Let $f(y) = \prod_{i=1}^m (y - \sigma_i)^{e_i}$ with $\sigma_i \in \overline{\K}$ and $e_i \geq 1$, and let $\Sigma$ be the multiset of roots of $f(y)$ in $\overline{\K}$. If $r = \deg(\Filter(f \mid g\neq 0))$, then
    \begin{align*}
        \esym_{r} \inparen{\setdef{g(\sigma)}{\sigma \in \Sigma}} & = \prod_{i \in [m]\;:\;g(\sigma_i) \neq 0} g(\sigma_i)^{e_i} \neq 0,\\
        \esym_{k} \inparen{\setdef{g(\sigma)}{\sigma \in \Sigma}} &  = 0 \quad \text{for all $k > r$}
    \end{align*}
    as any subset of size greater than $r$ among $\Sigma$ must necessarily contain a $\sigma$ such that $g(\sigma) = 0$. 
    By \cref{lem:esym-of-g-on-roots-of-f}, the above is expressible as circuit of size $\poly(d_1, d_2)$ and depth $O(1)$ over the coefficients of $f$ and $g$ and that gives us the claimed family $\set{T_{d_1,d_2,k}}$.

    \medskip
    To obtain the filter as a ratio of algebraic circuits, define the polynomial $h(x,y) = (y - x) \cdot g(x)$ for a fresh variable $x$. Note that, for any $\sigma \in \overline{\K}$, we have $h(\sigma, y) = 0$ if and only if $g(\sigma) = 0$. Therefore, 
    \begin{align*}
    \esym_{r} \inparen{\setdef{h(\sigma, y)}{\sigma \in \Sigma}} & = \prod_{i \in [m]\;:\;g(\sigma_i) \neq 0} \inparen{(y - \sigma_i) \cdot g(\sigma_i)}^{e_i}\\
    & = \inparen{\prod_{i\in [m]\;:\;g(\sigma_i) \neq 0} (y - \sigma_i)^{e_i}} \cdot \esym_{r} \inparen{\setdef{g(\sigma)}{\sigma \in \Sigma}}\\
    \therefore \Filter(f\mid g \neq 0) & = \frac{\esym_{r} \inparen{\setdef{h(\sigma, y)}{\sigma \in \Sigma}}}{\esym_{r} \inparen{\setdef{g(\sigma)}{\sigma \in \Sigma}}}
    \end{align*}
    Once again by \cref{lem:esym-of-g-on-roots-of-f}, both the numerator and denominator on the RHS can be expressed as circuits of size $\poly(d_1, d_2)$ and depth $O(1)$ over the coefficients of $f$ and $g$ and that gives us the circuit families $\set{A_{d_1,d_2,k}}$ and $\set{B_{d_1,d_2,k}}$. 
\end{proof}

We are now ready to prove \cref{thm:gcd-computation}

\begin{proof}[Proof of \cref{thm:gcd-computation}]
    Without loss of generality, let us assume that $d_1 := \deg f > \deg g =: d_2$ (by working with $f, g-f$ if their degrees were the same). Let $\set{\sigma_1,\ldots, \sigma_m} \in \overline{\K}$ be the union of roots of $f$ and $g$, and let 
    \begin{align*}
    f(y) & = (y - \sigma_1)^{e_1} \cdots (y - \sigma_m)^{e_m},\\
    g(y) & = (y - \sigma_1)^{e_1'} \cdots (y - \sigma_m)^{e_m'}\\
    \implies \gcd(f,g) & = (y - \sigma_1)^{e_1^*} \cdots (y - \sigma_m)^{e_m^*} \quad \text{where }e_i^* = \min(e_i, e_i').
    \end{align*}

    Consider the polynomial $F(y,z) = f(y) + z \cdot g(y)$ for a fresh variable $z$, which is monic in $y$ since $\deg f > \deg g$. The key observation is that $\gcd(f,g) = \Filter(F(y,z) \mid g = 0)$ and the theorem essentially follows from \cref{lem:computing-filter} and this observation. We formalize this below. 

    \medskip
    \noindent
    Since $F(y,z)$ is clearly divisible by $\gcd(f,g)$, we can write $F(y,z)$ as 
    \[
    F(y,z) = \gcd(f,g) \cdot F'(y,z)
    \]
    for some $F'(y,z) \in \K[y,z]$. Note that, for every $\sigma_i$, we have $F'(\sigma_i,z) \neq 0$ (for otherwise $(y - \sigma_i)^{e_i^* + 1}$ divides both $f$ and $g$). Therefore, 
    \begin{align*}
    \Filter(F(y,z) \mid g \neq 0) &= F'(y,z)\\
    \implies \Filter(F(y,z) \mid g = 0) &= \frac{F(y,z)}{\Filter(F(y,z) \mid g \neq 0)} = \gcd(f,g).
    \end{align*}
    By \cref{lem:computing-filter}, we have circuit families $\set{A_{d_1,d_2,k}}$, $\set{B_{d_1,d_2,k}}$, $\set{T_{d_1,d_2,k}}$ such that 
    \begin{align*}
    \Filter(F(y,z) \mid g \neq 0) & = \frac{A_{d_1, d_2, r}(\overline{\mathrm{coeff}_y}(F), \overline{\mathrm{coeff}}(g))}{B_{d_1, d_2, r}(\overline{\mathrm{coeff}_y}(F), \overline{\mathrm{coeff}}(g))}\\
    \text{where } r &= \max \setdef{k \leq \poly(d_1,d_2)}{T_{d_1,d_2,k}(\overline{\mathrm{coeff}_y}(F), \overline{\mathrm{coeff}}(g)) \neq 0}\\
    \implies \gcd(f,g) &= \Filter(F(y,z) \mid g = 0) \\
    &= \frac{F(y,z) \cdot B_{d_1, d_2, r}(\overline{\mathrm{coeff}_y}(F), \overline{\mathrm{coeff}}(g))}{A_{d_1, d_2, r}(\overline{\mathrm{coeff}_y}(F), \overline{\mathrm{coeff}}(g))}.
    \end{align*}
    Since each coefficient of $F$ can be expressed as a polynomial function of the coefficients of $f$ and $g$, the above circuit families $\set{A_{d_1,d_2,k}}$, $\set{B_{d_1,d_2,k}}$, $\set{T_{d_1,d_2,k}}$ can also be expressed as $\poly(d_1,d_2)$ size depth $O(1)$ circuit families over the coefficients of $f$ and $g$. It can also be checked that the advice parameter $r = d_1 - \deg(\gcd(f,g))$. 

    \medskip

    As stated, although $\gcd(f(y),g(y)) \in \K[y]$ and is thus independent of the variable $z$, the RHS of the above equations consist of numerators and denominators that depend on $z$. To eliminate $z$ from the RHS, we can remove any powers of $z$ dividing the numerator and denominator and then set $z$ to zero. This can be achieved by expanding the test circuits to also find the smallest $i$ such that $\coeff{z^i}{A_{d_1, d_2, r}(\overline{\mathrm{coeff}_y}(F), \overline{\mathrm{coeff}}(g))} \neq 0$ and set the numerator and denominator circuits as
    \begin{align*}
    P_{d_1,d_2,(r,i)}(\overline{\mathrm{coeff}}(f), \overline{\mathrm{coeff}}(g)) &:= \coeff{z^i}{F(y,z) \cdot B_{d_1, d_2, r}(\overline{\mathrm{coeff}_y}(F), \overline{\mathrm{coeff}}(g))},\\
    Q_{d_1,d_2,(r,i)}(\overline{\mathrm{coeff}}(f), \overline{\mathrm{coeff}}(g)) &:= \coeff{z^i}{A_{d_1, d_2, r}(\overline{\mathrm{coeff}_y}(F), \overline{\mathrm{coeff}}(g))}
    \end{align*}
    respectively. Thus, overall, we obtain size $\poly(d_1,d_2)$ and depth $O(1)$ circuits over the coefficients of $f$ and $g$ as claimed by the theorem. 
\end{proof}

\section{Improved closure results for fields of small characteristic}\label{sec:finite-fields-closure}
\label{sec:closure-small-char}

In this section, we use the \autoref{thm:intro-complexity-of-symmetric-polys} to prove the following closure result for constant-depth circuits over fields of characteristic $p$. This statement is similar to a result in \cite{BKRRSS25} with one key difference -- the circuit for $g(\vecx)^{p^\ell}$ in \autoref{thm:closure-factors-small-char} is over the underlying field itself and not over the algebraic closure (or a high degree extension) of the base field $\F_q$, provided $q$ is polynomially large. This extends the results of Kaltofen~\cite{K89} for fields of characteristic $p$ for natural subclasses of algebraic circuits such as constant-depth circuits, formulas, branching programs etc. The key technical ingredient in the proof again is the use of \autoref{thm:intro-complexity-of-symmetric-polys}.

\begin{theorem}[Complexity of factors]
    \label{thm:closure-factors-small-char}
    Let $\F_q$ be a field of positive characteristic $p$ that is polynomially large (i.e., there is an absolute constant $c$ such that $\abs{\F} \geq (snd)^c$). Let $P(\vecx) \in \F_q[\vecx]$ be a polynomial on $n$ variables of degree $d$ computed by a circuit $C$ of size $s$ and depth $\Delta$. Further, let $g(\vecx)$ be a factor of $P(\vecx)$ with multiplicity $p^\ell\cdot e$ where $\gcd(p,e) = 1$. Then, $g(\vecx)^{p^\ell}$ is computable by a circuit of size $\poly(s,d,n)$ and depth $\Delta + O(1)$ over $\F_q$. 
\end{theorem}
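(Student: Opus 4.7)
The strategy is to follow the BKRRSS25 proof of closure under factorization, but to replace the step that (in small characteristic) forces the circuit to be over an extension of $\F_q$ with an application of \cref{thm:complexity-of-symmetric-polys} (or its multi-symmetric version, \cref{thm:complexity-of-multi-symmetric-polys}). The point is that BKRRSS25 ultimately writes $g^{p^\ell}$ as a symmetric function of certain power series roots that live in an extension of $\F_q$, and then converts back to the coefficient basis; in large characteristic this conversion is efficient via power-sum to elementary translations, but in small characteristic it either fails outright or forces a move to a large extension. Since the target $g^{p^\ell}$ already lies in $\F_q[\vecx]$, we bypass this obstruction by invoking \cref{thm:complexity-of-symmetric-polys}, which gives a constant-depth circuit over $\F_q$ for any symmetric polynomial with a small constant-depth circuit, without needing the power-sum translation.

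Concretely, the plan is as follows. First, perform a random linear change of variables with constants drawn from $\F_q$---possible because $|\F_q| \geq (snd)^c$ is polynomially large---to reduce to the setting where $P$ is monic in a distinguished variable $y$ of degree $d$ and $P(\veczero, y) = \prod_{i=1}^d (y - \alpha_i)$ for distinct $\alpha_i \in \F_q$. Next, build an auxiliary square-free polynomial derived from $P$ to which \cref{lem:factorisation-into-power-series} applies, so that its power series roots cleanly separate the contributions of the factor $g$ from those of $h$; by \cref{thm:closure-powerseries-roots}, truncations of these roots admit constant-depth circuits over $\F_q$. Using a filter-style construction in the spirit of Section~\ref{sec:GCD}, combined with \cref{lem:esym-of-g-on-roots-of-f} (which, via \cref{thm:complexity-of-multi-symmetric-polys}, computes elementary symmetric combinations of root evaluations in constant depth over $\F_q$), we assemble $g^{p^\ell}$ as a symmetric combination of the relevant roots.

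The main obstacle will be a multiplicity bookkeeping issue: a direct filter on $P$ yields the full $g$-adic part $g^{p^\ell e}$ rather than $g^{p^\ell}$, and recovering the latter from the former would require taking a polynomial $e$-th root, which we do not know how to do in constant depth in general. The fix is to design the auxiliary polynomial so that $g$ appears in it with multiplicity exactly $p^\ell$ rather than $p^\ell e$; a natural candidate is a generic linear combination such as $P(\vecx, y) + z \cdot R(\vecx, y)$ for a fresh variable $z$, analogous to the construction in the $\gcd$ proof of Section~\ref{sec:GCD}, chosen so that the $p^\ell e$-fold root of $g$ in $P$ collapses to a $p^\ell$-fold root in the auxiliary polynomial (using $\gcd(p, e) = 1$ so that the distinguished $p^\ell$-th power component is naturally identifiable via Hasse derivatives). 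Once this auxiliary polynomial is in place, the remainder of the argument goes through as sketched, with \cref{thm:complexity-of-symmetric-polys} and \cref{thm:complexity-of-multi-symmetric-polys} finally converting the symmetric expression into an $\F_q$-circuit of size $\poly(s,n,d)$ and depth $\Delta + O(1)$ as required.
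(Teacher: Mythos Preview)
Your high-level diagnosis is right: the only reason the BKRRSS25 small-characteristic argument lands in an extension of $\F_q$ is the final step of reassembling $g^{p^\ell}$ from its power-series roots, and \cref{thm:complexity-of-multi-symmetric-polys} is indeed the tool that keeps everything over $\F_q$. But your concrete plan for handling the multiplicity has a genuine gap.

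First, the preprocessing you describe cannot be achieved: if $g$ divides $P$ with multiplicity $p^\ell e > 1$, then $P(\veczero, y)$ will never have distinct roots, regardless of the linear change of variables. The correct normalization (which the paper uses) is only that the \emph{square-free part} of $\tilde P(t,y)$ remains square-free at $t=0$.

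Second, and more seriously, your ``auxiliary polynomial'' step is where the argument actually breaks. You want a polynomial, built from $P$ by a constant-depth circuit, in which $g$ appears with multiplicity exactly $p^\ell$ (or $1$), so that \cref{thm:closure-powerseries-roots} or a filter applies. You gesture at $P + z\cdot R$ and Hasse derivatives but give no concrete $R$, and the natural candidates do not work: computing the square-free part of $P$ already requires $p$-th roots (precisely what we are trying to avoid), and a single Hasse derivative $\hasse{p^\ell}{y}P$ only drops the multiplicity of $(y-\phi)$ from $p^\ell e$ to $p^\ell(e-1)$, so iterating costs $e-1$ rounds and destroys the depth bound. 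I do not see how to construct such an auxiliary polynomial without already knowing $g$.

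The paper sidesteps this entirely. It applies the small-characteristic Furstenberg formula (\cref{thm:furstenberg-small-characteristic} and \cref{cor:charp-flajolet-soria-formula-for-roots}) \emph{directly to $\tilde P$ with the full multiplicity $p^\ell e$ in place}. That formula already outputs $\phi^{p^\ell}$ (not $\phi$) as an explicit rational function $R(z) = R_{\mathrm{num}}(z)/R_{\mathrm{denom}}(z)$ evaluated at $z=\phi(0)$, where both numerator and denominator have $\poly(s,n,d)$-size depth-$(\Delta+O(1))$ circuits over $\F_q$, and $R_{\mathrm{denom}}$ is a power of $\hasse{p^\ell e}{y}\tilde P(0,z)$, which is nonzero at every root of $\tilde P(0,y)$ by \cref{claim:small-char-Q(0,0)=hasse(P)(0,0)}. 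Then
\[
g(t,y)^{p^\ell} \;=\; \homog_{\leq d}\prod_{j\in S}\bigl(y^{p^\ell} - R(\phi_j(0))\bigr),
\]
whose coefficients are elementary symmetric polynomials of the values $R(\phi_j(0))$. Since the $\phi_j(0)$ for $j\in S$ are exactly the roots of $g(0,y)\in\F_q[y]$, the rational-function extension \cref{lem:esym-of-g/h-on-roots-of-f} of \cref{lem:esym-of-g-on-roots-of-f} (which is where \cref{thm:complexity-of-multi-symmetric-polys} enters) expresses these over $\F_q$ as ratios of constant-depth circuits, and Strassen's division elimination finishes. So the multiplicity bookkeeping is handled entirely by the small-characteristic Furstenberg identity, not by an auxiliary construction; that is the missing ingredient in your plan.
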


We start with a few preliminaries necessary for the proof of the theorem. We shall interpret $P(\vecx)$ as an element of $\F[\vecx, y]$, and apply a map $\Psi: x_i \mapsto tx_i + a_i y + b_i$ on $P(\vecx)$ for randomly chosen $a_i$s and $b_i$s, to ensure certain properties\footnote{We expand on this in the beginning of \autoref{sec:proof-of-charp-closure-general-factors}. For more details, please refer to the preliminaries of \cite{BKRRSS25}.}. Thus, without loss of generality, we can work with a bivariate polynomial $P(t,y)$ over a field that contains the other variables. Some preliminaries will also be stated for bivariate polynomials.

\subsection{Hasse derivatives}

We shall work with the notion of Hasse derivative, which is the standard alternative to partial derivatives in the small characteristic setting. We state the definition and the product rule for Hasse derivatives. For more details, we recommend the reader to refer to \cite[Appendix C]{forbes-thesis-2014}.

\begin{definition}[Hasse derivatives]
    \label{defn:hasse-derivative}
    The \emph{Hasse Derivative of order $i$} of $F(t, y) \in \F[t, y]$ with respect to $y$, denoted as $\hasse{i}{y}(F)$, is defined as the coefficient of $z^i$ in the polynomial $F(t, y+z)$.
\end{definition}

\begin{lemma}[Product rule for Hasse derivatives] \label{lem:product-rule-hasse}
    Let $G(t, y), H(t,y) \in \F[t, y]$ be bivariate polynomials and let $k \geq 0$. Then,
    \[
    \hasse{k}{y}(GH) = \sum_{i+j=k} \hasse{i}{y}(G) \cdot \hasse{j}{y}(H)
    \]
\end{lemma}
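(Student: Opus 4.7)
The plan is to derive the product rule by unpacking the definition of the Hasse derivative directly, using the fact that substitution $y \mapsto y + z$ is a ring homomorphism from $\F[t,y]$ to $\F[t,y,z]$ (and in particular respects products). All the work is in matching coefficients of $z^k$ on the two sides, so I expect this to be a short calculation rather than a structural argument.

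Concretely, I would begin by recalling from \cref{defn:hasse-derivative} that for any $F \in \F[t,y]$ the substitution $F(t, y+z)$, viewed as an element of $\F[t,y][z]$, has the expansion
\[
F(t, y+z) = \sum_{i \geq 0} \hasse{i}{y}(F) \cdot z^i ,
\]
so that the sequence of Hasse derivatives is exactly the coefficient sequence of this Taylor-like expansion in $z$. I would apply this to the three polynomials $G$, $H$, and $GH$. Since the map $F(t,y) \mapsto F(t, y+z)$ is a ring homomorphism (it is the evaluation homomorphism $y \mapsto y+z$), we have the identity $(GH)(t, y+z) = G(t, y+z) \cdot H(t, y+z)$ in $\F[t,y,z]$.

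Next I would substitute the two expansions for $G$ and $H$ into this identity, obtaining
\[
\sum_{k \geq 0} \hasse{k}{y}(GH) \cdot z^k
\;=\; \left( \sum_{i \geq 0} \hasse{i}{y}(G) \cdot z^i \right) \cdot \left( \sum_{j \geq 0} \hasse{j}{y}(H) \cdot z^j \right)
\;=\; \sum_{k \geq 0} \left( \sum_{i + j = k} \hasse{i}{y}(G) \cdot \hasse{j}{y}(H) \right) z^k .
\]
Comparing coefficients of $z^k$ on both sides as elements of $\F[t,y]$ (a valid operation since $\F[t,y,z]$ is a polynomial ring over $\F[t,y]$ in the indeterminate $z$) yields exactly the desired identity $\hasse{k}{y}(GH) = \sum_{i+j=k} \hasse{i}{y}(G) \cdot \hasse{j}{y}(H)$.

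There is no real obstacle here: the only point to be slightly careful about is that this argument is purely formal and therefore characteristic-free, which is the whole reason Hasse derivatives are useful in the small-characteristic setting. In particular, unlike the usual Leibniz rule for ordinary partial derivatives, no factorials appear, so no division by integers is required and the identity holds verbatim over any field $\F$ (or indeed any commutative ring).
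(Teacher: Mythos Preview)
Your argument is correct and is the standard proof of this fact. The paper does not actually prove this lemma at all: it is stated without proof, with a pointer to \cite[Appendix C]{forbes-thesis-2014} for details, so there is no paper-proof to compare against. Your derivation via the ring homomorphism $y \mapsto y+z$ and coefficient comparison in $z$ is exactly the expected one.
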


\subsection{Furstenberg's theorem over small characteristic fields}

The following version of Furstenberg's theorem over small characteristic is very similar to \cref{thm:furstenberg}, with some key differences. The theorem expresses an appropriate power of a power series root of a polynomial as a diagonal of a rational expression involving the polynomial and its derivatives. 

\begin{theorem}[Furstenberg's theorem over small characteristic fields (Theorem A.3 in \cite{BKRRSS25})] \label{thm:furstenberg-small-characteristic}
    Let $\F$ be a field of characteristic $p$. Let $P(t,y) \in \F \indsquare{t,y}$ be a power series and $\varphi(t){\in \F\indsquare{t}}$ be a power series satisfying 
    \[
    P(t, y) = (y - \phi(t))^{p^\ell e} \cdot Q(t,y)
    \]
    for some $\ell \geq 0, e \geq 1$ such that $\gcd(p, e) = 1$. If $\phi(0) = 0$ and $Q(0,0) \neq 0$, then
    \begin{equation}
        \label{eqn: furstenberg-expression-small-char}
        \varphi^{p^\ell} = \diag\inparen{\frac{y^{2p^\ell} \cdot \hasse{p^\ell}{y}(P)(ty,y)}{e \cdot P(ty,y)}}
    \end{equation}
\end{theorem}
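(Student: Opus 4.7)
The plan is to mirror the classical Furstenberg argument of \cref{thm:furstenberg}, making two adaptations for positive characteristic: partial derivatives are replaced by the Hasse derivative $\hasse{p^\ell}{y}$, and one works with $\phi^{p^\ell}$ in place of $\phi$, using the Frobenius morphism to move between them.

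The first step is to expand $\hasse{p^\ell}{y}(P)$ using the factorization. Applying the Hasse product rule (\cref{lem:product-rule-hasse}) to $P = (y-\phi)^{p^\ell e}\cdot Q$ together with the identity $\hasse{i}{y}((y-\phi)^n) = \binom{n}{i}(y-\phi)^{n-i}$ yields
\[
\hasse{p^\ell}{y}(P) \;=\; \sum_{i + j = p^\ell} \binom{p^\ell e}{i}\,(y-\phi)^{p^\ell e - i}\, \hasse{j}{y}(Q).
\]
The key combinatorial claim is that, for $0 \le i \le p^\ell$, the integer $\binom{p^\ell e}{i}$ is nonzero modulo $p$ only when $i=0$ or $i=p^\ell$, and in the latter case it equals $e$ modulo $p$. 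This is where Lucas' theorem enters: since $\gcd(p,e)=1$, the base-$p$ expansion of $p^\ell e$ has zeros in positions $0,1,\ldots,\ell-1$ and digit $e_0\equiv e\not\equiv 0\pmod p$ at position $\ell$, so the Lucas product forces $p^\ell \mid i$. Substituting and dividing by $e\cdot P$ then gives the Hasse analogue of the logarithmic-derivative identity,
\[
\frac{\hasse{p^\ell}{y}(P)}{e\cdot P} \;=\; \frac{1}{(y-\phi)^{p^\ell}} \;+\; \frac{\hasse{p^\ell}{y}(Q)}{e\cdot Q}.
\]

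Next, I substitute $t\mapsto ty$, multiply by $y^{2p^\ell}$, and compute the diagonal of each summand. For the second term, $\hasse{p^\ell}{y}(Q)/Q$ is a bivariate power series in $(u,v)$ because $Q(0,0)\neq 0$; evaluating it at $(u,v)=(ty,y)$ produces a power series whose every monomial $t^i y^j$ satisfies $j\ge i$. Multiplying by $y^{2p^\ell}$ shifts the $y$-degree of every term up by $2p^\ell$, so the coefficient of $t^i y^i$ is identically zero and the diagonal vanishes. For the first term, the Frobenius identity in characteristic $p$ gives $(y-\phi(ty))^{p^\ell} = y^{p^\ell} - \phi(ty)^{p^\ell}$, so
\[
\frac{y^{2p^\ell}}{(y-\phi(ty))^{p^\ell}} \;=\; \inparen{\frac{y^2}{y-\phi(ty)}}^{p^\ell}.
\]
The classical computation (the multiplicity-one Furstenberg identity) expands $y^2/(y-\phi(ty)) = y\sum_{k\ge 0}(\phi(ty)/y)^k$ and matches $t$- and $y$-exponents to show that only the $k=1$ summand contributes to the diagonal, yielding $\diag(y^2/(y-\phi(ty)))=\phi(t)$. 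Combining this with the identity $\diag(F^{p^\ell}) = \diag(F)^{p^\ell}$ (immediate from the Frobenius action on coefficients, since both sides equal $\sum_i f_{ii}^{p^\ell} t^{ip^\ell}$), the diagonal of the first term is $\phi(t)^{p^\ell}$, and summing the two contributions recovers the claimed formula.

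The main obstacle is the combinatorial step computing $\hasse{p^\ell}{y}(P)$: one must correctly identify which terms in the product-rule expansion survive reduction modulo $p$ and verify that the surviving coefficient at $i=p^\ell$ is exactly the $e$ appearing in the denominator of the statement. Once this Lucas-theoretic check is carried out, the remaining steps are a mechanical transcription of the characteristic-zero proof of \cref{thm:furstenberg}, with the Frobenius providing the clean bridge from $\phi(t)$ to $\phi(t)^{p^\ell}$.
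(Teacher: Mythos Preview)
The paper does not actually prove this theorem; it is quoted as Theorem~A.3 from \cite{BKRRSS25} and used as a black box. So there is no in-paper proof to compare against.

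That said, your proposal is correct and is the natural adaptation of the classical Furstenberg argument to positive characteristic. The Lucas-theoretic computation is right: for $0<i<p^\ell$ the base-$p$ expansion of $i$ has a nonzero digit in some position $<\ell$, where $p^\ell e$ has digit $0$, forcing $\binom{p^\ell e}{i}\equiv 0\pmod p$; and $\binom{p^\ell e}{p^\ell}\equiv \binom{e_0}{1}\equiv e\pmod p$. The vanishing of the diagonal of the $Q$-term is also correct, since substituting $t\mapsto ty$ into any bivariate power series yields only monomials with $y$-degree at least the $t$-degree, and the extra $y^{2p^\ell}$ pushes everything strictly above the diagonal. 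For the main term, your observation that each monomial of $\phi(ty)/y$ has $t$-degree exactly one more than its $y$-degree (hence $(\phi(ty)/y)^k$ has $t$-degree exactly $k$ more), so only $k=1$ hits the diagonal, is clean. Finally, the Frobenius identity $\diag(F^{p^\ell})=\diag(F)^{p^\ell}$ is immediate from $F^{p^\ell}=\sum f_{ij}^{p^\ell}t^{ip^\ell}y^{jp^\ell}$ in characteristic $p$. This is almost certainly the same argument as in \cite{BKRRSS25}.
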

We can further simplify the expression in \autoref{thm:furstenberg-small-characteristic} to get a version of \autoref{cor:flajolet-soria-formula-for-roots} over small characteristic fields. 

\begin{corollary}[Corollary A.5 in \cite{BKRRSS25}]
    \label{cor:charp-flajolet-soria-formula-for-roots}
    Let $P(t,y), Q(t,y) \in \F\indsquare{t,y}$ and $\phi(t) \in \F\indsquare{t}$ satisfy
    \[
    P(t, y) = (y - \phi(t))^{p^\ell e} \cdot Q(t,y)
    \]
    with $\gcd(p,e) = 1$, $\phi(0) = 0$ and $Q(0,0) = \alpha \neq 0$. Then, 
    \[
    \phi(t)^{p^\ell} = \sum_{m\geq 0} \coeff{y^{p^{\ell}(e(m+1)-2)}}{\frac{\hasse{p^\ell}{y}(P)(t,y)}{e \cdot \alpha^{m+1}}{\inparen{\alpha y^{p^\ell\cdot e}-P(t,y)}^m}}.
    \]
    Moreover,
    \[
    \homog_{\leq d}[\phi(t)^{p^\ell}] = \homog_{\leq d}\insquare{\sum_{m\geq 0}^{2e(d+p^\ell) } \coeff{y^{p^{\ell}(e(m+1)-2)}}{\frac{\hasse{p^\ell}{y}(P)(t,y)}{e \cdot \alpha^{m+1}}{\inparen{\alpha y^{p^\ell\cdot e}-P(t,y)}^m}}}.
    \]
\end{corollary}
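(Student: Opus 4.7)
The plan is to derive the formula directly from Theorem A.3 by expanding $1/P(ty,y)$ as a formal geometric series in the ``error term'' $R(ty,y) := \alpha y^{p^\ell e} - P(ty,y)$ relative to the leading piece $\alpha y^{p^\ell e}$. First I would verify that this expansion is formally valid: writing $P(ty,y) = y^{p^\ell e} \cdot U(t,y)$ where $U(t,y) := (1 - \phi(ty)/y)^{p^\ell e} Q(ty,y)$ is a genuine power series with constant term $\alpha$ (using $\phi(0) = 0$ to ensure $\phi(ty)/y \in \F\indsquare{t,y}$), the ratio $R(ty,y)/(\alpha y^{p^\ell e}) = 1 - U(t,y)/\alpha$ has zero constant term, so
\[
\frac{1}{P(ty,y)} = \frac{1}{\alpha y^{p^\ell e}} \sum_{m \geq 0} \inparen{\frac{R(ty,y)}{\alpha y^{p^\ell e}}}^m
\]
is a valid formal manipulation. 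Plugging this into the diagonal expression from Theorem A.3, the $m$-th summand inside $\diag$ becomes $y^{-p^\ell(e(m+1) - 2)} \cdot \hasse{p^\ell}{y}(P)(ty,y) \cdot R(ty,y)^m / (e\alpha^{m+1})$.

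Next I would evaluate the diagonal term by term using the elementary identity $\diag(y^{-k} B(ty,y)) = \coeff{y^k}{B(t,y)}$ for $k \geq 0$ (and $0$ otherwise when $B$ is a power series), which follows by matching the $t^i y^i$ coefficient in the expansion $y^{-k} B(ty,y) = \sum_{a,b} B_{a,b}\, t^a y^{a+b-k}$. This immediately yields the first claim of the corollary, modulo interchanging $\diag$ with the infinite sum, which is justified once the truncation argument below is in place.

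For the truncation, I would assign the weight $w(a,b) := 2a + b$ to monomials $t^a y^b$. From the factorization $P = (y - \phi(t))^{p^\ell e} Q(t,y)$ with $\phi(0) = 0$ and $Q(0,0) = \alpha$, a short case analysis on the binomial expansion shows that every monomial of $P$ has weight $\geq p^\ell e$, and the unique monomial achieving equality is $\alpha y^{p^\ell e}$. Thus $R = \alpha y^{p^\ell e} - P$ cancels exactly this monomial, and all monomials in $R$ satisfy $w \geq p^\ell e + 1$. The same analysis gives $w \geq p^\ell(e-1)$ for monomials of $\hasse{p^\ell}{y}(P)$. Hence $\hasse{p^\ell}{y}(P) \cdot R^m$ has monomials of weight at least $p^\ell(e(m+1) - 1) + m$, and extracting the coefficient of $y^{p^\ell(e(m+1) - 2)}$ forces the resulting polynomial in $t$ to have valuation at least $(p^\ell + m)/2$. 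In particular, for $m > 2e(d + p^\ell) \geq 2d - p^\ell$, the $m$-th summand has $t$-valuation exceeding $d$ and cannot affect $\homog_{\leq d}[\phi^{p^\ell}]$.

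The main obstacle is establishing the \emph{uniqueness} in the weight analysis --- that $\alpha y^{p^\ell e}$ is the only monomial of $P$ achieving weight $p^\ell e$ --- since this is what upgrades the weight bound for $R$ from $\geq p^\ell e$ to $\geq p^\ell e + 1$, and hence what makes the $t$-valuation of the $m$-th summand grow with $m$. The uniqueness boils down to the observation that any monomial of $P$ using a factor of $\phi(t)^k$ with $k \geq 1$ contributes $t$-degree $\geq k$ while decreasing the $y$-exponent by $k$, yielding a net weight increase of $k$, and that any contribution from $Q(t,y) - \alpha$ strictly increases the weight as well. Once this book-keeping is settled, the rest is routine algebraic manipulation, and in fact Lucas' theorem combined with $\gcd(p,e) = 1$ also ensures that the corresponding minimum-weight monomial of $\hasse{p^\ell}{y}(P)$ has a nonzero coefficient of $e\alpha$, matching the factor of $e$ in the denominator of the formula.
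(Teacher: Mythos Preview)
The paper does not include its own proof of this corollary --- it is quoted verbatim as ``Corollary A.5 in \cite{BKRRSS25}'' --- so there is no in-paper argument to compare against. Your derivation is correct and is the natural one: expand $1/P(ty,y)$ as a geometric series around the lowest-weight monomial $\alpha y^{p^\ell e}$, apply the elementary identity $\diag(y^{-k}B(ty,y)) = \coeff{y^k}{B(t,y)}$ term by term, and then use the grading $w(t^ay^b)=2a+b$ to show the $m$-th summand has $t$-valuation at least $(p^\ell+m)/2$, which justifies both the interchange of $\diag$ with the infinite sum and the truncation at $m \leq 2e(d+p^\ell)$. The uniqueness step you flag (that $\alpha y^{p^\ell e}$ is the only monomial of $P$ of weight $p^\ell e$) is indeed the crux, and your argument via $\phi(0)=0$ forcing every factor of $\phi(t)$ to raise the weight by at least one is exactly right.

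Two minor remarks. First, your notation wobbles slightly between $R(t,y)=\alpha y^{p^\ell e}-P(t,y)$ and $R(ty,y)$; it would be cleaner to fix $R(t,y)$ and then observe that the substitution $t\mapsto ty$ preserves the weight $w$, so weight bounds transfer. Second, your closing remark on Lucas' theorem is correct but tangential to the corollary as stated: you only need the weight lower bound on $\hasse{p^\ell}{y}(P)$, not the exact leading coefficient, since the factor of $e$ in the denominator is already supplied by \cref{thm:furstenberg-small-characteristic}.
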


The following claim would be helpful to express $\alpha = Q(0,0)$ as an appropriate Hasse derivative of $P$. 

\begin{claim} \label{claim:small-char-Q(0,0)=hasse(P)(0,0)}
    Suppose $P(t, y) = (y - \phi(t))^{p^\ell e} \cdot Q(t,y) \in \F\indsquare{t,y}$ with $\gcd(p,e) = 1$, $\phi(0) = 0$ and $Q(0,0) = \alpha \neq 0$. Then, 
    \[
    \alpha = \hasse{p^\ell \cdot e}{y}P(0,0)
    \]
\end{claim}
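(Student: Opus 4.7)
The plan is to unfold the definitions: let $m := p^\ell \cdot e$ so that $P(t,y) = (y - \phi(t))^m \cdot Q(t,y)$, and then apply the product rule for Hasse derivatives (\cref{lem:product-rule-hasse}) to expand $\hasse{m}{y}(P)$ as
\[
\hasse{m}{y}(P) \;=\; \sum_{i+j=m} \hasse{i}{y}\!\inparen{(y-\phi(t))^m} \cdot \hasse{j}{y}(Q).
\]

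The first key step is to compute $\hasse{i}{y}\!\inparen{(y-\phi(t))^m}$ directly from \cref{defn:hasse-derivative}: since the coefficient of $z^i$ in $(y + z - \phi(t))^m = \sum_{k=0}^m \binom{m}{k} (y-\phi(t))^{m-k} z^k$ is $\binom{m}{i}(y-\phi(t))^{m-i}$, we get
\[
\hasse{i}{y}\!\inparen{(y-\phi(t))^m} \;=\; \binom{m}{i}(y-\phi(t))^{m-i}.
\]
This identity holds in arbitrary characteristic (with $\binom{m}{i}$ interpreted in $\F$) since it is a purely formal binomial expansion.

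Next, I would evaluate the expansion at $(t,y) = (0,0)$. Since $\phi(0) = 0$, the quantity $(y - \phi(t))$ vanishes at $(0,0)$, so $(y-\phi(t))^{m-i}$ evaluated at $(0,0)$ is zero unless $i = m$, in which case it equals $1$. Hence only the $i = m, j = 0$ term in the product rule survives, giving
\[
\hasse{m}{y}(P)(0,0) \;=\; \binom{m}{m} \cdot 1 \cdot \hasse{0}{y}(Q)(0,0) \;=\; Q(0,0) \;=\; \alpha,
\]
which is exactly the claim. There is no real obstacle here: the argument is a one-line consequence of the product rule together with the fact that $(y - \phi(t))$ vanishes at the origin. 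The only mild subtlety is to confirm that $\binom{m}{m} = 1$ in $\F$ (which is immediate) and that the combinatorial identity for $\hasse{i}{y}((y-\phi)^m)$ is valid in characteristic $p$; both follow directly from the formal definition of the Hasse derivative.
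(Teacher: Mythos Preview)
Your proof is correct and follows the same overall strategy as the paper: apply the product rule for Hasse derivatives to $P = (y-\phi)^m \cdot Q$ and argue that all but the $i=m$ term vanish at $(0,0)$. Your computation of $\hasse{i}{y}\!\inparen{(y-\phi)^m}$ via the direct binomial expansion of $((y-\phi)+z)^m$ is in fact a bit cleaner than the paper's version, which instead rewrites $(y+z-\phi)^{p^\ell e} = (y^{p^\ell} + z^{p^\ell} - \phi^{p^\ell})^e$ via Frobenius and then argues that the surviving terms are divisible by $(y-\phi)^{p^\ell}$; your route avoids this detour and works uniformly in any characteristic.
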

\begin{proof}
    Let $F(t,y) = (y - \phi(t))^{p^\ell \cdot e}$. Then, by \cref{lem:product-rule-hasse}, 
    \begin{align*}
    \hasse{p^\ell \cdot e}{y}P(t,y) &= \sum_{i + j = p^\ell \cdot e} \hasse{i}{y} F(t,y) \cdot \hasse{j}{y}Q(t,y)\\
    \implies \hasse{p^\ell \cdot e}{y}P(0,0) &= \sum_{i + j = p^\ell \cdot e} \hasse{i}{y} F(0,0) \cdot \hasse{j}{y}Q(0,0)
    \end{align*}
    Note that $\hasse{i}{y}F = \coeff{z^i}{(y^{p^\ell} + z^{p^\ell} - \phi^{p^\ell})^e}$ and is a nonzero polynomial only for $i$ of the form $p^\ell \cdot e'$ for some $0 \leq e' \leq e$, and $\hasse{p^\ell e}{y}F = 1$. Furthermore, for every $e' < e$, we have $\hasse{p^\ell e'}{y}F$ is divisible by $(y - \phi)^{p^\ell}$ and hence $\hasse{p^\ell e'}{y}F(0,0) = 0$ for all $e' < e$. Therefore,
    \begin{align*}
    \hasse{p^\ell \cdot e}{y}P(0,0) &= \sum_{i + j = p^\ell \cdot e} \inparen{\hasse{i}{y} F(0,0)} \cdot \inparen{\hasse{j}{y}Q(0,0)}\\
    & = \inparen{\hasse{p^\ell e}{y}F(0,0)} \cdot \inparen{\hasse{0}{y}Q(0,0)} = Q(0,0) = \alpha.\qedhere
    \end{align*}
\end{proof}

\subsection{Complexity of power series roots over small characteristic fields}

Using \cref{thm:furstenberg-small-characteristic} and \autoref{cor:charp-flajolet-soria-formula-for-roots}, we get the following analogue of \cref{thm:closure-powerseries-roots} over arbitrary fields of small characteristic. 

\begin{theorem}[Power series roots with multiplicity over small characteristic]
    \label{thm:closure-powerseries-roots-small-char}
    Let $\F$ be a field of positive characteristic $p$. 
    Suppose $P(\vecx,y) \in \F[\vecx,y]$ is a polynomial computed by a circuit $C$, and $\varphi(\vecx) {\in \F\indsquare{\vecx}}$ is a power series satisfying $P(\vecx,y) = (y-\varphi(\vecx))^{p^\ell e} \cdot Q(\vecx,y)$ where $\varphi(\veczero) = 0$, $\gcd(p,e) = 1$ and $Q(\veczero,0)\neq 0$. Then, for any $d \in \N$, there is a circuit $C'$ over $\F$ computing $\homog_{\leq d}\insquare{\varphi^{p^\ell}}$ such that
    \[\size(C') \leq \poly(d,\size(C))\]
    \[\depth(C') \leq \depth(C) + O(1)\]
\end{theorem}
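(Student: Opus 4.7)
The plan is to closely mirror the proof of \cref{thm:closure-powerseries-roots} from \cite{BKRRSS25}, replacing the use of Furstenberg's formula for characteristic zero with its positive-characteristic analogue \cref{cor:charp-flajolet-soria-formula-for-roots}.

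First I would reduce to the bivariate setting via the substitution $\vecx \mapsto t\vecx$. Setting $\tilde{P}(t, y) := P(t\vecx, y)$, viewed as a bivariate polynomial in $t, y$ over the ring $\F[\vecx]$, and $\tilde\varphi(t) := \varphi(t\vecx)$, the factorization of $P$ becomes
\[ \tilde P(t, y) = (y - \tilde\varphi(t))^{p^\ell e} \cdot \tilde Q(t, y), \]
with $\tilde\varphi(0) = 0$ and $\tilde Q(0, 0) = Q(\veczero, 0)$. By \cref{claim:small-char-Q(0,0)=hasse(P)(0,0)}, $\tilde Q(0, 0) = \hasse{p^\ell e}{y} P(\veczero, 0) =: \alpha$ is a nonzero \emph{scalar} in $\F$. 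This is crucial because \cref{cor:charp-flajolet-soria-formula-for-roots} involves division by $\alpha^{m+1}$, which must be a free inversion of a constant rather than a power-series inversion. Observe that $\coeff{t^k}{\tilde\varphi(t)^{p^\ell}} = \homog_k\insquare{\varphi(\vecx)^{p^\ell}}$, so reading off coefficients of the truncation $\tilde\varphi^{p^\ell} \bmod \inangle{t}^{d+1}$ recovers $\homog_{\leq d}\insquare{\varphi^{p^\ell}}$.

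Next, apply \cref{cor:charp-flajolet-soria-formula-for-roots} to $\tilde P$ (treating $\F[\vecx]$ as the base ring, or equivalently working over $\F(\vecx)$ since the denominator $\alpha$ is a scalar) to obtain
\[ \homog_{\leq d}\insquare{\tilde\varphi(t)^{p^\ell}} = \homog_{\leq d}\insquare{\sum_{m=0}^{2e(d+p^\ell)} \coeff{y^{p^\ell(e(m+1)-2)}}{\frac{\hasse{p^\ell}{y}\tilde P}{e \cdot \alpha^{m+1}} \cdot \inparen{\alpha y^{p^\ell e} - \tilde P}^m}}. \]
Since $p^\ell, e \leq d$, the sum has $O(d^2)$ terms, so it suffices to show each summand admits a $\poly(s, d, n)$-size, $\Delta + O(1)$-depth circuit over $\F$. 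The Hasse derivative $\hasse{p^\ell}{y}\tilde P$ is expressible as an $\F[y]$-linear combination of evaluations $\tilde P(t, \beta_j)$ at $O(d)$ distinct field points (a Ben-Or style variant of \cref{cor:interpolation-consequences}), which is valid since $\F$ is polynomially large, and the $y$-coefficient extraction at position $O(d^2)$ similarly costs only constant additional depth.

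The main technical obstacle is computing $\inparen{\alpha y^{p^\ell e} - \tilde P(t, y)}^m$ (or rather the targeted $y$-coefficient thereof, truncated to $t$-degree $\leq d$) in constant depth, when $m$ can be as large as $O(d^2)$. This is handled exactly as in the proof of \cref{thm:closure-powerseries-roots} in \cite{BKRRSS25}: the key structural fact is that $\alpha y^{p^\ell e} - \tilde P(0, y) = y^{p^\ell e}(\alpha - \tilde Q(0, y))$ is divisible by $y^{p^\ell e + 1}$ (since $\tilde Q(0, 0) = \alpha$ forces the $y^0$ term of $\alpha - \tilde Q(0, y)$ to vanish), so in any multinomial expansion of the $m$-th power, only contributions with bounded ``deviation'' from the $t = 0$ part survive modulo $\inangle{t}^{d+1}$ while still reaching the required $y$-coefficient. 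Each such contribution is a polynomial-size product that can be interpolated into a constant-depth circuit. Summing the $O(d^2)$ summands and reading off $t$-coefficients yields $\homog_{\leq d}\insquare{\varphi^{p^\ell}}$ as a $\poly(s,d,n)$-size, $\Delta + O(1)$-depth circuit over $\F$ itself, with no extension needed.
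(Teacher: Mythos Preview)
Your proposal follows essentially the same approach as the paper: apply \cref{cor:charp-flajolet-soria-formula-for-roots} and argue that the resulting expression is computable by a small constant-depth circuit. The paper's proof is in fact just two lines --- invoke that corollary, then cite \cref{cor:interpolation-consequences}.

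The one place you diverge is in treating $\inparen{\alpha y^{p^\ell e} - P(t,y)}^m$ as a ``main technical obstacle'' requiring a multinomial-expansion argument exploiting the $y$-divisibility at $t=0$. This is unnecessary: since $\alpha y^{p^\ell e} - P$ is already computed by a depth-$(\Delta+O(1))$ circuit, its $m$-th power is a \emph{single} product gate with $m$ identical inputs, adding only one layer of depth; the $y$-coefficient extraction and the outer $\homog_{\leq d}$ are then direct applications of \cref{cor:interpolation-consequences}. The divisibility observation you make is precisely what justifies truncating the sum at $m = 2e(d+p^\ell)$ in \cref{cor:charp-flajolet-soria-formula-for-roots}, but once that bound is in hand the circuit construction needs no further unpacking. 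Also, your assertion that $p^\ell, e \leq d$ is not justified by the hypotheses as stated --- $d$ is the truncation parameter, whereas $p^\ell e$ is bounded only by $\deg_y P$ --- though the paper is equally silent on this point and in all applications the degree is polynomial in the relevant parameters.
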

\begin{proof}
    The theorem follows from \cref{cor:charp-flajolet-soria-formula-for-roots} that asserts that 
    \[
    \homog_{\leq d}\insquare{\varphi^{p^\ell}} = \homog_{\leq d}\insquare{\sum_{m\geq 0}^{2e(d+p^\ell) } \coeff{y^{p^{\ell}(e(m+1)-2)}}{\frac{\hasse{p^\ell}{y}(P)(t,y)}{e \cdot \alpha^{m+1}}{\inparen{\alpha y^{p^\ell\cdot e}-P(t,y)}^m}}},
    \]
    and \cref{cor:interpolation-consequences} shows that the RHS is computable by a size $\poly(s,n,d)$ and depth $\Delta + O(1)$ circuit. 
\end{proof}

\subsection{Proof of \autoref{thm:closure-factors-small-char}} \label{sec:proof-of-charp-closure-general-factors}

We will need the following generalization of \cref{lem:esym-of-g-on-roots-of-f}, whose proof we will defer to the end of this section. 

\begin{lemma}
    \label{lem:esym-of-g/h-on-roots-of-f}
    Let $\F$ be any large enough field (i.e., $|\K| \geq (sdn)^c$ for some absolute constant $c > 0$). Let $f(\vecx, y), g(\vecx, y), h(\vecx, y) \in \F[\vecx, y]$ be polynomials that are monic in $y$ of degrees $d_f$, $d_g$ and $d_h$ respectively, with $\gcd(f, h) = 1$, that are given by size $s$, depth $\Delta$ circuits; let $d = \max(d_f, d_g, d_h)$. If $\set{\sigma_1,\dots, \sigma_{d_f}}$ be the multi-set of $y$-roots of $f$ over $\overline{\F(\vecx)}$, then for any $1 \leq r \leq d_f$, we can get circuits $A(\vecx), B(\vecx)$ of size $\poly(s,n,d)$ and depth $\Delta + O(1)$ such that 
    \[
    \esym_r\inparen{\setdef{\frac{g(\vecx, \sigma_i)}{h(\vecx, \sigma_i)}}{i \in [d_f]}} = \frac{A(\vecx)}{B(\vecx)}.
    \]
\end{lemma}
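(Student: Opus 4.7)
The plan is to express $\esym_r\bigl(\{g(\vecx,\sigma_i)/h(\vecx,\sigma_i)\}\bigr)$ as a ratio of two polynomial quantities in $\F[\vecx]$, each of which is essentially $\esym_{d_f}$ of a polynomial evaluated on the roots of $f$ and so handled (after minor massaging) by \cref{lem:esym-of-g-on-roots-of-f}. The starting observation is the generating-function identity $\prod_i (1 + t a_i) = \sum_r t^r \esym_r(a_1,\ldots,a_{d_f})$, which applied with $a_i = g(\vecx,\sigma_i)/h(\vecx,\sigma_i)$ and after clearing the common denominator $\prod_i h(\vecx,\sigma_i)$ rearranges to
\[
\esym_r\!\left(\!\left\{\frac{g(\vecx,\sigma_i)}{h(\vecx,\sigma_i)}\right\}_{i\in [d_f]}\right) \;=\; \frac{\coeff{t^r}{\prod_{i=1}^{d_f}\bigl(h(\vecx,\sigma_i)+t\,g(\vecx,\sigma_i)\bigr)}}{\prod_{i=1}^{d_f} h(\vecx,\sigma_i)}.
\]
Since $\gcd(f,h) = 1$, no $\sigma_i$ is a root of $h$, so the denominator is a nonzero element $B(\vecx) := \esym_{d_f}\{h(\vecx,\sigma_i)\}$ of $\F[\vecx]$; because $h$ is monic in $y$, a direct application of \cref{lem:esym-of-g-on-roots-of-f} (with $r = d_f$, and $g$ there set to $h$) yields a circuit for $B$ of size $\poly(s,n,d)$ and depth $\Delta + O(1)$.

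For the numerator, the plan is to interpolate in $t$. The quantity $\prod_i(h(\vecx,\sigma_i) + t\, g(\vecx,\sigma_i))$ has degree at most $d_f$ in $t$, so by \cref{lem:interpolation} its $t^r$-coefficient is an explicit $\F$-linear combination of the $d_f + 1$ evaluations
\[
E_k(\vecx) \;:=\; \prod_{i=1}^{d_f} \bigl(h(\vecx,\sigma_i) + \gamma_k\, g(\vecx,\sigma_i)\bigr) \;=\; \esym_{d_f}\bigl\{q_k(\vecx,\sigma_i)\bigr\},
\]
at any $d_f + 1$ distinct scalars $\gamma_0, \ldots, \gamma_{d_f} \in \F$, where $q_k(\vecx, y) := h(\vecx, y) + \gamma_k\, g(\vecx, y)$. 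Thus the task reduces to producing a $\poly(s,n,d)$-size, $(\Delta + O(1))$-depth circuit for each $E_k(\vecx)$ individually.

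The main obstacle in invoking \cref{lem:esym-of-g-on-roots-of-f} here is that $q_k(\vecx, y)$ need not be monic in $y$. However, because $g$ and $h$ are both monic in $y$, the leading coefficient of $q_k$ in $y$ is always a \emph{field scalar} $c_k \in \F$ --- one of $1$, $\gamma_k$, or $1 + \gamma_k$ depending on whether $d_h > d_g$, $d_h < d_g$, or $d_h = d_g$ --- and at most one value of $\gamma$ (namely $\gamma = 0$ if $d_h < d_g$, $\gamma = -1$ if $d_h = d_g$, or no value at all if $d_h > d_g$) makes this leading coefficient vanish. Since $|\F|$ is polynomially large, choose the $\gamma_k$'s to avoid this at-most-one bad value so that $c_k \in \F^\times$ for every $k$. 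Then $\tilde q_k := c_k^{-1}\, q_k$ is monic in $y$ of degree $d_q := \max(d_g, d_h)$ and is computed by a circuit of size $O(s)$ and depth $\Delta + O(1)$ (just $q_k$ scaled by a field constant). Applying \cref{lem:esym-of-g-on-roots-of-f} to $f$ and $\tilde q_k$ with $r = d_f$ then yields a circuit of size $\poly(s,n,d)$ and depth $\Delta + O(1)$ for $\prod_i \tilde q_k(\vecx, \sigma_i)$, and the identity $E_k(\vecx) = c_k^{d_f}\prod_i \tilde q_k(\vecx, \sigma_i)$ gives a circuit of the same complexity for each $E_k$. Combining these via the interpolation coefficients produces the numerator circuit $A(\vecx)$, and the ratio $A(\vecx)/B(\vecx)$ is the required representation.
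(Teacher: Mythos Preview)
Your proof is correct, and both you and the paper begin from the same generating-function identity and handle the denominator $B(\vecx)=\prod_i h(\vecx,\sigma_i)$ identically via \cref{lem:esym-of-g-on-roots-of-f}. The treatments of the numerator differ, however. The paper observes that
\[
\coeff{t^r}{\prod_{i}\Bigl(\textstyle\prod_j(\sigma_i-\rho_j)+t\prod_k(\sigma_i-\tau_k)\Bigr)}
\]
(where $\tau_k$, $\rho_j$ are the $y$-roots of $g$, $h$) is a tri-symmetric polynomial in ${\bm\sigma}\sqcup{\bm\tau}\sqcup{\bm\rho}$ and invokes \cref{thm:complexity-of-multi-symmetric-polys} directly to obtain a constant-depth circuit over the coefficients of $f,g,h$. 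You instead interpolate in $t$ and reduce to $d_f+1$ applications of the already-proved bi-symmetric \cref{lem:esym-of-g-on-roots-of-f}, paying the small price of normalizing $q_k=h+\gamma_k g$ to be monic (which, as you note, is possible by avoiding at most one bad $\gamma$). Your route is slightly more modular---it uses only the prior lemma rather than re-invoking the general multi-symmetric theorem---while the paper's is a one-shot application without the monicness workaround; the resulting size and depth bounds are the same.
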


\noindent
We are now ready to complete the proof of \autoref{thm:closure-factors-small-char}. 

\begin{proof}[Proof of \autoref{thm:closure-factors-small-char}]
    By interpreting $P(\vecx)$ as an element of $\F[\vecx, y]$, let $\tilde{P}(t,y) = \Psi(P(\vecx,y)) \in \F[\vecx][t,y]$ for a map $\Psi: x_i \mapsto tx_i + a_i y + b_i$ where the $a_i$s and $b_i$s are chosen at random. This will ensure that $\tilde{P}(t,y)$ is monic in $y$, the square-free part of $\tilde{P}(t,y)$ is also square-free at $t=0$, and that $\tilde{P}(t,y) \in \F[\vecx][t,y]$ is also computable by a size $\poly(s,d,n)$, depth $\Delta + O(1)$ circuit over $\F$. From Gauss' Lemma, we know that all the factors of $\tilde{P}$ can now be assumed to be monic in $y$ without loss of generality, and viewing $\tilde{P}$ as a bivariate in $t$ and $y$ with coefficients in $\F(\vecx)$ maintains the factorization pattern of the polynomial. Furthermore, the map can be inverted using the transformation $x_i \mapsto x_i - a_i y - b_i$ and $t \mapsto 1$ and this gives us a way to argue about factors of $P$ using factors of $\tilde{P}$. Thus\footnote{For more details, we encourage the reader to refer to the preliminaries in \cite{BKRRSS25}.}, it suffices to show that for an arbitrary factor $g(t,y) \in \F[\vecx][t,y]$ of $\tilde{P}(t,y)$, if the multiplicity of $g(t,y)$ is $p^\ell \cdot e$ with $\gcd(e,p) = 1$, then $g(t,y)^{p^\ell}$ is computable by a $\poly(s,d,n)$ size, depth $\Delta + O(1)$ circuit over the field $\F$.
 
    First we note that the existence of a power series factorization of $\tilde{P}(t,y)$ follows \emph{almost} immediately from \autoref{lem:factorisation-into-power-series}. We cannot apply \autoref{lem:factorisation-into-power-series} directly because the preprocessing map $\Psi$ only ensures that the square-free part of $\tilde{P}(t,y)$ remains square-free when $t$ is set to zero. Applying \autoref{lem:factorisation-into-power-series} on the square-free part of $\tilde{P}(t,y)$, denoted by $\hat{P}(t,y)$,  gives us the factorization $\hat{P}(t,y) = \prod_{i \in [m]}(y-\varphi_i(t))$, where each $\varphi_i(t)$ has a distinct constant term $\varphi_i(0) \in \overline{\F}$. 

    \medskip

    Suppose $g(t,y)$ is a factor with multiplicity $p^\ell \cdot e$. For each power series $y$-root $\phi(t)$ of $g(t,y)$, we can write $\tilde{P}(t,y) = (y-\varphi(t))^{p^\ell \cdot e}Q(t,y)$ for some $Q(t,y)$ that satisfies $Q(0,\varphi(0)) \neq 0$. Translating $y$ to $y+\varphi(0)$, we can use \autoref{cor:charp-flajolet-soria-formula-for-roots} on $\tilde{P}(t,y+\varphi(0))$ to show that $\homog_{\leq d}[\phi(t)^{p^\ell}]$ is equal to
    \[
        \homog_{\leq d}\insquare{\varphi(0)^{p^\ell} + \sum_{m\geq 0}^{2e(d+p^\ell)} \coeff{y^{p^{\ell}(e(m+1)-2)}}{\frac{\hasse{p^\ell}{y}(\tilde{P})(t,y+\varphi(0))}{e \cdot \alpha^{m+1}}{\inparen{\alpha (y+\varphi(0))^{p^\ell\cdot e}-\tilde{P}(t,y+\varphi(0))}^m}}}
    \]
    where $\alpha = Q(0,\varphi(0)) \neq 0$, which is also equal to $\hasse{p^\ell \cdot e}{y}\tilde{P}(0,\varphi(0))$ by \autoref{claim:small-char-Q(0,0)=hasse(P)(0,0)}. Thus, we define the rational function $R(z)$ as
    \[
        z^{p^\ell} + \sum_{m\geq 0}^{2e(d+p^\ell)} \coeff{y^{p^{\ell}(e(m+1)-2)}}{\frac{\hasse{p^\ell}{y}(\tilde{P})(t,y+z)}{e \cdot \hasse{p^\ell\cdot e}{y}(\tilde{P})(0,z)^{m+1}}{\inparen{\hasse{p^\ell\cdot e}{y}(\tilde{P})(0,z)\cdot (y+z)^{p^\ell\cdot e}-\tilde{P}(t,y+z)}^m}}.    
    \]
    If $\hat{P}(0,y) = \prod_{i=1}^m (y-\varphi_i(0))$, and $g(0,y)^{p^\ell} = \prod_{i \in S}(y-\varphi_i(0))^{p^\ell}$ for some subset $S \subset [m]$, then the power series roots of $g(t,y)$ are precisely $\phi_j(t)$ for $j \in S$ and hence
    \begin{align*}
    g(t, y)^{p^\ell} & = \prod_{j \in S} (y^{p^\ell} - \phi_j(t)^{p^\ell}) = \prod_{j \in S} (y^{p^\ell} - {R}(\phi_j(0))) \bmod t^{d+1} \\
    \implies g(t,y)^{p^\ell} & = \homog_{\leq d}\insquare{g'(t,y)}\\
    \text{where} \quad g'(t,y) & = \prod_{j \in S}(y^{p^\ell} - R(\phi_j(0)))\\
     & = \sum_{k=0}^{|S|} y^{p^\ell (|S| - k)} \cdot (-1)^{|S| - k} \cdot \esym_k(\setdef{R(\phi_j(0))}{j \in S})
    \end{align*}
    Thus, it suffices to show that $\esym_k(\setdef{R(\phi_j(0))}{j \in S})$ for all $1 \leq k \leq d$ can be computed by a $\poly(s,n,d)$ size depth $\Delta + O(1)$ circuit as that would immediately yield similar size and depth bounds for $g(t,y)^{p^\ell}$ by \cref{cor:interpolation-consequences}. 

    Note that the rational function $R(z)$ can be easily expressed as $\frac{R_{\text{num}}(z)}{R_{\text{denom}}(z)}$ where $R_{\text{num}}(z)$ and $R_{\text{denom}}(z)$ are both computable by a $\poly(s,d,n)$ sized depth $\Delta + O(1)$ circuits since $\tilde{P}(t,y)$ is given by a $\poly(s,d,n)$ sized depth $\Delta + O(1)$ circuit (using \cref{lem:interpolation} and \cref{cor:interpolation-consequences}). Furthermore, $R_{\text{denom}}(z) = (\hasse{p^\ell\cdot e}{y}(\tilde{P})(0,z))^{2e(d+p^\ell) + 1}$, so it remains nonzero element of $\overline{\F_q}$ when $z$ is set to any root of $\tilde{P}(0,y)$ (\autoref{claim:small-char-Q(0,0)=hasse(P)(0,0)}). 
    
    Note that $\esym_k(\setdef{\phi_j(0)}{j \in S})$ is in $\F$ since they are the (signed) coefficients of $g(0, y)$. Thus, by \cref{lem:esym-of-g/h-on-roots-of-f}, we can express $\esym_k(\setdef{R(\phi_j(0))}{j \in S})$ as a ratio of two $\poly(s,d,n)$ size depth $\Delta + O(1)$ circuits over $\F$. By eliminating the division gate using the standard technique of Strassen (which works in constant depth), we obtain a $\poly(s,d,n)$ size depth $\Delta + O(1)$ circuits over $\F$ for $\esym_k(\setdef{R(\phi_j(0))}{j \in S})$ for each $1 \leq k \leq d$, and thus a similar bound for $g(t,y)^{p^\ell}$. 
\end{proof}

\begin{proof}[Proof of \cref{lem:esym-of-g/h-on-roots-of-f}]
    Note that 
    \begin{equation}
    \esym_r\inparen{\frac{u_1}{v_1},\ldots, \frac{u_{d_f}}{v_{d_f}}} = \coeff{t^r}{(v_1 + tu_1)\cdots (v_{d_f} + tu_{d_f})} \cdot \frac{1}{v_1\dots v_{d_f}}. \label{eqn:eval-g/h-on-roots-of-f}
    \end{equation}
    Let $\set{\tau_1,\dots, \tau_{d_g}} , \set{\rho_1, \dots, \rho_{d_h}} \subset \overline{\F(\vecx)}$ be the multi-set of $y$-roots of $g$ and $h$ respectively. Thus, setting $u_i = g(\vecx, \sigma_i)$ and $v_i = h(\vecx,\sigma_i)$ in \eqref{eqn:eval-g/h-on-roots-of-f}, we see that 
    \begin{align*}
    \Gamma({\bm \sigma}, {\bm \tau}, {\bm \rho}) &= \coeff{t^r}{(h(\vecx, \sigma_1) + t g(\vecx, \sigma_1))\cdots (h(\vecx, \sigma_{d_f}) + t g(\vecx, \sigma_{d_f}))}\\
    & = \coeff{t^r}{\inparen{\prod_{j=1}^{d_h}(\sigma_1 - \rho_j) + t \prod_{k=1}^{d_g}(\sigma_1 - \tau_k)}\cdots \inparen{\prod_{j=1}^{d_h}(\sigma_{d_f} - \rho_j) + t \prod_{k=1}^{d_g}(\sigma_{d_f} - \tau_k)}}
    \end{align*}
    is multi-symmetric with respect to $\set{\sigma_1,\dots, \sigma_{d_f}} \sqcup \set{\tau_1,\dots, \tau_{d_g}} \sqcup \set{\rho_1, \dots, \rho_{d_h}}$ that is computable by a $\poly(s,d)$ size depth $\Delta + O(1)$ circuit (over ${\bm \sigma}, {\bm \tau}, {\bm \rho}$). Thus, by \cref{thm:complexity-of-multi-symmetric-polys}, we have a circuit of size $\poly(s,d)$ and depth $\Delta + O(1)$ over the elementary symmetric polynomials of ${\bm \sigma}, {\bm \tau}, {\bm \rho}$, which are the coefficients of $f$, $g$ and $h$, and they are also computable by size $\poly(s,d)$ depth $\Delta + O(1)$ circuits (by \cref{cor:interpolation-consequences}). That gives us the circuit $A(\vecx)$. 

    The circuit $B(\vecx)$ is to compute $\prod_{i=1}^{d_f} h(\vecx, \sigma_i) = \esym_{d_f}(h(\vecx, \sigma_1), \ldots, h(\vecx, \sigma_{d_f}))$ and \cref{lem:esym-of-g-on-roots-of-f} shows that $B(\vecx)$ is also computable by a size $\poly(s,d)$ depth $\Delta + O(1)$ circuit. Since $\gcd(f,h) = 1$, they do not share any $y$-roots and hence the division $A(\vecx)/B(\vecx)$ is well-defined.
\end{proof}

\ifblind
\else
\paragraph*{Acknowledgements:} The discussions leading to this work started when a subset of the authors were at the workshop on Algebraic and Analytic Methods in Computational Complexity (Dagstuhl Seminar 24381) at Schloss Dagstuhl, and continued when they met again during the HDX \& Codes workshop at ICTS-TIFR in Bengaluru.  We are thankful to the organizers of these workshops and to the staff at these centers for the wonderful collaborative atmosphere that facilitated these discussions. 

Varun Ramanathan is grateful to Srikanth Srinivasan and Amik Raj Behera at the University of Copenhagen for helpful discussions on the complexity of symmetric polynomials.
\fi

{\let\thefootnote\relax
\footnotetext{\textcolor{\gitinfonotecolour}{\gitinfonote \easteregg}
}}
\bibliographystyle{customurlbst/alphaurlpp}
\bibliography{crossref,references}

\end{document}